\documentclass[aps,pre,reprint,superscriptaddress,longbibliography]{revtex4-2}

\usepackage{amsmath}
\usepackage{amssymb}
\usepackage{amsthm}
\usepackage{graphicx}
\usepackage{booktabs}
\usepackage{tikz}

\makeatletter
\g@addto@macro\@arrayparboxrestore{\raggedright\let\\\tabularnewline}
\makeatother

\theoremstyle{plain}
\newtheorem{maintheorem}{Main Theorem}

\newtheorem{theorem}{Theorem}
\newtheorem{lemma}{Lemma}
\newtheorem{proposition}{Proposition}
\newtheorem{corollary}{Corollary}
\newtheorem{conjecture}{Conjecture}
\theoremstyle{definition}
\newtheorem{definition}{Definition}
\theoremstyle{remark}
\newtheorem{remark}{Remark}

\newcommand{\Phifit}{\Phi_{\mathrm{fit}}}
\newcommand{\JD}{J_{D}}
\newcommand{\dpJD}{\Delta^{+}\!J_{D}}
\newcommand{\dJD}{\Delta J_{D}}
\newcommand{\V}{V}
\newcommand{\Winformed}{W_{\mathrm{informed}}}
\newcommand{\Wblind}{W_{\mathrm{blind}}}
\newcommand{\sigmaM}{\sigma_{M}}
\newcommand{\Sigmatot}{\Sigma_{\mathrm{total}}}
\newcommand{\etacap}{\eta_{\mathrm{cap}}}
\newcommand{\Lgen}{L_{\mathrm{gen}}}
\newcommand{\rhogen}{\rho_{\mathrm{gen}}}
\newcommand{\flatstar}{\mathrm{flat}^{*}}
\newcommand{\kT}{k T}
\newcommand{\Gap}{\mathrm{Gap}}

\begin{document}

\title{Thermodynamics of Learning:\\ A Typed Four-Component Accounting of Memory, Fit, and Value}

\author{Akihito Sudo}
\affiliation{%
    ZeroStruct Inc.\\
    sudo.akihito@shizuoka.ac.jp
}

\date{\today}

\begin{abstract}
What a finite learning device has recorded and what will hold value for it
on future tasks are not the same quantity.
We develop a typed accounting for finite-state learning devices that
separates four components: a training-side fit functional $\Phifit$, the
record-correlation stock $\JD=I(M;D)$, an update-side search ledger
$\sigmaM$, and an operational capital value $\V(M;T,b)$.
This value is the work gap between an informed protocol class and a blind
class obtained by deleting the memory-read port and re-optimizing from
scratch. Its explicit arguments are the future task distribution $T$, the
access structure, and a per-run work budget $b$.
Three theorem groups organize the accounting.
(I)~\emph{Separation}: $\V$ is well posed, invariant under bijective
re-labeling of the memory, and reduces to the single-task value of a
companion framework. For every $n$, there is a device family on which
record correlation and world correlation grow by $n\ln 2$ while the
capital gain is exactly zero. In the $\flatstar$ regime, data-free
updates never increase $\V$.
(II)~\emph{Capitalization ledger}: an exact $\flatstar$ extraction
identity and a universal ledger identity give, for (F5$'$)-stable
$M$-local updates under a no-discarded-record-correlation
condition~(f), the bound $\etacap\le 1$
for the capitalization efficiency $\etacap=\Delta \V/(\kT\,\sigmaM)$,
together with necessary and sufficient conditions for equality. Outside
this regime, the bound fails in identified ways: pure entropy-production
denominators, finite-budget gate enablement, and recycling credit. The
regime map organizes these failures. A blank-start cumulative bound
survives without~(f).
(III)~\emph{Value retention}: for the retention gap $\Lgen$ and retention
ratio $\rhogen$ (the former carries no sign constraint; the latter is
defined for positive training-side value and is not confined to $[0,1]$)
we give a two-layer alignment domain: an exact exchange rate between
value and the side-information-adjusted record fit $I(M';D\mid Y)$
without any record--side-information independence assumption, and a raw
record-stock exchange rate under a joint side-information neutrality
condition $(M,D)\perp Y$, whose boundary is marked by an explicit
one-time-pad witness. We also give a four-coordinate intervention
theorem. Shift, budget, access, and record content each admit a paired
setting in which changing that coordinate alone reverses or restores
the fit--value ranking. The neutrality coordinate alone carries a
genuine one-condition-drop necessity witness.
Corollaries include a subject-difference identity for the retention gap
and the two-way failure of ``overfitting equals low efficiency'':
$\etacap$ and $\rhogen$ admit no functional or monotone relation. Every
pair in a full rectangle is realized by an explicit device and shift.
These are statements about finite-device value retention under
task-distribution shift, not a theory of statistical generalization.
\end{abstract}

\maketitle

\section{Introduction}
\label{sec:intro}

Does what a learning device has memorized coincide with what will hold
value for it in the future? For a finite physical device, the two
sides of this question belong to different accounts. Memorization is
a correlation inventory: the information carried by the device's
memory about its training record or the variables of the world.
Future value is operational: the additional work that readable memory
is worth on a stated distribution of future tasks, under a stated
access structure and a stated per-run work budget. Information
thermodynamics has developed the execution layer of learning in depth.
This layer covers the entropy production of memory updates and its
bound on information acquisition, ledger identities for bipartite
information flows, the dissipative price of retaining nonpredictive
correlation, and work values of stored
information~\cite{Sagawa2010Generalized,Goldt2017PRL,Goldt2017NJP,%
Horowitz2014,Hartich2014,Still2012,Kolchinsky2018Semantic,Boyd2022TML}.
This paper develops the complementary account through typed
bookkeeping. What was recorded and what will hold value remain
quantities of different type, connected by theorems rather than merged
by definition. Section~\ref{sec:related} tabulates what each
neighboring lineage already provides and what this paper imports from
it.

We fix one boundary at the outset. All results below are exact
statements about finite-state learning devices in a fixed operational
framework: work values on explicit task distributions, under explicit
budgets and access structures. They concern finite-device value
retention under task-distribution shift
(Sec.~\ref{sec:retention}), and they are not a theory of statistical
generalization. No sample-complexity, risk-bound, or
i.i.d.-asymptotic content is claimed. The retention quantities defined
below neither reduce to nor estimate a generalization error
(Sec.~\ref{sec:discussion}).

The accounting is typed (Table~\ref{tab:types}). It keeps four
component currencies separate: acquisition cost (C1), physical
dissipation (C2), transport action (C3), and task value (C4). On the
training side, we distinguish three quantities that the single word
``fit'' tends to merge: a training-side fit functional $\Phifit$,
which ranks candidate updates by training loss, training likelihood,
or training work; the record-correlation stock $\JD(M):=I(M;D)$; and
the gross acquisition $\dpJD:=I(M';D\mid M)$ of a single update. On
the update side, the central bookkeeping quantity is the search ledger
$\sigmaM=\Delta I(M;D)+\Sigmatot$, a memory-side subsystem account
that is not a pure-dissipation quantity (Sec.~\ref{sec:ledger}); on
the future side, it is the capital value $\V(M;T,b)$
(Sec.~\ref{sec:value}). The four components are not a table for
classifying learning phenomena. They form a type system. The theorems
below specify the regimes in which exchange laws between the
sub-accounts hold and identify how those laws fail outside those
regimes. We do not claim that every theorem requires all four
components; each theorem uses an explicitly stated subset.

The central definition uses a deletion counterfactual. The capital
value of a memory state $M$ is
\begin{equation}
\V(M;T,b)=\Winformed(M;T,b)-\Wblind(T,b):
\label{eq:V-intro}
\end{equation}
the optimal expected work extractable over the future task
distribution $T$ when the memory-read port is available, minus the
optimum of a blind agent from whom every read port on $M$ has been
deleted and who re-optimizes from scratch under the same tasks, the
same access structure, and the same per-run budget $b$.
\emph{Learning}, in this paper, is an admissible memory-local update
with $\Delta\V>0$; the predicate is relative to the stated task
distribution, access structure, and evaluation budget, and it differs
both from a record-correlation increase, $\Delta I(M;D)>0$, and from
an improvement of any training-side fit functional. The distinction
between these predicates is not merely definitional. It
is the subject of the first main theorem.

The results fall into three theorem groups. Individual devices and
auxiliary lemmas are subordinate to those groups rather than parallel
claims.
\emph{Main Theorem~\ref{thm:main-I} (separation;
Sec.~\ref{sec:value})}: $\V$ is well posed, nonnegative, affine in
$T$, invariant under bijective recoding of the finite memory, and
reduces at a single task to the informed--blind value of the
companion framework~\cite{Sudo_MPU}. For every $n$ there is a device
family on which the record correlation and the world correlation grow
by $n\ln2$ along an admissible update sequence while the capital gain
is exactly zero. In the $\flatstar$ regime, data-free updates never
increase $\V$.
\emph{Main Theorem~\ref{thm:main-II} (capitalization ledger;
Secs.~\ref{sec:ledger} and~\ref{sec:regime})}: the central question is
which part of the update account is capitalized into future value.
This is the capitalization of the search ledger. An exact
$\flatstar$ extraction identity and a universal ledger identity
combine, for (F5$'$)-stable $M$-local updates under a
no-discarded-record-correlation condition~(f), into
the bound $\etacap\le1$ for the capitalization efficiency
$\etacap=\Delta\V/(\kT\,\sigmaM)$, with necessary and sufficient
conditions for equality. Outside this regime, the bound fails in
identified ways: pure entropy-production denominators,
finite-budget gate enablement, and recycling credit. A regime map
organizes these failures, while a blank-start cumulative bound
survives without~(f).
\emph{Main Theorem~\ref{thm:main-III} (value retention;
Sec.~\ref{sec:retention})}: for the retention gap $\Lgen$ and
retention ratio $\rhogen$ under task-distribution shift, there is a
two-layer alignment domain. Within this domain, value equals the
side-information-adjusted record fit $\kT\,I(M';D\mid Y)$ exactly,
with no independence assumption between record and side information.
Under a joint side-information neutrality condition
$(M,D)\perp Y$, it also equals the raw record stock
$\kT\,I(M';D)$. An explicit one-time-pad witness exhibits the boundary
of this condition. The theorem also gives a four-coordinate
intervention theorem: shift, budget, access, and record content each
admit a paired setting in which changing that coordinate alone
reverses or restores the fit--value ranking. Only the neutrality
coordinate carries a genuine one-condition-drop necessity witness.
Two corollaries and an application follow (Secs.~\ref{sec:accounting}
and~\ref{sec:gate}): a subject-difference identity for the retention
gap; the two-way failure of ``overfitting equals low efficiency''
($\etacap$ and $\rhogen$ admit no functional or monotone relation);
and a gate-family contrast between per-bit linear retention and a
threshold transition on a common redraw-depth axis.

The claim level for real learning systems is fixed once. Every
theorem concerns explicitly constructed device classes inside a fixed
protocol framework. These are isolation results in the sense of
device theory. When the phenomena resemble machine learning, as in
memorization without value, forgetting, or value enabled by an
evaluation budget, we assert only a \emph{correspondence} of
accounting structure. We claim neither a model nor an explanation of
any specific biological or artificial system.

On novelty we make one dated and scoped statement: within the fence
of prior work surveyed up to 2026-07-12, we did not find a prior
framework simultaneously equipped with the same operational deletion
value, evaluation budget, access structure, search ledger, and
four-axis alignment map. Each ingredient separately has close
relatives, and several identities used below are imports; the
comparison tables and the attribution of every imported piece are in
Sec.~\ref{sec:related}.

Section~\ref{sec:setting} fixes the operational setting and the type
table. Section~\ref{sec:value} defines the training-side and
value-side quantities and proves Main Theorem~\ref{thm:main-I}.
Sections~\ref{sec:ledger} and~\ref{sec:regime} develop the
capitalization ledger and its regime map (Main
Theorem~\ref{thm:main-II}). Section~\ref{sec:retention} develops
value retention (Main Theorem~\ref{thm:main-III}),
Sec.~\ref{sec:accounting} its accounting corollaries, and
Sec.~\ref{sec:gate} the gate-family application.
Section~\ref{sec:related} locates the framework among its neighbors,
and Sec.~\ref{sec:discussion} records limitations. The appendices
collect device definitions and numerical verification, proofs,
witness details, self-contained fixed-content statements, and the
notation table (Table~\ref{tab:notation}, Appendix~\ref{app:notation}).
This table fixes the primary term for every central quantity of the
accounting; the body uses these terms and no synonyms.

\section{Operational setting and typed accounting}
\label{sec:setting}

\subsection{Imported framework and conventions}
\label{sec:setting-imports}

We import the protocol framework unchanged from the companion
manuscript~\cite{Sudo_MPU} and restate the parts used here.
Throughout, $\kT=1/\beta$, information is measured in nats, and bit
values are displayed as explicit multiples of $\ln2$.

A \emph{budgeted protocol class} $\mathrm{Prot}_b(A,q)$ consists of
compositions of at most $K_{\mathrm{hor}}$ single-step multipartite
maps subject to the following conditions. (a) Under \emph{mechanism
locality}, each designed mechanism acts only on its declared targets
as a function of its declared parents. (b) For the
\emph{dynamic access structure}
$A=(\mathrm{Acc}_0,R)$, the parent set of every stage-$k$
mechanism is contained in the currently licensed set $\mathrm{Acc}_k$
issued by the rule $R$ on the record history. (c) The protocol carries
a \emph{model} $q$; $q=p$ throughout this paper, so the knowledge axis
is not exercised. (d) Under the \emph{per-run pathwise drawdown cap},
on every positive-probability trajectory $\omega$ and at every stage
$k$, the running account satisfies $E_k(\omega)\le b$.
Auxiliaries are priced one-shot at their $D_{\max}$ value, and an
average-budget constraint is not admitted because biased-coin
strategies defeat it in the imported framework. (e) No \emph{unpriced
resources} are allowed. (f) Under \emph{refinancing}, banked balance
lowers the running account. (g) An \emph{Assumption-U class restriction}
imposes a conditional-independence property that the imported
framework builds into the class definition. Whether it holds without
loss of generality is an open question there
(Sec.~\ref{sec:setting-asterisk}).

We fix two bookkeeping conventions of the imported framework once and
use them throughout. ($\alpha$) For memory-reading protocols, clause
(g) is imposed \emph{conditionally on the memory value}. Under this
convention, the per-realization and single-protocol forms of the
informed branch in Definition~\ref{def:gap} agree. An unconditional
clause (g) implies the $M$-conditional form for a frozen $M$ by
including $M$ among the quantified registers. Thus, the blind class
embeds unchanged into the memory-reading class. ($\beta$) The extracted
work $W_{\mathrm{ext}}$ is the \emph{terminal balance of the account}.
Nonequilibrium resources remaining at the horizon are forfeited. This
convention closes the loophole of parking value outside the books. It
also makes the net contribution of any imported auxiliary nonpositive
because its one-shot $D_{\max}$ price is not recovered.

The imported \emph{deletion counterfactual} is the comparison object of
this paper. Deleting a register from the access structure removes every
designable read port from that register to the mechanisms. The deprived
agent then re-optimizes \emph{from scratch} within the reduced class.
The fixed wiring of the task remains untouched. Deletion changes the
agent, not the physics~\cite{Sudo_MPU}.

\subsection{Tasks, memory, and the future task distribution}
\label{sec:setting-tasks}

\begin{definition}[Task tuple]
\label{def:task}
A \emph{task} $\tau$ is a specification consisting of world-side data
only:
\begin{equation}
\tau=\bigl(W_\tau,\ p_\tau(w),\ \{E_i\}_\tau,\ G_\tau,\
\mathrm{Man}_\tau,\ K_\tau,\ A_\tau\bigr),
\label{eq:task-tuple}
\end{equation}
where $W_\tau$ is the set of world registers, which does \emph{not}
contain the memory $M$; $p_\tau(w)$ is the distribution of the world
registers at the start of a run; $\{E_i\}_\tau$ are the energy
functions; $G_\tau$ is a (possibly empty) set of fixed gate devices;
$\mathrm{Man}_\tau$ is the manipulable set, i.e., the registers on
which designed mechanisms may act; $K_\tau$ is the horizon (maximal
number of composed steps); and $A_\tau=(\mathrm{Acc}_0,R)_\tau$ is
the access structure, which does \emph{not} include a read port on
$M$.
\end{definition}

Two exclusions in Definition~\ref{def:task} are consequential. $M$ is
not a world register. Otherwise, every update would change the task
tuple itself, and ``$\Delta\V$ under the same $T$'' would be a type
error. The $M$-read port is also not part of $A_\tau$. The
informed/blind contrast of Sec.~\ref{sec:value} is introduced by
\emph{adding} or \emph{deleting} that port. The port is therefore an
attribute of the agent, not of the task. We obtain the initial joint
distribution of a run by gluing the tuple's $p_\tau(w)$ to the
learning state of Definition~\ref{def:memory}. The bank and
residual-exergy conventions and the budget semantics are fixed
theory-wide, not per task.

\begin{definition}[Memory register and learning state]
\label{def:memory}
The memory $M$ is a separate finite register, common to all tasks,
subject to four clauses: (i) $M$ is energetically decoupled in every
task ($E_\tau$ does not depend on $M$); (ii)
$M\notin\mathrm{Man}_\tau$; (iii) no fixed gate in $G_\tau$ reads
$M$; (iv) no fixed gate in $G_\tau$---and no task-fixed channel of
any kind---acts \emph{on} $M$. A \emph{learning state} of $M$ is a
family of joint distributions
\begin{equation}
\pi_M=\bigl\{\,p_\tau(m,d,w)\ :\ \tau\in\operatorname{supp}T\,\bigr\}
\label{eq:learning-state}
\end{equation}
over $(M,D,W_\tau)$, with $D$ the external data record of
Definition~\ref{def:update}, subject to (a) \emph{world-marginal
consistency}: the $W$-marginal of $p_\tau(m,d,w)$ equals the tuple's
$p_\tau(w)$; and (b) \emph{draw exogeneity} of the training closure
(Definition~\ref{def:exogeneity}).
\end{definition}

Clauses (ii)--(iv) jointly \emph{derive} that the in-run dynamics of
$M$ is the identity ($M$ is frozen). This is not an extra assumption.
On their own, clauses (ii) and (iii) would not exclude a fixed gate
with a world register as control and $M$ as target. Such a gate could
inject value into a blank memory during a run; clause (iv) closes that
hole. The clauses have distinct roles. Clauses (ii)$+$(iv) provide
the frozen datum under which the per-realization and single-protocol
forms of the informed branch agree (with convention $\alpha$).
Clause (iii) underlies the $M$-independence of the blind branch
(Remark~\ref{rem:blind}). In the imported framework, a fixed gate
that reads the datum deliberately re-attributes value. Here, $M$ is
the protagonist of the ledger, so the re-attribution channel is removed
by definition. The evaluation of $\V$ in Sec.~\ref{sec:value} uses
only the $(M,W_\tau)$-marginals of $\pi_M$. The extended family over
$(M,D,W_\tau)$ is nevertheless the state proper; without it, the
update dynamics does not close. We leave couplings between world
variables of different tasks unspecified. Because $\V$ is a
$T$-expectation of per-task quantities, its value does not depend on
these couplings. Global quantities of the type $I(M;W)$ are used only
when all tasks in the support share one world register set (a
\emph{shared world}, as in the device LB of Sec.~\ref{sec:value}).

\emph{Convention (probability space).}---Conditional quantities such
as $p(m\mid\tau)$ and $\mathbb{E}[W_{\mathrm{ext}}\mid M=m,\tau]$ are
evaluated on the single probability space obtained by placing
$T(\tau)\,p_\tau(m,d,w)$ on the bundle
$\bigsqcup_{\tau}\{\tau\}\times(M\times D\times W_\tau)$.

\begin{definition}[Draw exogeneity]
\label{def:exogeneity}
The task draw is jointly independent of the entire training closure:
\begin{equation}
\tau\ \perp\ \bigl(M,\ D,\ \text{all ancillas used by
updates}\bigr),
\label{eq:exogeneity}
\end{equation}
equivalently, the joint marginal of $(M,D)$ together with the update
ancillas is common to all $\tau$.
\end{definition}

Independence of $M$ alone ($p(m\mid\tau)=p(m)$) would not suffice. If
the task index is drawn as $M_0\oplus D$, then $\tau\perp M_0$ and
$\tau\perp D$ hold separately. Yet the single admissible update
$M_1:=M_0\oplus D$ destroys the independence. Joint independence of
the closure is what survives updates:

\begin{lemma}[Update invariance of exogeneity]
\label{lem:exo-invariance}
Under Definition~\ref{def:exogeneity}, $\tau\perp M_k$ holds at
every point of every sequence of admissible $M$-local updates
(Definition~\ref{def:update}).
\end{lemma}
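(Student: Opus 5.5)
The plan is an induction on the length of the update sequence. The invariant carried along is the stronger statement that $\tau$ is jointly independent of the whole current training closure, not just of $M_k$.

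Write $C_k$ for the closure after $k$ updates: the memory $M_k$, the record $D$, and all ancillas not yet consumed, collected into one tuple. The hypothesis to propagate is $H_k:\ \tau\perp C_k$. The base case $H_0$ is exactly Definition~\ref{def:exogeneity}.

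For the inductive step, I will use the form of an admissible $M$-local update from Definition~\ref{def:update}. Such an update writes a new memory value $M_{k+1}=f(M_k,D,A_k)$, where $f$ is a fixed (possibly stochastic) map and $A_k$ is an ancilla. Any internal randomness of the map is itself counted among the update ancillas, so the map can be taken deterministic in its arguments. The map acts only on $M$, and it is applied identically whatever $\tau$ is. The key point is that neither $f$ nor the choice of ancilla may depend on $\tau$: the task is drawn at evaluation and is not a parent of any update mechanism. Then $C_{k+1}$ is a measurable function $g_k(C_k)$ with $g_k$ independent of $\tau$.

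Functions of independent variables remain independent. If $\tau\perp C_k$, then for every measurable set $S$,
\[
P\bigl(\tau=t,\ C_{k+1}\in S\bigr)=P\bigl(\tau=t,\ C_k\in g_k^{-1}(S)\bigr)=P(\tau=t)\,P\bigl(C_{k+1}\in S\bigr).
\]
This proves $H_{k+1}$. Marginalizing $C_k$ onto its $M_k$ component then gives $\tau\perp M_k$ at every step. The same marginalization also gives $\tau\perp(M_k,D)$, which Sec.~\ref{sec:value} needs.

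The part that needs the most care is the claim that updates are $\tau$-blind maps on the closure. Two things have to be checked against Definition~\ref{def:update}. First, every source of randomness an update uses must already lie in the closure quantified in Eq.~\eqref{eq:exogeneity}. That is why the definition quantifies over \emph{all} ancillas used by updates: an ancilla drawn fresh but correlated with $\tau$ would break the argument. Second, no update may read a world register $W_\tau$ or the task index.

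If Definition~\ref{def:update} allowed reading world data, the step would fail. That failure is the contrast with the $M_0\oplus D$ counterexample: there the closure was not jointly independent of $\tau$, so the pushforward step had no valid premise. I will state this explicitly as the reason joint independence, and not coordinate-wise independence, is the hypothesis carried through the induction.
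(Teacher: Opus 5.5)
Your proposal is correct and is essentially the paper's argument. The paper's proof is one line: $M_k$ is a $\tau$-independent function of $(M_0,D,\text{ancillas})$, so $\sigma(M_k)\subseteq\sigma(M_0,D,\text{ancillas})\perp\tau$ by Definition~\ref{def:exogeneity}. Your induction on the closure $C_k$ simply unrolls that same measurability step by step, and you fold kernel randomness into the ancillas exactly as the paper folds bath randomness into the kernel's dependence on $(m,d,a)$.
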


\begin{proof}
$M_k$ is a function of $(M_0,D,\text{ancillas})$: update kernels are
functions of $(m,d,a)$ only, and ancillas are independently
initialized. Hence
$\sigma(M_k)\subseteq\sigma(M_0,D,\text{ancillas})\perp\tau$.
\end{proof}

\begin{definition}[Future task distribution]
\label{def:taskdist}
$T$ is a probability distribution on a finite support
$\mathcal{T}=\{\tau_1,\dots,\tau_m\}$; the state space of $M$ is
also finite.
\end{definition}

Finiteness of the support is a choice of the present paper. Under an
extension to countable supports or general measures, the affinity in
Main Theorem~\ref{thm:main-I}(i) survives. Finiteness of the Lipschitz
constant additionally requires uniform supply boundedness
($\sup_\tau\Gap<\infty$; e.g., uniformly bounded prizes, horizons,
and initial total correlations). Any such extension would carry this
condition explicitly.

\begin{remark}[The meaning of ``future'']
\label{rem:future}
The temporal structure of the separation theorems is: update
(training) first, draw and run second. $\V(M;T,b)$ at a given update
step is the counterfactual evaluation ``if the draw happened now'',
and its soundness at every step of an update sequence is exactly
Definition~\ref{def:exogeneity} plus
Lemma~\ref{lem:exo-invariance}. Time order alone does not yield
statistical independence (in the example above, $M_1$ is fixed
before the draw); what the definition encodes is exogeneity---the
drawing mechanism does not read the training closure.
Self-selection, where the content of $M$ influences which tasks are
attempted, is outside the scope of this paper.
\end{remark}

\subsection{Evaluation budget}
\label{sec:setting-budget}

One draw $\tau\sim T$ is one run. The argument $b$ of $\V$ is
the per-run pathwise drawdown cap specified in clause (d). On every
positive-probability trajectory and at every stage, the running
account satisfies $E_k(\omega)\le b$, with auxiliaries priced one-shot
at $D_{\max}$. The costs of the update (training) process itself remain
in a \emph{separate account}. They are measured in the
acquisition-cost and dissipation currencies (C1, C2) of
Table~\ref{tab:types} and never enter $b$, which prices only the future
run side. This separation distinguishes what was paid to acquire a record
from what the record can be cashed for under a future budget. It is
the skeleton of the four-currency accounting. In the gate-resolvable
subclass appearing in Sec.~\ref{sec:regime}, paid queries made during a
run are charged to the same running account $E_k(\omega)$.

\subsection{Typed accounting: the four component currencies}
\label{sec:setting-types}

Table~\ref{tab:types} fixes the four component currencies of the
accounting. The table is a contract on types, not a classification of
learning phenomena. For each component, it fixes whether the quantity
is actual or counterfactual, its probabilistic and functional type,
its unit, and its composition rule. The contract prevents any argument
below from silently converting one component into another. The search ledger
$\sigmaM=\Delta I(M;D)+\Sigmatot$ of Sec.~\ref{sec:ledger} is
\emph{not} C2 itself. It is a memory-side subsystem account; only its
$\Sigmatot$ entry is pure dissipation. The hypotheses of each theorem
name the subset of components the theorem uses. Fig.~\ref{fig:accounting}
charts the accounts and the theorems that connect them.

\begin{table*}[t]
\caption{\label{tab:types}%
The C1--C4 type table: the four component currencies of the
accounting. The table is a contract on types, not a classification
of learning phenomena; each theorem in this paper uses an explicitly
stated subset of the components. The search ledger
$\sigmaM=\Delta I(M;D)+\Sigmatot$ (Sec.~\ref{sec:ledger}) is not C2
itself.}
\begin{ruledtabular}
\begin{tabular}{p{1.7cm}p{2.1cm}p{1.7cm}p{1.6cm}p{1.7cm}p{1.4cm}p{1.6cm}p{2.4cm}}
Component & Realization & Actual / counterfactual & Probabilistic
type & Functional type & Unit & Composition & Cautions \\
\colrule
C1 acquisition cost & K1/K2 replacement
cost~\cite{Sudo_Replacement} & counterfactual optimum & expected or
worst-path & protocol-class optimum & energy & generally nonadditive
& paid-query dissipation can overlap with C2 \\
C2 physical dissipation & marginal entropy production $\Sigmatot$ of
the implementation & actual & expected path quantity & path
functional & nats; energy equivalent $\kT\,\Sigmatot$ & additive under stage sums & $\Sigmatot$ is the
pure-dissipation entry; every energy-typed use below carries the
explicit $\kT$; the signed $\Delta I$ telescopes exactly
along update sequences, $\dpJD$ in general does not \\
C3 transport action & metric-dependent path action & actual &
pathwise or expected & path functional & metric dependent & depends
on path composition & connects to a physical cost only when a
physical metric is specified \\
C4 task value & informed--blind optimal-value difference $\V$ &
counterfactual optimum & expectation over $T$ & difference of
protocol-class optima & energy & affine in $T$ & relative to $T$,
access structure, and evaluation budget \\
\end{tabular}
\end{ruledtabular}
\end{table*}

\subsection{The U-restriction asterisk}
\label{sec:setting-asterisk}

All protocol classes in this paper carry the imported Assumption-U restriction
of clause (g), with convention ($\alpha$) for
memory-reading protocols. The imported framework leaves open whether
the restriction is without loss of generality. We therefore build it
into the class rather than list it among the hypotheses. The result is
a standing asterisk on interpretation: $\V$ is a
restricted-class--relative quantity. While the U-reduction remains
open, it may deviate from an unrestricted operational value. Every
theorem below is unconditionally valid on the restricted class. The
asterisk attaches to interpretation, and we return to it in
Sec.~\ref{sec:discussion}.

\begin{figure}[t]
\begin{tikzpicture}[every node/.style={font=\footnotesize}]
\node[anchor=north west,align=left,inner sep=0pt] at (0,7.4)
{search ledger\\[-1pt]
 $\kT\sigmaM=\kT\,\dpJD+\kT\,\Sigmatot$\\[-1pt]
 $\mathord{}-\kT\,I(M;D\mid M')$};
\draw[dotted] (0,2.6) rectangle (3.8,6.3);
\draw (0.15,4.6) rectangle (3.65,6.15);
\node[align=center] at (1.9,5.375)
{C1 (K2 route): acquisition\\ charge $\kT\,\dpJD$ (gross),\\
 less recycling credit\\ $\kT\,I(M;D\mid M')$};
\draw (0.15,2.8) rectangle (3.65,4.25);
\node[align=center] at (1.9,3.525)
{C2 physical dissipation\\ ($\kT\,\Sigmatot$)};
\draw[dashed] (0.15,0.5) rectangle (3.65,1.9);
\node[align=center] at (1.9,1.2)
{C3 transport action\\ (metric dependent)};
\draw (5.1,6.3) rectangle (8.2,7.4);
\node[align=center] at (6.65,6.85)
{record stock\\ $\JD=I(M;D)$};
\draw (5.1,3.65) rectangle (8.2,5.1);
\node[align=center] at (6.65,4.375)
{C4 task value\\ $\V(M;T,b)$};
\draw (5.1,0.5) rectangle (8.2,1.9);
\node[align=center] at (6.65,1.2)
{$\V(M;T_{\mathrm{shift}},b)$};
\draw[->] (3.8,4.2) -- (5.1,4.2);
\node[align=center,font=\scriptsize,fill=white,inner sep=1pt] at (4.45,4.72)
{$\etacap\le1$\\ (Thm.~\ref{thm:main-II})};
\draw[->] (6.65,6.3) -- (6.65,5.1);
\node[align=center,fill=white,inner sep=1pt] at (6.65,5.7)
{no exchange law\\ (Thm.~\ref{thm:main-I})};
\draw[->] (6.65,3.65) -- (6.65,1.9);
\node[align=center,fill=white,inner sep=1pt] at (6.65,2.775)
{retention $\Lgen,\rhogen$\\ (Thm.~\ref{thm:main-III})};
\end{tikzpicture}
\caption{\label{fig:accounting}%
The typed four-component accounting. Boxes are the component
currencies (Table~\ref{tab:types}) together with the record stock
$\JD$ and the shifted value; arrows denote theorems or routes that
hold between the accounts---or, where so labeled, their proven
absence---\emph{not} additions. The ledger frame displays the
universal identity in its energy-typed form,
$\kT\sigmaM+\kT\,I(M;D\mid M')=\kT\,\dpJD+\kT\,\Sigmatot$
[Eq.~\eqref{eq:ledger-universal}]: the C1 entry on the K2 route is
the \emph{gross} acquisition charge $\kT\,\dpJD$ (distinct from the
signed stock change $\kT\,\Delta\JD$, which is not an acquisition
charge), the recycling credit vanishes under condition~(f), and all
information-typed quantities enter with their explicit $\kT$
conversion. The
capitalization arrow holds in the $\flatstar+$(f) regime with the
$\sigmaM$ account (Main Theorem~\ref{thm:main-II}); the absent
$\JD\to\V$ exchange law is the separation of Main
Theorem~\ref{thm:main-I}; the retention arrow is the alignment
schema of Main Theorem~\ref{thm:main-III}. C3 connects to a
physical cost only when a physical metric is specified.}
\end{figure}

\section{Fit, record correlation, and capital value}
\label{sec:value}

\subsection{Training-side quantities}
\label{sec:value-fit}

\begin{definition}[Training-side fit, record correlation, gross
acquisition]
\label{def:fit}
A \emph{training-side fit functional} $\Phifit$ is any functional
that ranks candidate updates from the training side---training loss,
training likelihood, training work, or value on the training
distribution. The \emph{record-correlation stock} of a memory state
is
\begin{equation}
\JD(M):=I(M;D),
\label{eq:JD}
\end{equation}
with $D$ the external data record of Definition~\ref{def:update}.
The \emph{gross acquisition} of an update $M\to M'$ is
\begin{equation}
\dpJD:=I(M';D\mid M),
\label{eq:dpJD}
\end{equation}
distinguished from the \emph{signed increment}
$\dJD:=I(M';D)-I(M;D)$: the signed increment telescopes along update
sequences, the gross acquisition in general does not (under the
growth clause of Definition~\ref{def:update} the two coincide).
\end{definition}

$\JD$ is memorization, a stock of record correlation, and is not
identical to training-side fit. An early working document used the
word ``fitting'' for the predicate $\Delta I(M;D)>0$. Throughout
this paper, that predicate is called \emph{record-correlation
increase}, whereas $\Phifit$ always denotes a training-side fit
functional (Appendix~\ref{app:notation}).

\subsection{The informed--blind gap and the capital value}
\label{sec:value-def}

\begin{definition}[Per-task informed--blind gap]
\label{def:gap}
For a task $\tau$ and budget $b\ge0$,
\begin{equation}
\begin{split}
\Gap(M;\tau,b):={}&\mathbb{E}_{m\sim p(M)}\Bigl[\\[-2pt]
&\quad\sup_{P\in\mathrm{Prot}_b(A_\tau+M,\,p)}
\mathbb{E}\bigl[W_{\mathrm{ext}}\bigm|M=m,\tau\bigr]\Bigr]\\[2pt]
&-\sup_{P\in\mathrm{Prot}_b(A_\tau-M,\,p)}
\mathbb{E}\bigl[W_{\mathrm{ext}}\bigm|\tau\bigr],
\end{split}
\label{eq:gap}
\end{equation}
where $A_\tau+M$ is the access structure with a read port on $M$
added, and $A_\tau-M$ is the access structure with every designable
read port on $M$ deleted, in the sense of the imported deletion
counterfactual (Sec.~\ref{sec:setting-imports}).
\end{definition}

The informed branch is per-realization. The protocol may be
re-chosen for each memory value $m$. Because $M$ is frozen during
runs (derived from Definition~\ref{def:memory}(ii)$+$(iv)), the
imported calibration lemma, together with convention ($\alpha$),
identifies this branch with the supremum over single protocols
carrying an $M$-read port. The blind branch re-optimizes with full
knowledge of the public task tuple. Only the read port on $M$ is
removed.

\begin{remark}[The blind branch does not depend on $M$]
\label{rem:blind}
A blind protocol has no read port on $M$, and by
Definition~\ref{def:memory}(i)--(iv) the memory enters neither the
energetics nor the dynamics; every stage map of a blind protocol
therefore factorizes as $\mathrm{id}_M\otimes\Lambda_W$, and its
$\mathbb{E}[W_{\mathrm{ext}}]$ is a functional of the non-$M$
marginals alone. The remaining conceivable channel---importing an
auxiliary whose initial state is correlated with $M$---is closed not
by pricing but by type: the coupling at $t_0$ is task data rather
than an object of protocol choice, and mid-run auxiliaries are
adjoined as fresh registers in product state; no preparation map
acting on $M$ exists in the class. Hence $\Wblind$ carries no $M$
argument. The full argument is reproduced in
Appendix~\ref{app:equality}.
\end{remark}

\begin{definition}[Capital value]
\label{def:value}
\begin{equation}
\begin{split}
\V(M;T,b)&:=\mathbb{E}_{\tau\sim T}\bigl[\Gap(M;\tau,b)\bigr]\\
&=\Winformed(M;T,b)-\Wblind(T,b),
\end{split}
\label{eq:value}
\end{equation}
where $\Winformed$ and $\Wblind$ are the $T$-expectations of the two
branches of Eq.~(\ref{eq:gap}). The unit is work; \emph{bits of
value} means $\V/(\kT\ln2)$. $\V$ carries the U-restriction asterisk
of Sec.~\ref{sec:setting-asterisk}.
\end{definition}

Measuring the value of information by an expected work difference on
tasks lies, by itself, in the lineages of the value of information
and semantic information. What is specific to $\V$ is the
counterfactual type (deletion followed by re-optimization from
scratch), the budget and access arguments, and the affine structure
in $T$. The theorems below are built on this affine structure. The
attributions, and the exact differences from the scramble-type
counterfactual, are tabulated in Sec.~\ref{sec:related}.

\subsection{Admissible updates and the learning predicate}
\label{sec:value-updates}

\begin{definition}[Admissible updates: $M$-local updates]
\label{def:update}
An \emph{update} is a physical process whose kernel is a function of
$(m,d,a)$ only; in particular it neither reads nor writes any world
register (this excludes the weaker reading ``does not act on $W$'').
Here: (i) $D$ is the \emph{external data record} (the training
record), a register that may carry a coupling $p_\tau(d,w)$ with the
world registers and lies outside $W_\tau$ for every $\tau$; during
updates $D$ is \emph{read-only}. (ii) Ancillas $A$ are initialized
in a product state \emph{jointly} independent of $(W,M,D)$ (fresh
randomness), a condition imposed per update. (iii) After an update
the memory may be re-designated as $M'=(M,\text{absorbed ancilla
registers})$ (growth clause).
\end{definition}

The two type restrictions in (i) and (ii) each close a specific
hole. Read-only $D$ closes the first hole. Allowing writes to $D$
would let an update copy memory content into empty cells of $D$ and
register $\Delta I(M;D)>0$ without acquiring anything. Allowing
erasure of $D$ would hide perfect memorization as
$\Delta I(M;D)\le0$. The record-correlation predicate would fail in
both directions. (The accounting of legitimate record rewriting and
forgetting of $D$ is a future slot.) Joint independence of ancillas
closes the second hole. Marginal independence would not suffice:
$A:=M\oplus W$ is marginally independent of $W$, yet the update
$M':=M\oplus A=W$ would break data processing. Joint independence
excludes such key composition. The condition is imposed per update.
Along a sequence, earlier stages may create registers correlated
through $D$, which is the designed channel.

\begin{lemma}[Two consequences of the update class]
\label{lem:update-consequences}
For every admissible update and every task $\tau$ in the support:
(1) (\emph{blind invariance}) the joint law of the world registers
is unchanged; hence $\Wblind(T,b)$ is invariant under updates and
$\Delta\V=\Delta\Winformed$.
(2) (\emph{per-update data processing})
$I(M';W_\tau)\le I\bigl((M,D);W_\tau\bigr)$; on a shared-world
support the same bound holds for the full world register set $W$.
\end{lemma}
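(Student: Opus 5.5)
For part (1), I would start from the joint law of $(M,D,W_\tau,A)$ before the update. An admissible update has a kernel that depends only on $(m,d,a)$ and writes only to $M$ and absorbed ancillas. By Definition~\ref{def:update}(i), $D$ is read-only and no world register is written. So the post-update joint law of $(M',D,W_\tau)$ is obtained by applying a stochastic map $K(m'\mid m,d,a)$, after marginalizing over the independent ancilla. This leaves the $(D,W_\tau)$-marginal untouched, and in particular the $W_\tau$-marginal, which also equals the tuple's $p_\tau(w)$ by world-marginal consistency. By Remark~\ref{rem:blind}, the blind optimum for each $\tau$ is a functional of the non-$M$ marginals alone. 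These are the task tuple data together with the world law; $D$ is not a world register and has no read port in $A_\tau$, so only the $W_\tau$ law matters. Hence $\sup_{\mathrm{Prot}_b(A_\tau-M,p)}\mathbb{E}[W_{\mathrm{ext}}\mid\tau]$ is unchanged for each $\tau$. I would need to check that the task draw $T$ is also unchanged, which holds because $T$ is fixed and exogeneity (Lemma~\ref{lem:exo-invariance}) keeps the probability-space construction consistent. Taking the $T$-expectation then gives invariance of $\Wblind(T,b)$, and $\Delta\V=\Delta\Winformed$ follows directly from Definition~\ref{def:value}.

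For part (2), the key step is the conditional independence $M'-(M,D)-W_\tau$. I would show that the joint law factorizes as
\[
p(m',m,d,w)=\sum_a p(m,d,w)\,p(a)\,K(m'\mid m,d,a),
\]
using Definition~\ref{def:update}(ii): the ancilla is jointly independent of $(W,M,D)$, so $p(m,d,w,a)=p(m,d,w)p(a)$. Then $p(m'\mid m,d,w)=\sum_a p(a)K(m'\mid m,d,a)$ does not depend on $w$, which is the Markov chain. The data-processing inequality then gives $I(M';W_\tau)\le I((M,D);W_\tau)$. If the growth clause re-designates $M'$ to include absorbed ancillas, I would simply treat $M'$ as the full output $(m',a_{\mathrm{abs}})$. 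The same factorization holds because the kernel output is still a function of $(m,d,a)$ with $a\perp(m,d,w)$. On a shared-world support I would repeat the argument verbatim with $W$ in place of $W_\tau$.

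The main obstacle is the justification in part (1) that the blind optimum sees only non-$M$ marginals. The real question is whether correlations between $D$ or $M$ and $W$, altered by the update, could leak into the blind branch. I would rely on Remark~\ref{rem:blind}, together with the fact that no read port on $D$ exists in $A_\tau$, so every blind stage map acts as identity on $(M,D)$. The step needing care is that a protocol with no access to $M$ or $D$ has expected terminal work determined by the $W_\tau$-marginal only. This follows by tracing out $(M,D)$ under the product-form stage maps $\mathrm{id}_{M,D}\otimes\Lambda_W$.
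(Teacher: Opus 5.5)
Your proposal is correct and follows the paper's proof. For part (1), the update acts only on $(M,D,A)$, so the world marginal is unchanged, and through Remark~\ref{rem:blind} the blind branch $\Wblind(T,b)$ is unchanged too. For part (2), joint ancilla independence gives the Markov chain $W_\tau\to(M,D)\to M'$, and data processing finishes the bound. Your explicit factorization and your handling of absorbed ancillas under the growth clause just spell out what the paper's two-line proof leaves implicit.
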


\begin{proof}
(1) The kernel acts on $(M,D,A)$ only; in a classical system without
post-selection, a local operation leaves the marginal of the
untouched registers unchanged. (2) $M'$ is a function of $(M,D,A)$
with $A$ jointly independent of $(W,M,D)$, so
$W_\tau\to(M,D)\to M'$ is a Markov chain and the data-processing
inequality applies.
\end{proof}

\begin{definition}[Learning; record-correlation increase]
\label{def:learning}
An admissible update is \emph{learning} (\emph{capitalization}) if
\begin{equation}
\Delta\V:=\V(M';T,b)-\V(M;T,b)>0 .
\label{eq:learning}
\end{equation}
It is a \emph{record-correlation increase} if $\Delta I(M;D)>0$
(meaningful because $D$ is read-only, hence identical in content
before and after the update).
\end{definition}

These are two predicates on the same update, relative to the stated
task distribution, access structure, and evaluation budget. They are
neither exclusive nor exhaustive. Both can hold, which is the
typical intent of honest training, and both can fail. They are a
pair of predicates, not a chain-rule decomposition. Main
Theorem~\ref{thm:main-I}(ii)--(iii) shows that they separate at the
theorem level.

\begin{remark}[Value accounting, not computation accounting]
\label{rem:scope}
$\V$ prices what a readable memory is worth on future tasks; it does
not price computation. In the informed branch, run-time protocol
design and the functional composition of reads are free. In the
fully proved $\flatstar$ regime, an update that merely re-encodes
existing correlation into a more usable form gains no $\V$ (Main
Theorem~\ref{thm:main-I}(iv)); for general tasks the same statement
is a heuristic supported only by the proof sketch of
Sec.~\ref{sec:value-blackwell} and is not asserted as a theorem.
The value of representation learning, compression, distillation, and
consolidation---of making the same correlation easier to use---is
not what $\V$ measures, for a reason independent of that scope
limit: $\V$ does not price the computation such updates save. That
value belongs to an accounting for computation-bounded agents,
outside the scope of this paper.
\end{remark}

\subsection{Main Theorem I}
\label{sec:value-main}

\begin{maintheorem}[Operational capital value and separation]
\label{thm:main-I}
Let the setting be that of Sec.~\ref{sec:setting}, with $q=p$.
\begin{enumerate}
\item[(i)] (\emph{Basic properties, invariance, correspondence.})
$\V$ is well posed: each branch of Eq.~(\ref{eq:gap}) is a finite
supremum of a scalar functional over a fixed class, and the outer
expectations are finite sums. Moreover:
(a)~$\V(M;T,b)\ge0$;
(b)~$T\mapsto\V(M;T,b)$ is an affine functional of the distribution,
with
\begin{equation}
\begin{aligned}
\lvert\V(M;T,b)-\V(M;T',b)\rvert
&\le\lVert T-T'\rVert_{\mathrm{TV}}\\
&\quad\times\sup_{\tau}\Gap(M;\tau,b),
\end{aligned}
\label{eq:tv-bound}
\end{equation}
where
$\lVert T-T'\rVert_{\mathrm{TV}}:=\tfrac12\sum_\tau
\lvert T(\tau)-T'(\tau)\rvert$;
(c)~if $A_\tau\preceq A'_\tau$ for all $\tau$---componentwise:
$\mathrm{Acc}_0\subseteq\mathrm{Acc}'_0$, and $R'$ licenses supersets
on identical record histories---then both the informed and the blind
branch are nondecreasing; no monotonicity is claimed for $\Gap$
itself;
(d)~for every bijection $\varphi$ of the finite state space of $M$,
with learning state
$\pi_{\varphi(M)}:=(\varphi\otimes\mathrm{id})_{*}\pi_M$,
\begin{equation}
\begin{aligned}
\V(\varphi(M);T,b)&=\V(M;T,b),\\
\Gap(\varphi(M);\tau,b)&=\Gap(M;\tau,b)\qquad\forall\tau
\end{aligned}
\label{eq:gauge}
\end{equation}
(scope: bijective recodings of a finite register; this is an
invariance of the C4 account and is not asserted for the C1 or C2
accounts);
(e)~for $T=\delta_\tau$,
\begin{equation}
\V(M;\delta_\tau,b)=\kT\ln2\cdot M_{\mathrm{val}}(M;b,\tau,A_\tau),
\label{eq:mpu-anchor}
\end{equation}
the informed--blind datum value of the companion
framework~\cite{Sudo_MPU} in bits, on the common domain of data
conforming to Definition~\ref{def:memory}.
\item[(ii)] (\emph{Separation.}) For every $n\in\mathbb{N}$ there
exist a task distribution $T_n$, a budget $b\ge0$, a data record
$D_n$, and a sequence $M_0\to M_1\to\dots\to M_n$ of admissible
$M$-local updates such that $I(M_k;D_n)$ is strictly increasing in
$k$ with $I(M_n;D_n)=n\ln2$, while
\begin{equation}
\V(M_k;T_n,b)=\V(M_0;T_n,b)\qquad(\forall\,k\le n).
\label{eq:separation}
\end{equation}
\item[(iii)] (\emph{Strengthened separation.}) The family in (ii)
can be chosen with all tasks in the support sharing one world
register set $W$, such that in addition $I(M_k;W)$ increases by
$n\ln2$ along the sequence while Eq.~(\ref{eq:separation}) still
holds.
\item[(iv)] (\emph{$\flatstar$ D-free nonincrease.}) For updates
whose kernel is a function of $(m,a)$ alone (\emph{D-free}), and for
every task $\tau$ that is $\flatstar$ relative to the pre-update
state (Definition~\ref{def:flatstar}, Sec.~\ref{sec:ledger}),
$\Gap(M';\tau,b)-\Gap(M;\tau,b)\le0$; hence $\Delta\V\le0$ whenever
$T$ is supported on $\flatstar$ tasks.
\end{enumerate}
\end{maintheorem}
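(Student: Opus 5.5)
I would take the four parts in order, treating (i) as bookkeeping on Definitions~\ref{def:gap} and~\ref{def:value}, (ii)--(iii) as an explicit device construction, and (iv) as a garbling argument.

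\emph{Part (i).} For well-posedness I will note three finiteness facts. Each supremum in Eq.~(\ref{eq:gap}) is over a fixed class $\mathrm{Prot}_b$. Its value is bounded above by the finite supply of the task: bounded horizon, finite registers, bounded energy functions, and convention ($\beta$) forfeiting leftovers. The outer expectations are finite sums by Definition~\ref{def:taskdist}.

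For (a), convention ($\alpha$) embeds the blind class into the memory-reading class, since a protocol may simply ignore its read port. So for every $m$ the informed supremum dominates the blind one, and averaging over $p(m)$ gives $\Gap\ge0$. Averaging over $\tau$ gives $\V\ge0$.

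For (b), $\V=\sum_\tau T(\tau)\Gap(M;\tau,b)$ is linear in the weights. Draw exogeneity (Definition~\ref{def:exogeneity}) makes $p(m)$ the same for every $\tau$, so $\Gap$ does not itself depend on $T$. Then
\[
\V(T)-\V(T')=\sum_\tau\bigl(T-T'\bigr)(\tau)\,\Gap(\tau).
\]
The positive and negative parts of $T-T'$ each carry mass $\lVert T-T'\rVert_{\mathrm{TV}}$. Together with $0\le\Gap\le\sup_\tau\Gap$, this gives Eq.~(\ref{eq:tv-bound}).

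For (c), enlarging $\mathrm{Acc}_0$ and the licensed sets of $R$ enlarges the protocol class. This holds for both $A_\tau+M$ and $A_\tau-M$, so both suprema are nondecreasing. No claim about their difference is needed.

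For (d), I will conjugate. Any informed protocol for $\varphi(M)$ is turned into one for $M$ by composing its read with $\varphi$, which is allowed because reads are functional. The per-realization suprema then agree under $m\mapsto\varphi(m)$, and the pushforward preserves $p(m)$. The blind branch is $M$-independent by Remark~\ref{rem:blind}.

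For (e), I specialize to $T=\delta_\tau$. Definition~\ref{def:gap} then coincides term by term with the companion informed--blind datum value, using the imported calibration lemma and convention ($\alpha$) to match the per-realization and single-protocol forms. The factor $\kT\ln2$ is the unit conversion from bits to work.

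\emph{Parts (ii)--(iii).} I will construct a device family on which recording is worthless. I take $D_n$ to be $n$ i.i.d.\ uniform bits and let update $k$ copy bit $k$ into a fresh ancilla cell, absorbed by the growth clause. This update is $M$-local and reads $D$ only. Then $I(M_k;D_n)=k\ln2$, which is strictly increasing.

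For (ii), the tasks have world registers independent of $D$. Then $M_k\perp W_\tau$, and the informed optimum equals the blind optimum. This equality should follow from Remark~\ref{rem:blind}-type factorization: an independent read is equivalent to fresh randomness, which the blind agent can already generate. Hence $\V(M_k)=0$ for every $k$.

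For (iii), I make the world correlated with the record but useless. Take a shared world $W\supseteq W'$ with $W'=D_n$ copied, so that $I(M_k;W)=k\ln2$. The tasks must still give $W'$ no extractable value: $W'$ lies outside $\mathrm{Man}_\tau$, carries constant energy, and is read by no gate. Then $W'$ is a spectator. Every protocol's work depends only on $(M_k,W\setminus W')$, and the same independence argument gives $\Delta\V=0$. I expect this step to be the main obstacle: rigorously showing that correlation with an unmanipulable, energetically flat register cannot be cashed. The argument would reduce such reads to exogenous randomness, given mechanism locality and the one-shot pricing of auxiliaries.

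\emph{Part (iv).} A D-free update $M'=f(M,A)$ with fresh $A$ is a garbling of $M$. I plan to invoke the $\flatstar$ characterization of Definition~\ref{def:flatstar}, which presumably gives $\Gap$ as a functional monotone under garblings (of Blackwell type). The proof route is a simulation argument: an informed protocol reading $M'$ is simulated by one reading $M$ that samples $A$ internally at no cost, since auxiliaries are priced. The simulation therefore yields $\Gap(M')\le\Gap(M)$ task by task, and summing over a $\flatstar$-supported $T$ gives $\Delta\V\le0$. The delicate point is that $\flatstar$ is defined relative to the pre-update state, so the simulation must be stated against the pre-update optimum.
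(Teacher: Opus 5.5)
Part (i) follows the paper's argument closely. The one omission is in (d): you should also check that the conjugated protocol stays in the U-restricted class. The paper does this by noting that the U restriction is a family of vanishing conditional mutual informations, which is invariant under bijective relabeling.

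Your separation device is the paper's LB with the $M_{\mathrm{core}}$ bit removed, so $\V\equiv0$. That meets the statement as written; the paper keeps $M_{\mathrm{core}}$ only so the separation is nondegenerate. However, the converse you flag for (iii) is left open. ``Fresh randomness'' is also the wrong reduction in (ii) and (iii), because an auxiliary carrying that randomness is priced at $D_{\max}$. Close it per realization instead. Conditioned on $M_k=m$, only the law of the spectator $W'$ changes: its copied bits become known constants. State explicitly that $W'$ is independent of the manipulable medium (the paper's blocking condition $X_i\perp Y$). Then hard-coding those constants maps each informed-$m$ protocol to a blind protocol with the same objective, as in Lemma~\ref{lem:noport}. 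Alternatively, make the tasks $\flatstar$ and read off $\Gap=\kT\,I(M_k;X_\tau\mid Y_\tau)=0$ with $Y_\tau\supseteq W'$ (Remark~\ref{rem:lb-rederived}).

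Part (iv) has a genuine gap. Your claim that the simulating agent ``samples $A$ internally at no cost, since auxiliaries are priced'' runs backwards. A non-uniform ancilla law carries a positive one-shot $D_{\max}$ charge, and Lemma~\ref{lem:auxnetting} only says an auxiliary nets $\le0$. That is the wrong direction for a simulation in which the $M$-reading agent must do at least as well as the $M'$-reading one.

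You could derandomize in the per-realization form: at $M=m$, run the optimizer $P_{f(m,a)}$ and average over $a$. But that is exactly the Jensen/convexity argument of Sec.~\ref{sec:value-blackwell}. It needs each optimizer to remain admissible when run on the component world law $\rho_m$ (pathwise cap, convention ($\alpha$), and a U restriction defined by conditional-independence constraints on joint laws). The paper treats that route only as a sketch and deliberately does not assert it. Applying the extraction identity at both states also fails, because a D-free garbling (e.g., a noisy channel) can destroy (F5$'$) at $M'$.

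The missing idea is the asymmetry of Lemma~\ref{lem:flat-identity}:
\begin{itemize}
\item Its informed-branch converse, which bounds the post-update informed branch by $\kT[\ln\lvert X_\tau\rvert-H(X_\tau\mid M',Y_\tau)]$, holds for \emph{every} state.
\item Achievability is needed only at the pre-update state, where (F5$'$) holds by hypothesis.
\item The blind branch cancels by Lemma~\ref{lem:update-consequences}(1).
\end{itemize}
This gives $\Gap(M';\tau,b)-\Gap(M;\tau,b)\le\kT[I(M';X_\tau\mid Y_\tau)-I(M;X_\tau\mid Y_\tau)]\le\kT\,I(M';X_\tau\mid M,Y_\tau)$. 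The last term vanishes because $M'=f(M,A)$ and the ancilla's joint independence is inherited under conditioning on $(M,Y_\tau)$ (Lemma~\ref{lem:jointindep}). No protocol simulation or ancilla pricing enters.
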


\begin{proof}[Proof of (i)]
\emph{Well-posedness and (a).} Each branch is the supremum of a
scalar functional over a fixed protocol class, with the bank,
residual-exergy, and budget conventions fixed theory-wide; the
suprema are finite because each task has finite supply (finite
prizes, finite horizon $K_\tau$, finite total correlation at
$t_0$), as in the imported framework, and the support and the state
space of $M$ are finite (Definition~\ref{def:taskdist}). For (a):
with $q=p$ the blind class embeds into the informed class by
ignoring the $M$ port, so $\Gap(M;\tau,b)\ge0$ for every $\tau$, and
$\V\ge0$ follows by averaging.

\emph{(b).} Affinity is
$\V(M;T,b)=\sum_\tau T(\tau)\,\Gap(M;\tau,b)$; the bound is
total-variation duality together with $\Gap\ge0$ from (a). We record
Eq.~(\ref{eq:tv-bound}) as a basic property, not as a theorem of
independent content: what it is worth on a concrete shift family
depends on the inner-product structure between the gap profile
$\tau\mapsto\Gap(M;\tau,b)$ and the measure movement, which is the
subject of Sec.~\ref{sec:retention}.

\emph{(c).} The record history of a fixed protocol is unchanged
under access enlargement, and by induction over stages the
qualification of each record (a write from a readable source) is
preserved, so every parent-set constraint
$\mathrm{Pa}\subseteq\mathrm{Acc}_k$ satisfied under $A$ is
satisfied under $A'$; each branch's class only grows, and its
supremum is nondecreasing. The two branches may profit
unequally---there are enlargements from which only the blind branch
profits---which is why no claim is made for $\Gap$.

\emph{(d).} The blind branch does not depend on $\pi_M$
(Remark~\ref{rem:blind}) and is invariant. For the informed branch,
map $P\mapsto P^{\varphi}$ by replacing the parent-dependent
function $f(m,\dots)$ of every mechanism reading $M$ with
$f(\varphi^{-1}(m'),\dots)$; the imported class admits arbitrary
parent-dependent functions in a single step and does not charge
reads within a stage as adaptivity, so $P^{\varphi}$ is admissible
with the same depth. Couple the trajectories of $P$ (memory value
$m$) and $P^{\varphi}$ (memory value $\varphi(m)$): if all record
values agree up to stage $k$, the mechanisms of $P^{\varphi}$
compute $f(\varphi^{-1}(\varphi(m)))=f(m)$, so the record values
remain equal; the access rule $R$ sees identical histories and
licenses identical $\mathrm{Acc}_k$; the parent-set constraints hold
on one side iff they hold on the other; and the agreement propagates
to stage $k+1$. By induction, the trajectory laws of
$W_{\mathrm{ext}}$ and of the running account
$E_k(\omega)$---invested work, one-shot $D_{\max}$ prices,
withdrawals, banked work---coincide. Class membership is also
preserved: the imported U restriction is a family of vanishing
conditions on conditional mutual informations, i.e., properties of
joint distributions rather than of channels; such conditions are
invariant under bijective relabeling of an argument,
$\sigma(\varphi(M))=\sigma(M)$, and the per-realization conditional
laws coincide under the coupling. Since $P\mapsto P^{\varphi}$ is a
bijection between the classes, the suprema agree, and the outer
expectation transforms by the change of variables $m'=\varphi(m)$.

\emph{(e).} With $T=\delta_\tau$ the expectation collapses to
$\Gap(M;\tau,b)$. The informed branch of Definition~\ref{def:gap} is
the per-realization form and the blind branch is deletion followed
by re-optimization from scratch; this is the companion framework's
informed--blind datum value with the datum read as $M$, normalized
there by $\beta/\ln2$ to bits, so the energy-unit gap equals
$\kT\ln2\cdot M_{\mathrm{val}}$. Two design choices carry the
identity: the per-realization form of the informed branch (a
single-protocol form would agree only through the calibration
lemma), and per-task re-optimization in the blind branch (a
task-agnostic blind would not reduce to the companion definition).
\end{proof}

\begin{proof}[Proof of (ii) and (iii): the device family LB]
The witness is the \emph{ledger-blocked correlation} device LB,
defined in full in Appendix~\ref{app:devices}; we summarize the
construction and the two blocking conditions. Registers: $X_1,X_2$,
one uniform independent bit each (the working media of two tasks);
$Y=(Y_1,\dots,Y_n)$, i.i.d.\ fair bits independent of $(X_1,X_2)$
(world registers outside every manipulable set); and
$M_{\mathrm{core}}$, a perfect copy of $X_1$ (one bit of genuine
capital, so that $\Delta\V=0$ is exhibited in a nondegenerate
situation with $\V>0$). All energies are degenerate ($E\equiv0$).
The support is $\mathcal{T}=\{\tau_1,\tau_2\}$: $\tau_1$ has the
shared world $W=(X_1,X_2,Y)$, uniform product $p(w)$, $G=\emptyset$,
$\mathrm{Man}=\{X_1\}\cup\text{ancillas}$, $K_\tau=3$, and static
all-read access; $\tau_2$ is identical with
$\mathrm{Man}=\{X_2\}\cup\text{ancillas}$; $T_n=(1/2,1/2)$; and
$b\ge0$ is arbitrary---the informed protocols require no upfront
investment, so the budget axis is deliberately not exercised
(budget-driven devices appear in Sec.~\ref{sec:regime}). The data
record is $D:=$ the observation record of $Y$ ($D_j$ a copy of
$Y_j$, read-only during updates), and the update sequence is
$M_k:=(M_{\mathrm{core}},\text{copies of }Y_1,\dots,Y_k)$, each step
a single controlled-copy from $D_k$ into fresh memory---an
admissible $M$-local growth update.

For each $\tau_i$ the per-task gap is pinned exactly: the informed
supremum equals $\kT[\ln\lvert X_i\rvert-H(X_i\mid
M_k)]=\kT\,I(M_k;X_i)$ and the blind supremum equals $0$. The
converse (no admissible protocol exceeds these values) is carried by
a three-step argument reproduced in Appendix~\ref{app:devices}: a
conditional chain bound of the imported framework applied to the
$M=m$ conditional law; invariance of the non-manipulable marginals
(mechanism locality with $G=\emptyset$); and auxiliary netting under
the terminal-balance convention ($\beta$). Achievability is a
conditional permutation ($X_1$ conditioned on $M_{\mathrm{core}}$)
followed by Szilard extraction. Hence
\begin{equation}
I(M_k;D)=k\ln2,\qquad I(M_k;W)=(k+1)\ln2,
\label{eq:lb-correlations}
\end{equation}
\begin{equation}
\Gap(M_k;\tau_1,b)=\kT\ln2,\qquad \Gap(M_k;\tau_2,b)=0,
\label{eq:lb-gaps}
\end{equation}
so $\V(M_k;T_n,b)=\tfrac12\,\kT\ln2$ for every $k$: along the
sequence, $I(M_k;D)$ rises strictly to $n\ln2$ and $I(M_k;W)$ rises
by $n\ln2$, while $\Delta\V=0$. This proves (ii) and (iii).

The blockage is the conjunction of two independent conditions:
(i)~$X_i\perp Y$---the acquired correlation targets registers
independent of the working media, which closes the laundering route
(using $M$--$Y$ correlation to reset $X_i$ cheaply and re-extract);
and (ii)~$Y\notin\mathrm{Man}_\tau$ for every $\tau$ in the
support---the cashing route is severed. Read through condition (i),
LB is a thermodynamic implementation of the classical statement that
memorizing environmental noise creates no value; read through condition
(ii), the correlation is genuine correlation with world variables of
every task in the support, and it is the absence of any manipulation
route that blocks its capitalization. A cartridge variant, in which
$Y$ is replaced by keys whose gates are absent from the support,
realizes the same blockage through gate absence
(Appendix~\ref{app:devices}). Numerical verification of the exact
gap values, of the achievability transformation, and of saturation
within the read family is reported in Appendix~\ref{app:devices}.
\end{proof}

\begin{proof}[Proof of (iv)]
By the $\flatstar$ extraction identity
(Lemma~\ref{lem:flat-identity}, Sec.~\ref{sec:ledger}) the
pre-update informed branch equals
$\kT[\ln\lvert X_\tau\rvert-H(X_\tau\mid M,Y_\tau)]$, while the
informed-branch converse---which uses none of the achievability
normalization---bounds the post-update informed branch by
$\kT[\ln\lvert X_\tau\rvert-H(X_\tau\mid M',Y_\tau)]$; the blind
branch is unchanged under updates
(Lemma~\ref{lem:update-consequences}(1)) and cancels in the
difference, so
$\Gap(M';\tau,b)-\Gap(M;\tau,b)\le\kT[\,I(M';X_\tau\mid
Y_\tau)-I(M;X_\tau\mid Y_\tau)\,]$. The chain rule
gives $I(M';X\mid Y)\le I(M;X\mid Y)+I(M';X\mid M,Y)$ (writing
$X:=X_\tau$, $Y:=Y_\tau$), and conditionally on $(M,Y)$ the D-free
update output $M'=f(M,A)$ is independent of $X$: the ancilla is
jointly independent of $(W,M,D)$, and joint independence is
inherited under conditioning (Appendix~\ref{app:equality}). Hence
$I(M';X\mid M,Y)=0$, so
$\Gap(M';\tau,b)-\Gap(M;\tau,b)\le0$; averaging over a
$\flatstar$-supported $T$ gives $\Delta\V\le0$.
\end{proof}

\begin{remark}[Quantifier structure of the separation]
\label{rem:quantifier}
Parts (ii)--(iii) are stated in the weak form
$\forall n\ \exists\,(T_n,D_n,\text{sequence})$: the environment
register $Y$ has $n$ bits, so the support and the record depend on
$n$ (the finiteness required by Definition~\ref{def:taskdist}
constrains the number of tasks, not the size of $Y$). The strong
form---a single $(T,D)$ carrying an infinite update sequence with
$I(M;D)\uparrow\infty$---would require countably many registers, an
extension of the imported class that we do not make; it is not
claimed.
\end{remark}

\begin{remark}[Scope of the recoding invariance]
\label{rem:gauge-scope}
Part (i)(d) covers bijective recodings of a finite register.
Reparametrizations of a continuous parameter space composed with
discretization merge and split cells and induce no bijection; they
are outside its scope (a continuous single-step map class is not
part of the imported framework). Independently of the lemma-level
statement, a structural fact holds: no parameter coordinate appears
in the definition of $\V$ at all---the objects entering
Definition~\ref{def:value} are the task tuple, the learning state,
and the budget. The invariance also depends on the design of
Definition~\ref{def:memory} ($M$ read-only, outside every
manipulable set, energetically decoupled, unpriced): extensions that
let $M$ be manipulated, or that charge its residual exergy, would
make a uniform criterion representation dependent.
\end{remark}

\begin{remark}[What part (i)(e) does and does not assert]
\label{rem:mpu-scope}
The companion value $M_{\mathrm{val}}$ is defined for arbitrary
identification subsystems, including data read by fixed gates, data
carrying energy, manipulable data, and non-frozen data. The
single-task slice of $\V$ agrees with $M_{\mathrm{val}}$ restricted
to data conforming to Definition~\ref{def:memory}(i)--(iv). $\V$ is
therefore not a generalization of the companion quantity but an
extension along the task axis combined with a restriction along the
datum axis; the four restricted directions are deliberate design
choices matching the character of a learned memory (the protagonist
of the ledger, frozen during runs, energetically decoupled).
\end{remark}

\begin{remark}[$T$-relativity of capital; the affine anchor]
\label{rem:t-relativity}
$\V$ is a functional of $T$: no absolute value of a memory exists in
this accounting. The device LB exhibits this concretely: append
$\tau_3$ (extraction from $\mathrm{Man}=\{Y\}$, where
$Y=(Y_1,\dots,Y_n)$ is read as a single bundled register over the
alphabet $\{0,1\}^n$, keeping the flat single-manipulable-register
scope) to the
support with $T'=(1/3,1/3,1/3)$; then
$\Gap(M_k;\tau_3,b)=\kT\,I(M_k;Y)=k\,\kT\ln2$, so the same update
sequence that capitalizes nothing under $T$ capitalizes every step
under $T'$ (Appendix~\ref{app:devices}). Whether an update is
learning is not an intrinsic property of the update; it is a
relation to the future task distribution. We emphasize that
``correlation outside the support of $T$ is not capital'' is a
theorem-level consequence here, not a restatement of the
definition: by part (i)(e) the single-task values are anchored in an
independently motivated operational counterfactual, and by part
(i)(b) together with draw exogeneity the evaluation is the
risk-neutral (affine) average of the anchored values, so $\V$ is the
unique $T$-affine extension of the anchor. The risk-neutral choice
itself is a design decision: a risk-averse functional (a minimum or
a conditional value-at-risk over tasks) would assign positive value
to insurance-type correlations; such variants are an explicit
non-goal of this paper.
\end{remark}

\subsection{Discussion: D-free updates on general tasks}
\label{sec:value-blackwell}

Part (iv) of Main Theorem~\ref{thm:main-I} is stated and proved,
for the $\flatstar$ regime. For general tasks, we give a proof
sketch and deliberately do not promote it to a theorem. In the
informed branch, write $\mathbb{E}_m[g_b(\rho_m)]$ with
$g_b(\rho):=\sup_{P\in\mathrm{Prot}_b}\mathbb{E}_\rho
[W_{\mathrm{ext}}]$ and $\rho_m$ the world distribution conditioned
on $M=m$. The functional $g_b$ is convex in the initial
distribution. For each fixed $P$, the corresponding functional is linear.
The supremum is taken over an admissible class that shrinks under
support enlargement through the pathwise cap. Mixing therefore lowers
the supremum. A D-free update is a garbling of $M$ with respect to the
world. Under a product-state ancilla, the conditional world distributions
of $M'$ are mixtures of those of $M$. Jensen's inequality therefore gives
$\mathbb{E}_{m'}[g_b(\rho_{m'})]\le\mathbb{E}_m[g_b(\rho_m)]$.
The blind branch remains unchanged. Accordingly, the statement
``updates that do not touch data are not learning'' is asserted as a
theorem in this paper only over the fully proved $\flatstar$ scope
of part (iv); on general tasks its status is that of this sketch.
Active updates, which intervene on the world itself, lie outside the
update class of Definition~\ref{def:update} altogether. They
correspond to a fourth axis (activity, cf.\
Ref.~\cite{Fiderer2025WorkCapacity}) recorded as a future slot.

\section{Capitalization ledger}
\label{sec:ledger}

\subsection{Update implementations and the search ledger}
\label{sec:ledger-sigma}

The ledger prices \emph{implementations} of updates.

\begin{definition}[Physical implementation of an update]
\label{def:implementation}
An \emph{implementation} of an $M$-local update $M\to M'=f(M,D,A)$
(Definition~\ref{def:update}) is a physical process on the registers
$(M,D,A)$ and a single heat bath at temperature $T$ such that:
(i)~(\emph{logical consistency}) the total effect of the process on
$(M,D,A)$ equals the update kernel---a conditional law
$K(m'\mid m,d,a)$; the two-time coupling
$p_\tau(m,m',x,y,d)$ is the pushforward of the initial learning
state $\pi_M$ and $p(a)\,K$, hence is determined by the
\emph{logical specification} (kernel and $\pi_M$) and does not
depend on the implementation.
(ii)~$D$ is read-only: its marginal and its content are unchanged,
and it participates only as a control.
(iii)~the ancilla $A$ is initialized fresh (jointly independent, in
product state; Definition~\ref{def:update}) and has exactly two
admissible terminal fates: \emph{absorption}---re-designation as
part of $M'$ under the growth clause---or \emph{erasure}, a priced
reset to a standard state whose heat enters
$\Delta S_{\mathrm{bath}}$. \emph{Totality convention}: the $M'$
appearing in the ledger statements below and in $\V(M')$ is one and
the same register, namely every memory-side register left at the
end of the process---$M$ together with all absorbed
ancillas---\emph{excluding} the read-only source $D$, the world $W$,
and the bath ($D$ and $W$ also end outside the bath, but as control
and spectator respectively, not as parts of $M'$; the terminal system
is the disjoint pair $(M',D)$); a split reading (``ledger over the
residuals, value over a declared part'') is not admitted.
(iv)~bath contacts follow standard stochastic thermodynamics: the
bath is initially equilibrated, initially uncorrelated with all
registers (including the world), and memoryless (each elementary
process satisfies local detailed balance), with no external feedback
controller; where the event-ledger decomposition of the companion
framework~\cite{Sudo_PaperI} applies, it is inherited.
The \emph{entropy production} of the implementation is the marginal
quantity
\begin{equation}
\Sigmatot:=\Delta S(M,D,A)+\Delta S_{\mathrm{bath}}\ \ge\ 0 .
\label{eq:sigmatot}
\end{equation}
\end{definition}

Nonnegativity in Eq.~(\ref{eq:sigmatot}) follows from the second law
for the marginal process. Neither the kernel nor the implementation
reads $W$. The $(M,D,A)$-marginal therefore follows the same Markov
evolution whatever its initial correlation with the world, so the
standard relative-entropy argument applies. The $W$-correlation
remains a spectator subject to data processing. Two cautions attach
to the definition. The first concerns the layer structure. The
residual subjects of the decomposition below (Lemma~\ref{lem:decomp})
and the acquisition $\dpJD$ are functionals of the two-time coupling
and hence properties of the \emph{logical specification}. Only the
reversibility entry $\Sigmatot=0$ is a property of the
\emph{implementation}. The second is a naming caution. $\Sigmatot$
is the \emph{marginal} entropy production of $(M,D,A)$ plus bath. The
global entropy production including the world adds the
\emph{destroyed} world correlation. This term is the time difference
of the mutual informations, not their initial total:
\begin{equation*}
\begin{aligned}
\Sigma_{\mathrm{global}}
&=\Sigmatot+\bigl[I_0(MDA;W)-I_1(M'D;W)\bigr]\\
&\ge \Sigmatot ,
\end{aligned}
\end{equation*}
with $I_0$ evaluated on the initial state and $I_1$ on the terminal
state. The destroyed world correlation and its initial
total coincide only when the correlation is destroyed in full,
$I_1=0$. An identity update, with nothing destroyed, has
$\Sigma_{\mathrm{global}}=\Sigmatot$. The theorems below are
internally consistent in the marginal version. The global
quantity is not used in this ledger. An erasure of memory correlated
with the world can have $\Sigmatot=0$; the loss then appears on the
value side, as the forgetting subject $g_a$ of
Lemma~\ref{lem:decomp}.

\begin{definition}[Search ledger]
\label{def:sigmaM}
The \emph{search ledger} of an implementation is
\begin{equation}
\sigmaM:=\bigl[S(M')-S(M)-S(A)\bigr]+\Delta S_{\mathrm{bath}}
\qquad(\text{nats}) .
\label{eq:sigmaM-def}
\end{equation}
The first bracket is the effective entropy change of the memory-side
registers, with the initial entropy $S(A)$ of the imported
ancillas---absorbed or erased---deducted as imported resource (a
fresh ancilla carrying uniform bits may not inflate $S(M')$ for
free; for an erased ancilla the deduction cancels against the
Landauer heat of its reset, so a borrowed page returned blank is not
billed); the second term is the heat
discharged to the bath. Along an update sequence
$M_0\to\dots\to M_n$ the ledger is read as the stage sum
$\Sigma_{\mathrm{search}}^{(n)}:=\sum_k\sigmaM^{(k)}$, and likewise
for $\Delta I$ and $\Sigmatot$.
\end{definition}

Three facts justify taking $\sigmaM$, rather than $\Sigmatot$, as
the denominator account. (1)~It abstracts the memory-subsystem
ledger $\Delta S(\omega)+\Delta Q$ of
Ref.~\cite{Goldt2017PRL}. This makes the efficiencies directly
comparable (Table~\ref{tab:correspondence}). (2)~In Bennett
accounting, a blank, low-entropy register is a thermodynamic
resource, and $\sigmaM$ charges its consumption. A reversible
controlled-copy with $\Sigmatot=0$ still spends one blank page and
is billed $\sigmaM=\ln2$ (Sec.~\ref{sec:regime}). (3)~The ledger
identity below has the energy-typed four-term form
$\kT\sigmaM+\kT\,I(M;D\mid M')=\kT\,\dpJD+\kT\,\Sigmatot$. It
embeds the search ledger into the type table as the \emph{gross}
Landauer acquisition charge $\kT\,\dpJD$ (the K2-route entry of C1)
plus the pure dissipation $\kT\,\Sigmatot$ (C2), less the recycling
credit $\kT\,I(M;D\mid M')$, which vanishes under condition~(f).
The C1 entry is the gross charge, not the signed increment
$\Delta I$ (which can be negative and is not an acquisition charge).
Each information-typed term carries its explicit $\kT$. This
identity is the form quoted
in Secs.~\ref{sec:intro} and~\ref{sec:setting} and in
Appendix~\ref{app:notation}.

\begin{remark}[The choice of account decides the theorem]
\label{rem:ledger-choice}
The reading of the denominator is fixed here once, and the theorems
of this section are sensitive to it: with the denominator read as
$\Sigmatot$, the capitalization efficiency below admits no upper
bound (route B1 of Sec.~\ref{sec:regime}); read as $\sigmaM$, the
bound $\etacap\le1$ is a theorem on the $\flatstar$$+$(f) regime
(Main Theorem~\ref{thm:main-II}); and without~(f) the bound
reappears for blank-start cumulative ledgers (Main
Theorem~\ref{thm:main-II}(v)). Recording this sensitivity is part of
the content of the operational definition.
\end{remark}

\subsection{Capitalization efficiency}
\label{sec:ledger-eta}

\begin{definition}[Capitalization efficiency]
\label{def:etacap}
For an $M$-local update (or update sequence) with implementation,
and $\Delta\V:=\V(M';T,b)-\V(M;T,b)$,
\begin{equation}
\etacap:=\frac{\Delta\V}{\kT\,\sigmaM}
\label{eq:etacap-def}
\end{equation}
(dimensionless; $\sigmaM$ in nats), defined on updates with
$\sigmaM>0$.
\end{definition}

The boundary and the outside of the domain are organized according to
condition~(f) of Lemma~\ref{lem:ledger-identity} below. For updates
satisfying~(f) (no discarded record correlation; automatic under the
growth clause), $\sigmaM=\dpJD+\Sigmatot\ge0$. If $\sigmaM=0$, then
$\dpJD=0$ and $\Sigmatot=0$, and the acquisition cap of
Lemma~\ref{lem:updateDP} gives $\Delta\V\le0$. This is a harmless
boundary on which efficiency is not in question. For updates
violating~(f) (reversible recycling of stored record correlation),
$\sigmaM<0$ is possible, as is $\sigmaM=0$ with $\Delta\V>0$. This is
not a defect of the definition; it is the signature of the recycling
regime (route B4 of Sec.~\ref{sec:regime}). Per-update statements about
$\etacap$ are therefore read within~(f). Updates with $\Delta\V<0$
(forgetting, impairment) are assigned $\etacap<0$. The numerator and
denominator both have units of energy ($\kT\times$ nats). For a
display in bits, divide both by $\kT\ln2$.

\begin{remark}[Management of the reading ``efficiency'']
\label{rem:etacap-naming}
The name licenses a ratio $\le1$ only inside the regime
$\flatstar+\sigmaM+(\mathrm f)$ (Main
Theorem~\ref{thm:main-II}(iii)). In the other cells of the regime
map (gate tasks with finite budget; $\Sigmatot$ accounting;
recycling) $\etacap$ can exceed $1$ and is read as a
\emph{conversion multiplier}---the same discipline by which
Ref.~\cite{Kolchinsky2018Semantic} distinguishes a multiplier
$\kappa$ that may exceed one from a ratio $\eta\le1$
(Table~\ref{tab:correspondence}). The numerator $\Delta\V$ is a
restricted-class--relative quantity
(Sec.~\ref{sec:setting-asterisk}), so the \emph{interpretation} of
$\etacap$ and of its equality conditions (``full capitalization'')
carries the same standing asterisk; the validity of the theorems on
the restricted class is unconditional.
\end{remark}

\subsection{The $\flatstar$ class and the extraction identity}
\label{sec:ledger-flat}

\begin{definition}[$\flatstar$ tasks]
\label{def:flatstar}
A task $\tau=(W_\tau,p_\tau,\{E_i\},G_\tau,\mathrm{Man}_\tau,
K_\tau,A_\tau)$ is \emph{$\flatstar$} (relative to a learning state
$\pi_M$) if:
(F1)~\emph{degenerate energies}: $E_i\equiv0$ for all variables;
(F2)~\emph{no gates}: $G_\tau=\emptyset$;
(F3)~\emph{single manipulable register}:
$\mathrm{Man}_\tau=\{X_\tau\}\cup\text{ancillas}$ with $X_\tau$ a
single register over a finite alphabet (possibly non-binary), and
$Y_\tau:=W_\tau\setminus\{X_\tau\}$ (possibly several units);
(F4)~\emph{static all-read access}: $A_\tau=(\mathrm{Acc}_0=
\text{all variables},\,R\ \text{trivial})$;
(F5$'$)~\emph{flat-conditional attainability}: under $\pi_M$ and
$p_\tau$, every conditional distribution $p(x_\tau\mid y,m)$ and
$p(x_\tau\mid y)$ is a point mass or a uniform distribution on a
subset of the alphabet, and $K_\tau\ge3$.
\end{definition}

Condition (F5$'$) is a joint property of $(\pi_M,p_\tau)$, not of
the task tuple alone. It closes achievability
\emph{exactly, in finitely many stages, at every budget $b\ge0$,
with pathwise nonnegative work on every trajectory}
(Appendix~\ref{app:equality}). A general dyadic condition would not
suffice. Through the fluctuation-theorem equality constraint on exact
blind attainment, a conditional atom deeper than uniform (say $1/8$
inside $p(x\mid y)=(1/2,1/4,1/8,1/8)$) forces a work-injecting branch
$w<0$, which contradicts the pathwise cap at $b=0$. The identity below
then fails in both directions, depending on which branch holds the
deep atom. Thus no one-sided repair exists. Equality restoration for
general dyadic states above a budget floor $b^{*}(\tau,\pi_M)$, and
the $K_\tau\to\infty$ asymptotics of general distributions, are not
claimed in this paper.

\begin{lemma}[$\flatstar$ extraction identity]
\label{lem:flat-identity}
Let $\tau$ be a $\flatstar$ task, $q=p$, and $M$ a memory conforming
to Definition~\ref{def:memory}. Then for every $b\ge0$,
\begin{equation}
\Gap(M;\tau,b)=\kT\,I(M;X_\tau\mid Y_\tau)
\label{eq:flat-identity}
\end{equation}
---an equality, independent of the budget. More precisely, the
informed branch equals $\kT[\ln\lvert X_\tau\rvert-
H(X_\tau\mid M,Y_\tau)]$ and the blind branch equals
$\kT[\ln\lvert X_\tau\rvert-H(X_\tau\mid Y_\tau)]$. The proof is in
Appendix~\ref{app:equality}. Its informed-branch converse uses none
of the achievability normalization (F5$'$) and holds for every
state---the fact used by Main Theorem~\ref{thm:main-I}(iv);
combined with blind attainment, carried by the $y$-clause of
(F5$'$) (a property of the task tuple alone), it bounds
$\Gap(M;\tau,b)\le\kT\,I(M;X_\tau\mid Y_\tau)$ for \emph{every}
state (Lemma~\ref{lem:flattask} below).
\end{lemma}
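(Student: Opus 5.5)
The plan is to compute the two branches separately and subtract. Each branch gets a converse (upper bound) and a matching achievability construction, and the blind branch is treated as the $M$-free special case of the informed one.

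\emph{Converse.} For the informed branch, fix $m$ and work with the conditional law $p(x,y\mid m)$. Under (F1) all energies vanish, so the extracted work is a pure free-energy change. Under (F2) and (F3), mechanism locality forces the marginal of $(Y_\tau,M)$ to stay invariant throughout the run. The only thing a protocol can change is the conditional law of $X_\tau$ given everything else, together with fresh auxiliaries adjoined in product state. I will apply the conditional chain bound of the imported framework to the $M=m$ law. This gives
\[
\mathbb{E}[W_{\mathrm{ext}}\mid m]\le \kT\bigl[\ln|X_\tau|-H(X_\tau\mid Y_\tau,M=m)\bigr]+(\text{net auxiliary term}).
\]
Under the terminal-balance convention ($\beta$), the net auxiliary contribution is nonpositive: each auxiliary is bought at its $D_{\max}$ price, and any resource still unrealized at the horizon is forfeited. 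Averaging over $m$ then yields $\kT[\ln|X_\tau|-H(X_\tau\mid M,Y_\tau)]$. This step does not use (F5$'$), so it holds for every state, which is exactly what the remark in the statement requires. The blind converse is the same argument with $M$ removed. By Remark~\ref{rem:blind}, the blind branch depends only on the non-$M$ marginal, which gives $\kT[\ln|X_\tau|-H(X_\tau\mid Y_\tau)]$.

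\emph{Achievability.} By (F4), every register is readable. For each $(y,m)$, (F5$'$) says $p(x\mid y,m)$ is uniform on some subset $S_{y,m}$ of size $2^{j}$ or, more generally, of size $s$. The first stage applies a controlled permutation of $X_\tau$, conditioned on $(Y_\tau,M)$, that maps $S_{y,m}$ onto a canonical block. This step is reversible and costs no work. The remaining stages, within $K_\tau\ge3$, perform a Szilard-type controlled isothermal expansion of that block to the full alphabet. This extracts exactly $\kT\ln(|X_\tau|/s)$ deterministically on every trajectory. Because the work is pathwise nonnegative, the running account never exceeds $0\le b$, so the protocol is admissible at every budget. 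Averaging over $(y,m)$ gives $\kT[\ln|X_\tau|-\mathbb{E}\ln s]$, and $\mathbb{E}\ln s=H(X_\tau\mid M,Y_\tau)$ precisely because the conditionals are flat. The blind branch uses the $y$-clause of (F5$'$) in the same way. Convention ($\alpha$) and the frozen-$M$ property identify the per-realization supremum with this single protocol. Subtracting the two branches, the $\ln|X_\tau|$ terms cancel and we obtain $\kT I(M;X_\tau\mid Y_\tau)$, independent of $b$. The final claim for every state follows by combining the state-free informed converse with the exact blind attainment, which needs only the $y$-clause.

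\emph{Main obstacle.} I expect the hardest step to be the converse: ruling out any gain from auxiliaries or from exploiting $Y$-correlations through ancillas. I would try first to set up the chain bound on the extended system $(X,\text{ancillas})$ conditioned on $(Y,M=m)$ and to show that ancilla free energy nets to at most zero under ($\beta$). A secondary point to check is that the expansion step stays within the stage count and keeps the pathwise cap, which is where flatness (rather than a general dyadic conditional) is essential.
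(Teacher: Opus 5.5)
Your architecture matches the paper's: the conditional chain bound on the $M=m$ law, auxiliary netting under the terminal-balance convention ($\beta$), the blind branch via Remark~\ref{rem:blind}, and attainment by a conditional permutation plus isothermal expansion, whose pathwise nonnegativity makes both branches budget-independent. The gap is in the converse, at the step you yourself flag.

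You assert that the chain bound on the $M=m$ law ``gives'' $\kT[\ln\lvert X_\tau\rvert-H(X_\tau\mid Y_\tau,M=m)]$. Its local free-energy part delivers only $\ln\lvert X_\tau\rvert-S(X_\tau\mid m)$: the uniformization ceiling on $X$, with $\Delta S^{(m)}_i=0$ on every $Y$-unit and auxiliaries netting to $\le0$. Whenever $I(X_\tau;Y_\tau\mid m)>0$, this is strictly below what your own achievability attains. The side information enters only through the total-correlation term. The missing idea is to keep that term in its telescoped form $TC(t_0\mid m)-TC(t_K\mid m)$ (Lemma~\ref{lem:condchainprime}) and bound it by $I(X_\tau;Y_\tau\mid m)$, in three steps.
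First, split $TC(t_0\mid m)=TC_Y(t_0\mid m)+I(X_\tau;Y_\tau\mid m)$; fresh auxiliaries are branch-constant product states and add nothing at $t_0$.
Second, use monotonicity of $TC$ under passing to the subsystem $Y_\tau$, which gives $TC(t_K\mid m)\ge TC_Y(t_K\mid m)$. This disposes in one stroke of any correlation the protocol builds among $X_\tau$, ancillas, and $Y_\tau$, which is exactly your worry about ``exploiting $Y$-correlations through ancillas.''
Third, use invariance of the $Y$-joint to get $TC_Y(t_K\mid m)=TC_Y(t_0\mid m)$.
Then $\ln\lvert X\rvert-S(X\mid m)+I(X;Y\mid m)=\ln\lvert X\rvert-S(X\mid Y,m)$. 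Both ends of the telescoping term are needed. If you discard $-TC(t_K\mid m)$, as chain bounds are usually stated, you are left with an excess $TC_Y(t_0\mid m)$. That is the internal correlation of a multi-unit $Y_\tau$, which (F3) permits, and the converse then no longer meets the achievability.

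Your fallback of conditioning on $(Y_\tau,M=m)$ would sidestep this bookkeeping, but the framework as set up does not license it. Convention ($\alpha$) imposes the U-restriction conditionally on the memory value only. The U-conditions are vanishing conditional mutual informations, which conditioning on $y$ can destroy. So the per-$y$ conditional processes are not known to lie in the class to which the chain bound applies. The paper conditions only on $M$ and pays for $Y_\tau$ through the retained $TC$ term. With that step supplied, the rest of your proposal goes through as in the paper, including the every-state informed converse and blind attainment from the $y$-clause.
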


\begin{corollary}[Value identity]
\label{cor:v-identity}
If the support of $T$ consists of $\flatstar$ tasks, then
$\V(M;T,b)=\kT\,\mathbb{E}_{\tau\sim T}
[\,I(M;X_\tau\mid Y_\tau)\,]$.
\end{corollary}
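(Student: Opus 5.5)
The plan is to obtain the corollary by combining the affine structure of $\V$ with the per-task identity, so no new operational argument is needed. First, I unfold Definition~\ref{def:value} as the finite sum
\begin{equation*}
\V(M;T,b)=\sum_{\tau\in\mathcal{T}}T(\tau)\,\Gap(M;\tau,b).
\end{equation*}
This expansion is the affinity of Main Theorem~\ref{thm:main-I}(i)(b). It is legitimate because the support is finite (Definition~\ref{def:taskdist}) and each $\Gap$ is finite by the well-posedness part of (i).

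Second, I apply Lemma~\ref{lem:flat-identity} to each task $\tau$ in the support. Because $\tau$ is $\flatstar$ relative to the learning state $\pi_M$, and $q=p$ holds throughout, the lemma gives $\Gap(M;\tau,b)=\kT\,I(M;X_\tau\mid Y_\tau)$ for every $b\ge0$.

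There is a bookkeeping point to check here. Definition~\ref{def:flatstar} makes (F5$'$) a joint property of $(\pi_M,p_\tau)$. So the hypothesis ``support consists of $\flatstar$ tasks'' must be read relative to the same state $\pi_M$ at which $\V$ is evaluated. I will state this reading explicitly.

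There is a second point about which joint law the conditional mutual information is taken under. It is evaluated under $p_\tau(m,x_\tau,y_\tau)$, the $(M,W_\tau)$-marginal of $\pi_M$ for that task. By the probability-space convention this coincides with the conditional law on the bundle given $\tau$. The expectation $\mathbb{E}_{\tau\sim T}[I(M;X_\tau\mid Y_\tau)]$ is therefore exactly $\sum_\tau T(\tau)\,I(M;X_\tau\mid Y_\tau)$, computed per task. It is not a single global conditional mutual information, so no shared-world assumption is needed.

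Substituting the per-task identity into the sum and pulling out the constant $\kT$ yields the claim. The result is independent of $b$, which is inherited from the lemma. I expect no real obstacle. The only step needing care is the relativity to $\pi_M$ just noted, because the lemma itself carries all of the analytic content.
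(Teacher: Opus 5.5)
Your proposal is correct and matches the paper's (implicit) argument. $\V$ is by Definition~\ref{def:value} the $T$-average of the per-task gaps, and substituting the $\flatstar$ extraction identity of Lemma~\ref{lem:flat-identity} task by task gives the claim, with the $\flatstar$ hypothesis read relative to the same $\pi_M$ at which $\V$ is evaluated, exactly as you note. The finite-sum expansion is already the definition itself, so citing Main Theorem~\ref{thm:main-I}(i)(b) is harmless but not needed.
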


\begin{remark}[The separation theorem re-derived; blocked
correlation quantified]
\label{rem:lb-rederived}
On the device LB (Sec.~\ref{sec:value-main};
Appendix~\ref{app:devices}), $X_i\perp Y$ and the $Y$-components of
$M_k$ are independent of $X_i$, so
$I(M_k;X_1\mid Y)=I(M_{\mathrm{core}};X_1)=\ln2$ for every $k$ and
$I(M_k;X_2\mid Y)=0$; $\Delta\V=0$ becomes a one-line consequence of
Eq.~(\ref{eq:flat-identity}). The identity names the ``usable''
correlation exactly: it is the conditional mutual information
$I(M;X_\tau\mid Y_\tau)$, and the difference from $I(M;W)$
quantifies the ledger-blocked correlation.
\end{remark}

\begin{remark}[Scope of the identity]
\label{rem:flat-scope}
(i)~\emph{Load distribution}: the converse is carried by analysis
(the conditional chain bound, stated as a standalone lemma in
Appendix~\ref{app:equality} so that the weight-bearing path does not
depend on the interior of an imported proof), achievability by
(F5$'$). (ii)~\emph{Multi-register $\mathrm{Man}$} requires handling
the internal correlation $TC(X_{\mathrm{Man}})$ (initial correlation
becomes a consumable resource) and is outside the scope of this
version; tasks whose manipulable registers can be bundled into one
(a joint permutation writable as a single mechanism) are covered as
stated. (iii)~Relaxing (F4) to partial read yields a variant in
which only the readable part of $Y$ enters the conditioning; it is
not developed here. (iv)~With gates ($G\ne\emptyset$) both the
equality and the $\le\kT\,I$ upper bound fail in general---the
enablement route of Sec.~\ref{sec:regime}; this is a regime
boundary, not a defect. (v)~Outside (F5$'$) the equality fails at
small budgets, where the pathwise cap blocks exact blind attainment;
a finite budget wounds the blind side asymmetrically through
extraction fluctuations---a phenomenon of the same family as
enablement, arising inside the gate-free $\flatstar$ world. A
general dyadic version above a budget floor $b^{*}$ is a recorded
future slot (Sec.~\ref{sec:discussion}).
\end{remark}

\subsection{Update data processing}
\label{sec:ledger-dp}

\begin{lemma}[Update data processing]
\label{lem:updateDP}
Let $M'=f(M,D,A)$ be an $M$-local update
(Definition~\ref{def:update}) and $\tau$ a $\flatstar$ task, and
suppose the post-update state is again (F5$'$)---\emph{(F5$'$)-%
stability}, an explicit hypothesis, since (F5$'$) is a joint
property of $(\pi_M,p_\tau)$ and is not automatically preserved by
$M$-local updates (the copy-type updates of the device suite, whose
records are exact copies or independent coins, preserve it).
Then
\begin{equation}
\Delta\Gap_\tau\ \le\ \kT\,\min\bigl\{\,I(D;X_\tau\mid M,Y_\tau),\
\dpJD\,\bigr\},
\label{eq:updateDP}
\end{equation}
where $\dpJD=I(M';D\mid M)$; under the growth clause
$\dpJD=\dJD$. Averaging over $T$ supported on $\flatstar$ tasks:
$\Delta\V\le\kT\,\mathbb{E}_\tau[I(D;X_\tau\mid M,Y_\tau)]$ and
$\Delta\V\le\kT\,\dpJD$ (the second bound is $\tau$-uniform, hence
survives the average in the same form). Proof in
Appendix~\ref{app:equality}.
\end{lemma}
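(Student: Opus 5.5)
The plan is to reduce everything to conditional mutual informations using the $\flatstar$ extraction identity (Lemma~\ref{lem:flat-identity}). Since $\tau$ is $\flatstar$ relative to the pre-update state and, by (F5$'$)-stability, also relative to the post-update state, Eq.~\eqref{eq:flat-identity} applies at both ends. It gives
\begin{equation*}
\Delta\Gap_\tau=\kT\bigl[I(M';X\mid Y)-I(M;X\mid Y)\bigr],
\end{equation*}
with $X:=X_\tau$ and $Y:=Y_\tau$. (Alternatively, I would use the exact pre-update identity together with the state-independent informed-branch converse for $M'$, exactly as in the proof of Main Theorem~\ref{thm:main-I}(iv). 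The blind branch cancels by Lemma~\ref{lem:update-consequences}(1).) Stability is only needed if one wants the equality at the post-update end; the inequality route via the converse already suffices.

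The chain rule gives
\begin{equation*}
I(M';X\mid Y)\le I(M,M';X\mid Y)=I(M;X\mid Y)+I(M';X\mid M,Y),
\end{equation*}
so
\begin{equation*}
\Delta\Gap_\tau\le\kT\,I(M';X\mid M,Y).
\end{equation*}
The two bounds then come from two data-processing steps on this residual term.

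\textbf{First bound.} Conditionally on $(M,Y)$, the output $M'=f(M,D,A)$ is a function of $(D,A)$, and $A$ is jointly independent of $(W,M,D)$. This independence survives conditioning, by the same argument used in Main Theorem~\ref{thm:main-I}(iv), with details in Appendix~\ref{app:equality}. Hence $X\to D\to M'$ is a Markov chain given $(M,Y)$, and
\begin{equation*}
I(M';X\mid M,Y)\le I(D,A;X\mid M,Y)=I(D;X\mid M,Y).
\end{equation*}

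\textbf{Second bound.} Here I want $I(M';X\mid M,Y)\le I(M';D\mid M)$. The chain rule gives
\begin{equation*}
I(M';X\mid M,Y)\le I(M';X,Y\mid M)\le I(M';D,W\mid M)=I(M';D\mid M)+I(M';W\mid M,D).
\end{equation*}
The last term vanishes because, given $(M,D)$, $M'$ depends only on the fresh ancilla $A$, which is independent of $W$.

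Taking the minimum of the two bounds gives Eq.~\eqref{eq:updateDP}. Under the growth clause, $M'\supseteq M$, so
\begin{equation*}
I(M';D)=I(M;D)+I(M';D\mid M),
\end{equation*}
which yields $\dpJD=\dJD$.

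For the averaged statements, I would take the $T$-expectation of $\Delta\Gap_\tau$ using the affinity of Main Theorem~\ref{thm:main-I}(i)(b). The first bound averages termwise. The second bound is independent of $\tau$, since $M$, $M'$ and $D$ have a $\tau$-independent joint law by draw exogeneity (Definition~\ref{def:exogeneity} and Lemma~\ref{lem:exo-invariance}), so it passes through the average unchanged.

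The main obstacle is the careful justification of the conditional independences on the bundled probability space. Specifically, I need that $A\perp(X,Y)$ given $(M,D)$, and given $(M,Y,D)$, holds within each task. This follows from joint independence of $A$ from $(W,M,D)$ per update. I also need care in invoking the identity at the post-update state, which is where the explicit (F5$'$)-stability hypothesis, or the state-free converse, is used.
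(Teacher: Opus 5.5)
Your proof is correct and is essentially the paper's. The reduction to $\kT\,I(M';X\mid M,Y)$ and the supply cap via $X - D - M'$ given $(M,Y)$ match it, and your acquisition cap (enlarging to $I(M';D,W_\tau\mid M)$ and using $I(M';W_\tau\mid M,D)=0$) only reorders the paper's chain-rule expansion through $I(M';D\mid M,Y)=\dpJD-I(M';Y\mid M)$, leaving the same total slack $g_b+g_c$ of Lemma~\ref{lem:decomp}. Your side remark is also right: combining the exact pre-update identity with the state-free informed converse, as in Main Theorem~\ref{thm:main-I}(iv), gives the inequality without (F5$'$)-stability, which the paper uses only to apply the identity at both ends.
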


The two bounds carry independent meanings. The first, the
\emph{supply cap}, is the task-relevant information held by the
record; correlation absent from $D$ cannot be learned. The second,
the \emph{acquisition cap}, is the information that the memory
actually absorbed from $D$; correlation not absorbed does not become
capital.

\subsection{The universal ledger identity}
\label{sec:ledger-identity}

\begin{lemma}[Universal ledger identity]
\label{lem:ledger-identity}
For every $M$-local update and every implementation
(Definition~\ref{def:implementation}),
\begin{equation}
\begin{aligned}
\sigmaM &= \Delta I(M;D)+\Sigmatot ,\\
\Delta I(M;D) &:= I(M';D)-I(M;D),
\end{aligned}
\label{eq:ledger-universal}
\end{equation}
equivalently, in the four-term form obtained from the chain rule
($\dpJD-\Delta I=I(M;D\mid M')\ge0$),
\begin{equation}
\sigmaM+I(M;D\mid M')=\dpJD+\Sigmatot .
\label{eq:ledger-4term}
\end{equation}
Under the condition
\begin{equation*}
(\mathrm f)\colon\quad I(M;D\mid M')=0
\end{equation*}
(\emph{no discarded record correlation}; automatic under the growth
clause $M'=(M,R)$; unrelated to the refinancing clause (f) of
Sec.~\ref{sec:setting-imports}), and only under it,
$\sigmaM=\dpJD+\Sigmatot$, in particular $\dpJD\le\sigmaM$.
\emph{Side note (attribution)}: the identity
(\ref{eq:ledger-universal}) is not new---it is the static-source
degeneration of the bipartite information-flow entropy balances of
Refs.~\cite{Horowitz2014,Hartich2014} (the read-only $D$ has zero
own entropy production), plus the ancilla bookkeeping of
Definition~\ref{def:sigmaM}. Proof in
Appendix~\ref{app:equality}.
\end{lemma}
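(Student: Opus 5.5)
The identity is pure entropy bookkeeping: expand both $\sigmaM$ and $\Sigmatot$ from their definitions, Eqs.~(\ref{eq:sigmaM-def}) and~(\ref{eq:sigmatot}), and compare. Since $\Delta S_{\mathrm{bath}}$ appears in both, it cancels. What remains to show is that
\begin{equation*}
S(M')-S(M)-S(A)-\Delta S(M,D,A)=I(M';D)-I(M;D).
\end{equation*}

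Start with the initial joint entropy. By Definition~\ref{def:update}(ii) the ancilla is fresh and jointly independent of $(M,D)$, so $S(M,D,A)_{\mathrm{initial}}=S(M,D)+S(A)$.

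For the terminal joint entropy, use the totality convention of Definition~\ref{def:implementation}(iii). Absorbed ancillas are part of $M'$. An erased ancilla sits in a standard pure state and contributes zero entropy; its Landauer heat is already inside $\Delta S_{\mathrm{bath}}$. The terminal system is therefore effectively the pair $(M',D)$, and $S(M,D,A)_{\mathrm{final}}=S(M',D)$.

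Hence
\begin{equation*}
\Delta S(M,D,A)=S(M',D)-S(M,D)-S(A).
\end{equation*}

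Next, $D$ is read-only (Definition~\ref{def:implementation}(ii)), so $S(D)$ is the same before and after. Write $S(M',D)=S(M')+S(D)-I(M';D)$ and $S(M,D)=S(M)+S(D)-I(M;D)$ and substitute. The $S(D)$ terms and the $S(A)$ terms cancel, leaving
\begin{equation*}
\sigmaM-\Sigmatot=I(M';D)-I(M;D),
\end{equation*}
which is Eq.~(\ref{eq:ledger-universal}).

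For the four-term form, apply the chain rule to $I(M,M';D)$ in two orders:
\begin{equation*}
I(M;D)+I(M';D\mid M)=I(M';D)+I(M;D\mid M').
\end{equation*}
This gives $\dpJD-\Delta I=I(M;D\mid M')\ge0$. Substituting into Eq.~(\ref{eq:ledger-universal}) yields Eq.~(\ref{eq:ledger-4term}).

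The ``(f) and only under it'' clause then follows directly. The relation $\sigmaM=\dpJD+\Sigmatot$ holds exactly when $I(M;D\mid M')=0$. Because $\Sigmatot\ge0$, it gives $\dpJD\le\sigmaM$.

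Condition (f) is automatic under the growth clause $M'=(M,R)$. In that case $M$ is a function of $M'$, so $I(M;D\mid M')=0$.

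The step I expect to need the most care is identifying the terminal joint entropy with $S(M',D)$, especially in the erased-ancilla case. I would handle it this way. Since $D$'s content is unchanged and $W$ is only a spectator, the entropy accounting can be done on the $(M,D,A)$-marginal. The erased part of $A$ ends deterministic, so it contributes neither to the joint entropy nor to any mutual information. The $-S(A)$ deduction in $\sigmaM$ then exactly matches the $+S(A)$ in the initial entropy, for absorbed and erased ancillas alike. A mixed case, with some ancillas absorbed and some erased, is handled by splitting $A$ into its absorbed and erased parts; both parts are independent of $(M,D)$ and of each other, so the same cancellation goes through.
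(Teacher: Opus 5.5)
Your proposal is correct and follows the paper's proof essentially step for step. You cancel the bath term, split the initial $S(M,D,A)$ using the joint independence of $A$, identify the terminal system with $(M',D)$ via the totality convention, use $S(D')=S(D)$ from the read-only clause, and get the four-term form from the two chain-rule expansions of $I((M,M');D)$. Your treatment of erased ancillas (deterministic terminal state, zero entropy) only spells out what the paper leaves implicit in that convention.
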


\begin{remark}[The role of read-only $D$]
\label{rem:read-only}
$S(D')=S(D)$ in the proof is the direct dividend of the read-only
clause. Allowing erasure or rewriting of $D$ would inject a
$\Delta S(D)$ term and open the loophole of lowering the apparent
$\sigmaM$ by consuming the record. The stronger content clause
(``identical in content'', not merely marginal-preserving)
guarantees in addition that the $\dpJD$ of
Lemma~\ref{lem:updateDP} and the $\dpJD$ here denote correlation
with \emph{the same random variable} $D$; an implementation that
preserved only the marginal while swapping $D$ for fresh uniform
bits would satisfy $S(D')=S(D)$ and still break the composition
theorem, and is excluded by type.
\end{remark}

Along update sequences the universal form telescopes exactly:
$\Sigma_{\mathrm{search}}^{(n)}
=[\,I(M_n;D)-I(M_0;D)\,]+\sum_k\Sigmatot^{(k)}$. The signed
increments $\Delta I$ cancel stage by stage, whereas the gross
acquisitions $\dpJD$ in general do not. This telescoping underlies the
cumulative bound in Main Theorem~\ref{thm:main-II}(v). The credit
$I(M;D\mid M')$ is the record correlation \emph{not} inherited by
$M'$. A non-growth update can recover this credit without heat by
reversible uncomputation against $D$ (Bennett's page recovery).
This recovery is what makes the per-update bound fail without~(f)
(route B4 of Sec.~\ref{sec:regime}).

\subsection{Main Theorem II}
\label{sec:ledger-main}

The remaining result is the per-task residual decomposition.

\begin{lemma}[Per-task decomposition of the update gap]
\label{lem:decomp}
Let $M'=f(M,D,A)$ be an $M$-local update and $\tau$ a $\flatstar$
task with both states (F5$'$). Then the \emph{equality}
\begin{equation}
\Delta\Gap_\tau=\kT\bigl[\,\dpJD-g_a(\tau)-g_b(\tau)-g_c(\tau)
\,\bigr]
\label{eq:decomp}
\end{equation}
holds, with the three nonnegative residual accounting items
(hereafter \emph{subjects})
\begin{align}
g_a(\tau)&:=I(M;X_\tau\mid Y_\tau,M')
&&\text{(\emph{forgetting}),}
\nonumber\\
g_b(\tau)&:=I(M';D\mid X_\tau,Y_\tau,M)
&&\text{(\emph{waste}),}
\label{eq:subjects}\\
g_c(\tau)&:=I(M';Y_\tau\mid M)
&&\text{(\emph{$Y$-contamination}).}
\nonumber
\end{align}
Proof in Appendix~\ref{app:equality}. The subjects read: task
information of $M$ not inherited by $M'$; correlation with $D$ held
in excess of $(X_\tau,Y_\tau,M)$; and the re-purchase of readable
variables that the blind side holds for free.
\end{lemma}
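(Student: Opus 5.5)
The plan is to reduce the statement to a pure information-theoretic identity by applying the $\flatstar$ extraction identity (Lemma~\ref{lem:flat-identity}) to both endpoints of the update, and then to expand the resulting conditional mutual informations with the chain rule. Write $X:=X_\tau$ and $Y:=Y_\tau$. Both states are (F5$'$) by hypothesis, so Lemma~\ref{lem:flat-identity} applies to each. The blind branch is unchanged by the update (Lemma~\ref{lem:update-consequences}(1)) and cancels, giving
\[
\Delta\Gap_\tau=\kT\bigl[\,I(M';X\mid Y)-I(M;X\mid Y)\,\bigr].
\]
It therefore suffices to show
\[
I(M';X\mid Y)-I(M;X\mid Y)=\dpJD-g_a-g_b-g_c .
\]

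First I expand $I(M,M';X\mid Y)$ in two orders:
\[
I(M,M';X\mid Y)=I(M';X\mid Y)+I(M;X\mid Y,M')=I(M;X\mid Y)+I(M';X\mid M,Y).
\]
Subtracting the two expansions gives
\[
I(M';X\mid Y)-I(M;X\mid Y)=I(M';X\mid M,Y)-g_a(\tau).
\]
This isolates the forgetting subject, and what remains is to express the conditional acquisition $I(M';X\mid M,Y)$ through $\dpJD$.

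Second I expand $I(M';D,X,Y\mid M)$ in two ways. The first is
\[
I(M';D,X,Y\mid M)=\dpJD+I(M';X,Y\mid M,D).
\]
The second, conditioning successively on $Y$, then $X$, then $D$, is
\[
I(M';D,X,Y\mid M)=g_c(\tau)+I(M';X\mid M,Y)+g_b(\tau).
\]
If I establish $I(M';X,Y\mid M,D)=0$, equating the two expansions gives
\[
I(M';X\mid M,Y)=\dpJD-g_b-g_c .
\]
Substituting this into the first step yields Eq.~\eqref{eq:decomp}. Nonnegativity of the three subjects is immediate, since each is a conditional mutual information.

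The main obstacle is the vanishing term $I(M';X,Y\mid M,D)=0$, i.e., the Markov chain $(X,Y)\to(M,D)\to M'$ holding \emph{conditionally} on $M$. I would argue it exactly as in the proof of Main Theorem~\ref{thm:main-I}(iv) and Lemma~\ref{lem:update-consequences}(2). The kernel is $M'=f(M,D,A)$, a conditional law $K(m'\mid m,d,a)$. The ancilla $A$ is jointly independent of $(W,M,D)$ by Definition~\ref{def:update}(ii). Hence, given $(M,D)=(m,d)$, the law of $M'$ is $\sum_a p(a)K(\cdot\mid m,d,a)$, which does not depend on $(x,y)$. I would check carefully that joint, not merely marginal, independence is what licenses this conditioning step; the key-composition counterexample following Definition~\ref{def:update} shows the marginal version fails. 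I would also note that $D$ lies outside $W_\tau$, so $(X,Y)$ exhausts the world registers of $\tau$.
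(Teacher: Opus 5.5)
Your proof is correct and matches the paper's argument: the extraction identity at both (F5$'$) states, the two expansions of $I((M,M');X\mid Y)$ to isolate $g_a$, and chain-rule expansions closed by the Markov property that follows from joint ancilla independence. The only difference is cosmetic. You merge the paper's steps (2) and (3) into a single expansion of $I(M';(D,X,Y)\mid M)$, so your one condition $I(M';X,Y\mid M,D)=0$ is exactly the conjunction of the paper's two Markov properties, $I(M';X\mid D,M,Y)=0$ and $I(M';Y\mid M,D)=0$.
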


\begin{maintheorem}[Capitalization ledger]
\label{thm:main-II}
Let the setting be that of Secs.~\ref{sec:setting}
and~\ref{sec:value}, with implementations as in
Definition~\ref{def:implementation} and the ledger $\sigmaM$ of
Definition~\ref{def:sigmaM}.
\begin{enumerate}
\item[(i)] (\emph{$\flatstar$ extraction identity.}) For every
$\flatstar$ task and every $b\ge0$,
$\Gap(M;\tau,b)=\kT\,I(M;X_\tau\mid Y_\tau)$
(Lemma~\ref{lem:flat-identity}); hence
$\V=\kT\,\mathbb{E}_\tau[I(M;X_\tau\mid Y_\tau)]$ on
$\flatstar$-supported $T$ (Corollary~\ref{cor:v-identity}).
\item[(ii)] (\emph{Universal ledger identity.}) For every $M$-local
update and every implementation,
$\sigmaM=\Delta I(M;D)+\Sigmatot$; equivalently
$\sigmaM+I(M;D\mid M')=\dpJD+\Sigmatot$
(Lemma~\ref{lem:ledger-identity}).
\item[(iii)] (\emph{Upper bound.}) If the support of $T$ is
$\flatstar$, the update is $M$-local and (F5$'$)-stable, the
condition (f) holds, and $\sigmaM>0$, then
\begin{equation}
\etacap=\frac{\Delta\V}{\kT\,\sigmaM}\ \le\ 1 .
\label{eq:etacap-bound}
\end{equation}
\item[(iv)] (\emph{Equality conditions.}) Under the hypotheses of
(iii), the following are equivalent:
\begin{equation*}
\begin{gathered}
\etacap=1 \iff \\
\begin{cases}
\text{(a) no forgetting: } g_a(\tau)=0 \text{ $T$-a.s.,}\\
\text{(b) no waste: } g_b(\tau)=0 \text{ $T$-a.s.,}\\
\text{(c) no $Y$-contamination: } g_c(\tau)=0 \text{ $T$-a.s.,}\\
\text{(e) reversible implementation: } \Sigmatot=0 .
\end{cases}
\end{gathered}
\end{equation*}
Conditions (a)--(c) jointly assert full task-effectiveness of the
acquired correlation and are properties of the logical
specification; (e) alone is a property of the implementation. (The
labeling gap is deliberate: a candidate condition (d)---%
$T$-uniformity of the supply---is not independent, being absorbed by
the $T$-almost-sure quantification of (a)--(c).)
Without (f) the general equality condition is
$\mathbb{E}_\tau[g_a+g_b+g_c]+\Sigmatot=I(M;D\mid M')$---the total
breakage offsetting the recycling credit; (f) shuts this offsetting
flow and purifies the equality conditions.
\item[(v)] (\emph{Blank-start cumulative bound; (f) not needed,
(F5$'$)-stability not needed.})
Let $M_0$ be blank (deterministic, independent of all registers),
let every task in the support of $T$ be a flat task
(Definition~\ref{def:flattask} of
Sec.~\ref{sec:retention}, used here by forward reference---the
task-level $y$-clause of Lemma~\ref{lem:flattask}; equivalently,
$\flatstar$ with respect to the blank $M_0$---no (F5$'$) is
assumed at intermediate or terminal states), and let
$M_0\to\dots\to M_n$ be any sequence of $M$-local updates (growth,
non-growth, or (f)-violating) with implementations as in
Definition~\ref{def:implementation} and cumulative ledger
$\Sigma_{\mathrm{search}}=\sum_k\sigmaM^{(k)}>0$. Then
\begin{equation}
\eta_{\mathrm{cum}}:=\frac{\V(M_n;T,b)-\V(M_0;T,b)}
{\kT\,\Sigma_{\mathrm{search}}}\ \le\ 1 .
\label{eq:cumulative}
\end{equation}
\item[(vi)] (\emph{Regime map.}) Outside the hypotheses of (iii)
the bound (\ref{eq:etacap-bound}) fails along three identified
routes, each realized by an explicit device
(Appendix~\ref{app:devices}): substituting the denominator account
$\sigmaM\to\Sigmatot$ admits no bound
(device~E: $\Sigmatot=0$, $\Delta\V=\kT\ln2$); gate tasks under a
finite budget give $\etacap\gg1$, unboundedly---on device~G,
$\etacap\to\infty$ as $\beta F_{\mathrm{cart}}\to\infty$ at fixed
key length, horizon, and budget within partial suppression
($K_{\mathrm{hor}}2^{-n}e^{\beta b}<1$,
Appendix~\ref{app:fixedI}, B1), using only that premise and the
finiteness and uniformity of the imported bookkeeping constant
(Sec.~\ref{sec:regime}); and violating (f) gives
$\etacap>1$ at $\sigmaM>0$ and divergent families as the excess
tends to zero (device~CX: $\etacap=2$). Together with the
uncapitalized-dissipation valley $\etacap=0$ (device~LB$+$diss,
common to both accounts), these organize into the regime map of
Sec.~\ref{sec:regime} (Table~\ref{tab:regime}).
\end{enumerate}
\end{maintheorem}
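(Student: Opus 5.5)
Parts (i) and (ii) are restatements of Lemma~\ref{lem:flat-identity}, Corollary~\ref{cor:v-identity} and Lemma~\ref{lem:ledger-identity}, so the work lies in (iii)--(vi). For (iii)--(iv) I would average the per-task decomposition of Lemma~\ref{lem:decomp} over $T$. By Lemma~\ref{lem:update-consequences}(1) the blind branch cancels, so $\Delta\V=\mathbb{E}_\tau\Delta\Gap_\tau$. Since $\dpJD$ does not depend on $\tau$,
\begin{equation*}
\Delta\V=\kT\bigl[\dpJD-\mathbb{E}_\tau(g_a+g_b+g_c)\bigr].
\end{equation*}
Condition (f) together with Lemma~\ref{lem:ledger-identity} gives $\kT\sigmaM=\kT\,\dpJD+\kT\,\Sigmatot$. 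Subtracting yields
\begin{equation*}
\kT\sigmaM-\Delta\V=\kT\bigl[\mathbb{E}_\tau(g_a+g_b+g_c)+\Sigmatot\bigr]\ge0,
\end{equation*}
because each subject is a conditional mutual information and $\Sigmatot\ge0$. Dividing by $\kT\sigmaM>0$ gives $\etacap\le1$. Equality holds exactly when this sum of nonnegative terms vanishes, that is, when $g_a=g_b=g_c=0$ $T$-a.s.\ and $\Sigmatot=0$. Without (f) the same algebra gives the deficit $\mathbb{E}(g_a+g_b+g_c)+\Sigmatot-I(M;D\mid M')$, which is the stated general equality condition. Lemma~\ref{lem:decomp} needs (F5$'$) at both states, and the (F5$'$)-stability hypothesis supplies exactly that.

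For (v) I would avoid per-update decompositions and use telescoping. Summing Lemma~\ref{lem:ledger-identity} over stages gives
\begin{equation*}
\Sigma_{\mathrm{search}}=I(M_n;D)-I(M_0;D)+\sum_k\Sigmatot^{(k)}\ge I(M_n;D),
\end{equation*}
since $M_0$ is blank and so $I(M_0;D)=0$. On the value side, $\V(M_0)$ is computed with a blank memory, where the informed and blind branches coincide, so $\V(M_0)=0$. For the terminal state I would invoke the every-state upper bound noted after Lemma~\ref{lem:flat-identity}: the informed converse, combined with blind attainment from the task-level $y$-clause of a flat task, gives $\Gap(M_n;\tau,b)\le\kT\,I(M_n;X_\tau\mid Y_\tau)$ without (F5$'$). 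It remains to show $I(M_n;X_\tau\mid Y_\tau)\le I(M_n;D)$. Here $M_n$ is a function of $(M_0,D,\text{ancillas})$ with ancillas jointly fresh and $M_0$ deterministic, so $(X,Y)\to D\to M_n$ is Markov. Then
\begin{equation*}
I(M_n;X\mid Y)\le I(M_n;X,Y)\le I(M_n;D).
\end{equation*}
The first inequality uses $I(M_n;X,Y)=I(M_n;Y)+I(M_n;X\mid Y)$, and the second is data processing. This gives $\eta_{\mathrm{cum}}\le1$.

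Two points need checking. First, that ancillas introduced at later stages remain jointly independent of $(W,D)$ and the earlier ancillas, which is the per-update freshness clause. Second, that correlations created through earlier stages pass only through $D$, which Lemma~\ref{lem:exo-invariance}-type reasoning supports. I expect this Markov-chain verification across non-growth and (f)-violating steps to be the main obstacle, and I would first write $M_n=F(D,A_1,\dots,A_n)$ explicitly.

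For (vi) the proof is by exhibition, using the devices of Appendix~\ref{app:devices}. Device~E is a reversible controlled-copy into an absorbed blank page: $\Sigmatot=0$ while $\Delta\V=\kT\ln2$ by Corollary~\ref{cor:v-identity}, so the $\Sigmatot$ denominator is unbounded. Device~CX is a non-growth update that recycles $I(M;D\mid M')$; one computes $\sigmaM$ from Lemma~\ref{lem:ledger-identity} and $\Delta\V$ from the flat identity, obtaining the ratio $2$, and tuning the excess gives divergence. Device~G is the hard one. It rests on the imported partial-suppression bound, under $K_{\mathrm{hor}}2^{-n}e^{\beta b}<1$: there the blind branch is capped uniformly in $F_{\mathrm{cart}}$, while the informed branch grows linearly in $F_{\mathrm{cart}}$ and $\sigmaM$ stays fixed at $n\ln2$. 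Finally, LB$+$diss adds dissipation to an LB step so that $\Delta\V=0<\kT\sigmaM$. These cells assemble into Table~\ref{tab:regime}.
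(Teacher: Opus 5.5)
This is essentially the paper's proof. Parts (iii)--(iv) are obtained by $T$-averaging Lemma~\ref{lem:decomp} against the (f)-form or the four-term form of Lemma~\ref{lem:ledger-identity}. The paper derives (iii) from the acquisition cap of Lemma~\ref{lem:updateDP}, which is just a truncation of that decomposition. It also justifies the $\tau$-independence of $\dpJD$, which you assert without reason, by draw exogeneity and Lemma~\ref{lem:exo-invariance}. Part (v) follows, as you say, from telescoping, the flat-task bound of Lemma~\ref{lem:flattask}, and $M_n\perp W_\tau\mid D$. The obstacle you anticipate there closes by a one-line induction: $M_{k-1}\perp W_\tau\mid D$ and $A_k\perp(W_\tau,M_{k-1},D)$ give $(M_{k-1},A_k)\perp W_\tau\mid D$, hence $M_k\perp W_\tau\mid D$.

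One correction in (vi): under partial suppression the blind cap is $c\,F_{\mathrm{cart}}+b+C_G\,\kT$ with $c=K_{\mathrm{hor}}2^{-n}e^{\beta b}<1$. This cap grows with $F_{\mathrm{cart}}$; it is not bounded uniformly. So $\etacap\to\infty$ comes from the slope gap $1-c>0$: the gap is at least $(1-c)F_{\mathrm{cart}}-b-2C_G\,\kT$ against a fixed $\sigmaM=n\ln2$. It also relies on the $F_{\mathrm{cart}}$-uniformity of $C_G$.
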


Proofs are in Appendix~\ref{app:equality}; the devices are in
Appendix~\ref{app:devices}. They include the achievement of equality
in (iv) (device~E) and the systematic non-achievements that ignite
the subjects: (b), (c), and (e) each in isolation, and (a) jointly
with the conversion entry (c) (devices W, F, Y, LB$+$diss, E$+$ex).

Superposing Lemma~\ref{lem:decomp} ($T$-averaged) on
Lemma~\ref{lem:ledger-identity} yields the complete accounting
decomposition of the search ledger:
\begin{multline}
\sigmaM+I(M;D\mid M')
=\frac{\Delta\V}{\kT}\\
{}+\mathbb{E}_{\tau}\bigl[g_a(\tau)+g_b(\tau)+g_c(\tau)\bigr]
+\Sigmatot .
\label{eq:full-ledger}
\end{multline}
The five-term equality states that the search ledger plus recycling
credit equals the capitalized part, plus the three subjects of
acquisition breakage, plus pure dissipation. Under (f), the credit vanishes, and
Eq.~(\ref{eq:full-ledger}) reduces to the four-term form
$\sigmaM=\Delta\V/\kT+\mathbb{E}_\tau[g_a+g_b+g_c]+\Sigmatot$.
\emph{Full capitalization of the search ledger} means precisely
that, under (f), the second and third terms vanish simultaneously.
The equality conditions of part (iv) then close in the language of
ledger subjects.

\begin{remark}[Honesty box: where the content sits]
\label{rem:ledger-honesty}
The inequality (iii) is a composition of two imported moves: a
Sagawa--Ueda-type step (value $\le\kT\times$ information;
Lemma~\ref{lem:flat-identity} plus Lemma~\ref{lem:updateDP},
cf.~Ref.~\cite{Sagawa2010Generalized}) and a Goldt--Seifert-type
step (information $\le$ subsystem ledger;
Lemma~\ref{lem:ledger-identity} under (f),
cf.~Ref.~\cite{Goldt2017PRL}); no novelty is claimed for the
inequality itself, and the universal identity is attributed to
Refs.~\cite{Horowitz2014,Hartich2014}. Likewise the algebra of
Lemma~\ref{lem:decomp} is three chain-rule identities and two
Markov properties. The content of this section sits elsewhere: in
the domain of validity of the composition and its boundary
(Sec.~\ref{sec:regime}); in the operational choice of the
denominator account (Remark~\ref{rem:ledger-choice}); in the
identification of condition (f) and of the recycling credit; in the
equality conditions (iv), whose decomposition applies not to an
arbitrary information quantity but to increments of the
operationally anchored $\V$, and whose subjects are each ignited
by an explicit device---(b), (c), (e) in isolation, (a) jointly
with (c), the shared-world conversion effect
(Appendix~\ref{app:devices});
and in the cumulative bound (v). Attributions are tabulated in
Sec.~\ref{sec:related}.
\end{remark}

\section{Regime map}
\label{sec:regime}

Main Theorem~\ref{thm:main-II}(iii) is a regime theorem. Each way of
leaving its hypotheses is a signature, not an artifact. We trace this
boundary with minimal devices; their full definitions and numerical
verification appear in Appendix~\ref{app:devices}. The resulting map
is assembled in Table~\ref{tab:regime}.

\emph{(B1) Breaking the account: with $\Sigmatot$ in the
denominator, no bound exists.}---Device~E is a reversible copy. It
comprises a single uniform bit $X_1$ as world, $D$ as its perfect
copy, and blank $M_0$. The update is a controlled-copy of $D$ into
fresh memory, implemented as a deterministic permutation on
degenerate registers. Thus $\Sigmatot=0$, whereas
$\Delta\V=\kT\ln2$ (Lemma~\ref{lem:flat-identity}) and
$\Delta I(M;D)=\ln2$. Hence
for every $C>0$ the putative bound
``$\Delta\V\le C\,\kT\,\Sigmatot$'' is false, and the excess
variant E$+$ex gives a divergent family
$\Delta\V/(\kT\,\Sigmatot)=\ln2/s\to\infty$ as $s\to0^{+}$.
Correlation can be created at zero \emph{marginal entropy
production}. The Landauer fee sits on the erasure side, while the
process still consumes one blank-memory unit charged by $\sigmaM$.
A total-entropy-production denominator therefore supports no
capitalization bound. Under the $\sigmaM$ account, the same device
is billed $\sigmaM=\ln2$ for its blank page and achieves $\etacap=1$
exactly, saturating the equality conditions of Main
Theorem~\ref{thm:main-II}(iv). This contrast motivates
Definition~\ref{def:sigmaM}.

\emph{(B2) Breaking the task class: gates plus finite budget give
$\etacap\gg1$ (enablement).}---Device~G violates (F2). It has a key
register $B$ of $n$ uniform bits, a tamper-proof pass/fail gate that
releases a prize $F_{\mathrm{cart}}$ when opened, a horizon
$K_{\mathrm{hor}}$, and a per-run budget $b$.
Appendix~\ref{app:fixedI} states the imported bounds
self-containedly. With $D$ a perfect copy of $B$, $M_0$ blank, and
the update a reversible copy ($\sigmaM=n\ln2$), the informed side
holds the key and collects $F_{\mathrm{cart}}$ by a zero-work
conditional swap. It draws no external work and therefore runs
already at $b=0$. The blind side is capped by the imported ledger bound
$\mathbb{E}[W_{\mathrm{ext}}]\le
F_{\mathrm{cart}}\min(1,K_{\mathrm{hor}}2^{-n}e^{\beta b})
+b+C_G\,\kT$, where $C_G$ denotes the bookkeeping remainder of the
imported bounds and is assumed finite and uniform in the device parameters.
Its proven numerical upper bound is not available from the
companion, so no finite numerical claim below rests on its value.
Consider partial suppression,
$K_{\mathrm{hor}}2^{-n}e^{\beta b}<1$, the premise of
Appendix~\ref{app:fixedI}, B1. It turns the ledger cap's
$\min(1,\cdot)$ into a slope strictly below one. Outside it, the
blind side also collects the prize's leading term, and no
unboundedness holds. Under this premise, $\beta\,\Delta\V\ge
\beta F_{\mathrm{cart}}(1-K_{\mathrm{hor}}2^{-n}e^{\beta b})
-\beta b-C_G$. Moreover, $\etacap\to\infty$ as
$\beta F_{\mathrm{cart}}\to\infty$ at fixed $(n,K_{\mathrm{hor}},b)$
satisfying that premise. The efficiency is therefore unbounded at
fixed information content. This conclusion needs only partial
suppression and the finiteness and uniformity of $C_G$.
(As an illustration, not a theorem: at $(n,\beta F_{\mathrm{cart}},
\beta b,K_{\mathrm{hor}})=(10,100,0,3)$ the arithmetic reads
$\beta\,\Delta\V\ge 99.7-C_G$, which gives $\etacap\ge14.2$ under
the script convention $C_G=1.0$ of
Appendix~\ref{app:verification}. This convention is equivalent to
assuming $C_G\le1.28$~nats, not a proven bound.) One nat of
acquired information is worth more than one nat of extractable
work. This is not a hole in the accounting; it is the efficiency
version of the enablement theorem of the companion
framework~\cite{Sudo_MPU} (Appendix~\ref{app:fixedI}). Under a
finite budget, specification information gates a prize that the
budget cannot buy. At $b=\infty$, the premium closes. The blind side
purchases the key state by outright erasure, the gap collapses to
$n\,\kT\ln2$, and $\etacap=1$ exactly on the cartridge family. The
saturation direction is carried by an exact per-realization
cancellation, free of auxiliary hypotheses. A bound
$\etacap\le1$ for \emph{general} gate tasks at $b=\infty$ is not
derived in this paper. The exact
key length at which the blind side is suppressed all the way to its
budget floor is the full-suppression threshold restated in
Appendix~\ref{app:fixedI} (B2), with a transition window of width
$\sim\ln(\beta F_{\mathrm{cart}})$ between the loss of the premium
and full suppression.

The two cashing routes behind this asymmetry deserve names on the
subclass where they are well defined.

\begin{definition}[K1/K2-resolvable subclass]
\label{def:resolvable}
A task $\tau$ is \emph{resolvable} if its world variables decompose
as $W_\tau=(\Theta_\tau,X_\tau)$, where $\Theta_\tau$ is a
\emph{specification variable}---an opaque keyed target
specification in the sense of Ref.~\cite{Sudo_Replacement},
indexing the pass state of a tamper-proof, pass/fail-only gate in
$G_\tau$---and $X_\tau$ is a \emph{state variable} (the working
medium). On this subclass the memory's task correlation decomposes
by the chain rule as
\begin{equation}
I(M;\Theta_\tau,X_\tau)=
\underbrace{I(M;\Theta_\tau)}_{\text{K1}}
+\underbrace{I(M;X_\tau\mid\Theta_\tau)}_{\text{K2}} .
\label{eq:k1k2}
\end{equation}
\end{definition}

K2 correlation is acquired by the reversible-preparation--Landauer
route. Work buys it at the $\kT\ln2$ exchange rate, and the
Sagawa--Ueda-type commutation holds in this
sector~\cite{Sagawa2010Generalized}. K1 correlation is acquired by
the paid-query--guesswork route priced in
Ref.~\cite{Sudo_Replacement}. Within the finite-resource,
single-shot, K1/K2-delimited regime of these papers, work does not
buy it. That route asymmetry is what device~G converts into
$\etacap\gg1$. This statement is always to be read against the
asymptotic setting of resource theory~\cite{Brandao2013Resource},
where interconversion rates become reversible. The asymmetry is a
finite-run, single-shot claim. Beyond its appearance as an imported
name in Table~\ref{tab:types} and Fig.~\ref{fig:accounting}, the
vocabulary K1/K2 is used \emph{only} on the resolvable subclass. A
general task tuple carries no specification/state split. This
delimitation is inherited from Ref.~\cite{Sudo_Replacement}, which
prices only opaque keyed target specifications and asserts nothing
about general structural knowledge.

\emph{(B3) Dissipation without capitalization: the $\etacap=0$
valley.}---Device~LB$+$diss re-implements the update of device~LB
(copying the record of environment bits $Y$) in overwrite form.
Each stage first erases the target cell, releasing heat $\kT\ln2$
to the bath, and then writes the value of $D_k$ reversibly. Thus
$\sigmaM=\ln2>0$ per stage (plus any finite-time excess), whereas
Main Theorem~\ref{thm:main-I}(ii) pins $\Delta\V=0$. This is the
minimal implementation of ``dissipated but not
capitalized''. It is invisible to a raw-acquisition surrogate: the
ratio $\dpJD/\sigmaM$ equals $1$ on both device~E and device~LB$+$diss,
while $\etacap$ separates them as $1$ versus $0$. A caution on
attribution is needed. The numerator of Ref.~\cite{Goldt2017PRL}
is the mutual information with the teacher side, a downstream
quantity. A faithful transplant of such a numerator can also
discriminate through the definitional choice of what counts as the
task-relevant numerator. The difference here is that the
discrimination arrives as a \emph{theorem}: $\etacap=0$ is a
consequence of the extraction identity for the operationally
defined $\V$ (deletion, re-optimization, $T$, $b$), not of a naming
decision. The account also comes with a budget slot and with the
$\etacap>1$ side of the map, which no $\le1$-by-construction ratio
carries. The valley $\etacap=0$ is common to both denominator
accounts. In the $\Sigmatot$ account, it is read on the finite-time
excess variant; the quasistatic limit there is $0/0$.

\emph{(B4) Breaking (f): recycling credit, the drawdown of prepaid
capital.}---Device~CX starts from $M_0=$ a copy of an independent
junk coin $D_J$ (a natural state, one controlled-copy away from
blank). It then performs a non-growth update. First, it uncomputes
the junk against $D_J$ (a matched erasure, heat-free); next, it
writes the record bit $D_1$ into the freed cell; finally, it absorbs
one blank ancilla and writes $D_2$. All stages are deterministic
permutations: $\Sigmatot=0$,
$\sigmaM=\ln2$, $\dpJD=2\ln2$, and the discarded correlation is
$I(M_0;D\mid M_1)=\ln2$, the recycling credit. By
Corollary~\ref{cor:v-identity}, $\Delta\V=2\,\kT\ln2$, so
$\etacap=2>1$. Meanwhile, Lemmas~\ref{lem:updateDP}
and~\ref{lem:decomp} hold intact; only the (f)-form of the ledger
identity breaks. A one-bit variant with excess $s$ gives
$\etacap=\ln2/s\to\infty$ as $s\to0^{+}$. Thus, within
$\flatstar+\sigmaM$, the per-update bound is unbounded without (f).
The physics is the memory-side reverse of Bennett accounting.
$\sigmaM$ charges imported order through the $-S(A)$ deduction, but
order already stored as $M$--$D$ mutual information is outside the
account. An update that draws it down therefore receives a
denominator subsidy. This is not an artificial exploit. Uncomputing
an old note against the record to reuse the page is replay- or
consolidation-type memory management. The per-update excess is the
deferred ledger of an earlier stage. Extending device~CX to the
blank-start two-stage sequence (absorb junk at $\etacap=0$, then
recycle at $\etacap=2$) gives cumulative
$\eta_{\mathrm{cum}}=2\ln2/2\ln2=1$ exactly. The per-update
breakage is a drawdown of the opening balance, and the history's
books are honest. This is Main Theorem~\ref{thm:main-II}(v).

Table~\ref{tab:regime} assembles the map. Vertical failures
(gate$+$finite $b$) are the signature of the K1/budget regime.
Horizontal failures ($\Sigmatot$ account) are the signature of an
accounting that leaves memory resources off the books. The (f)
failures are the signature of drawing down stored record
correlation, whereas the $\etacap=0$ valley is account-independent.
The small-budget failure of the extraction identity outside
(F5$'$) (Remark~\ref{rem:flat-scope}(v)) marks one further stretch
of the same boundary, inside the gate-free world.

\begin{table*}[t]
\caption{\label{tab:regime}%
Regime map of the capitalization efficiency $\etacap$
(Main Theorem~\ref{thm:main-II}(vi)). Rows move through the task
classes, columns through the two readings of the denominator
account. The $\etacap=0$ valley (device LB$+$diss: dissipation
without capitalization) is common to both columns and account
independent (in the $\Sigmatot$ column, read on the finite-time
excess variant; the quasistatic limit is $0/0$). Devices are
defined in Appendix~\ref{app:devices}.}
\begin{ruledtabular}
\begin{tabular}{p{4.2cm}p{6.2cm}p{5.6cm}}
Task class $\backslash$ account &
$\Sigma_{\mathrm{search}}=\sigmaM$ &
$\Sigma_{\mathrm{search}}=\Sigmatot$ \\
\colrule
$\flatstar$, (F5$'$)-stable update, (f) holds &
$\etacap\le1$ [Main Thm.~\ref{thm:main-II}(iii)]; equality
$\iff$ (a)(b)(c)(e) [(iv)] &
unbounded (device E refutes every linear bound; E$+$ex diverges as
$s\to0^{+}$) \\
$\flatstar$, (f) broken (recycling) &
unbounded (device CX: $\etacap=2$; one-bit variant diverges).
Blank-start cumulative $\eta_{\mathrm{cum}}\le1$ recovers
[(v)] &
same \\
gate (K1), finite $b$ &
$\etacap\gg1$ possible (device G: enablement premium) &
same; moreover unbounded (the reversible implementation of device G
has $\Sigmatot=0$) \\
gate (K1), $b=\infty$ &
$\etacap=1$ exactly on the cartridge family (per-realization
cancellation); a general-gate $\le1$ statement is not derived
here &
unbounded \\
\end{tabular}
\end{ruledtabular}
\end{table*}

\section{Value retention and fit--value alignment}
\label{sec:retention}

\subsection{Shift topology and the retention quantities}
\label{sec:retention-defs}

\begin{definition}[Task-distribution shift]
\label{def:shift}
A \emph{shift} is a pair $(T_{\mathrm{train}},T_{\mathrm{shift}})$
of future task distributions. Its topology is fixed in three tiers:
(1)~\emph{support-fixed shift} (default): two distributions on a
common finite support $\mathcal{T}$, with total-variation distance;
the theorems of this section and of Sec.~\ref{sec:accounting} live
here. (2)~\emph{support-extending shift}: when
$\operatorname{supp}T_{\mathrm{shift}}\not\subseteq\mathcal{T}$,
the support is re-taken as
$\mathcal{T}'=\mathcal{T}\cup\operatorname{supp}
T_{\mathrm{shift}}$, and evaluating $\Gap$ on a new task requires
the $\tau_{\mathrm{new}}$ entry of the learning state
$\pi_M$---declaring the support is declaring coupling data, and the
specification is a per-device obligation (the key-redraw family of
Sec.~\ref{sec:gate} carries its specification inside the device
definition). On this tier total variation is uninformative (it
equals $1$ on disjoint point masses), and the theorems of
Sec.~\ref{sec:gate} rest on the declared coupling and imported
bounds instead. (3)~Metric strengthenings (Wasserstein-type) are
declared possible and not adopted.
\end{definition}

Two standing conventions apply. \emph{Support convention}: $\pi_M$ is a
learning state over the whole support ($\mathcal{T}$, or
$\mathcal{T}'$), and the consistency clauses of
Definition~\ref{def:memory}, in particular draw
exogeneity, are imposed on all of it. Exogeneity over the shifted
support is a substantive physical assumption. Specifically,
\emph{the shifted drawing mechanism does not read the training closure either}.
Realization-correlated adversarial shifts are thereby outside scope.
Such a world inspects the realized memory value $m$ and serves the
worst task. Distribution-level adversaries remain inside the
quantifiers: they choose $T_{\mathrm{shift}}$ knowing the design of $M$ and
$T_{\mathrm{train}}$. No collapse witness below uses a
realization-correlated shift. The distribution-level kind is exactly
what the shift witness (device SH) exercises. \emph{Budget convention}:
both arguments of the retention quantities are evaluated at the
\emph{same} $b$. Comparisons across budgets are the separate axis
of Main Theorem~\ref{thm:main-III}(iii). The notation $b=\infty$
abbreviates the protocol class without the drawdown-cap clause.
Each \emph{branch} of the gap then equals the supremum of its
finite-$b$ values (each branch is nondecreasing in $b$). The gap
itself carries no monotonicity in $b$. On $\flatstar$ supports,
$\V$ is $b$-independent (Lemma~\ref{lem:flat-identity}), whereas on
gate tasks, $\V$ at $b=\infty$ can lie strictly below its small-$b$
values (Main Theorem~\ref{thm:main-III}(iii)).

\begin{definition}[Retention gap and retention ratio]
\label{def:retention}
For a learning state $M$, a shift
$(T_{\mathrm{train}},T_{\mathrm{shift}})$, and a budget $b$,
\begin{equation}
\begin{aligned}
\Lgen(M)&:=\V(M;T_{\mathrm{train}},b)-\V(M;T_{\mathrm{shift}},b),\\
\rhogen(M)&:=\frac{\V(M;T_{\mathrm{shift}},b)}
{\V(M;T_{\mathrm{train}},b)} ,
\end{aligned}
\label{eq:Lgen}
\end{equation}
the gap form (loss under shift counted positive; units of work) and
the ratio form (dimensionless retention), the latter defined
\emph{only when} $\V(M;T_{\mathrm{train}},b)>0$.
\end{definition}

$\Lgen$ carries no sign constraint. Capital is $T$-relative
(Remark~\ref{rem:t-relativity}), and the shift may raise the value.
$\rhogen$ is not confined to $[0,1]$. The domain restriction is a
convention, not an emptiness claim. $\V_{\mathrm{train}}=0$ does
not mean ``nothing was learned''. Shift-specialized states with
$\V_{\mathrm{train}}=0$ and $\V_{\mathrm{shift}}>0$ exist
routinely, and the gap form retains their content in full:
$\Lgen=-\V_{\mathrm{shift}}<0$. Only the corner
$\V_{\mathrm{train}}=\V_{\mathrm{shift}}=0$ is genuinely
uninformative. The division of labor is fixed once. \emph{The gap
form is for accounting}. Its unit is work, and the
subject-difference identity of Sec.~\ref{sec:accounting} is
additive in it. \emph{The ratio form is for shape}. The
linear-versus-threshold contrast of Sec.~\ref{sec:gate} and the
orthogonality corollary are stated in it. Neither replaces the
other. $\Lgen$ is insensitive to the scale of
$\V_{\mathrm{train}}$, and $\rhogen$ has no additive
decomposition.

\begin{definition}[Task optimum]
\label{def:taskopt}
$\V^{*}(T,b):=\sup_M \V(M;T,b)$, the supremum over memories and
learning states conforming to Definition~\ref{def:memory}; it
enters only through Proposition~\ref{prop:taskopt}.
\end{definition}

Main Theorem~\ref{thm:main-I}(i) gives three basic properties. (G1)~For
support-fixed shifts,
$\lvert\Lgen(M)\rvert\le\lVert T_{\mathrm{train}}-
T_{\mathrm{shift}}\rVert_{\mathrm{TV}}\cdot
\sup_{\tau\in\mathcal{T}}\Gap(M;\tau,b)$; the constant depends on
$M$ and $b$ and is \emph{not} uniform over families of states.
(G2)~no sign constraint holds for $\Lgen$ (witness: the shift
device of Main Theorem~\ref{thm:main-III}(ii) with the roles of the
two tasks exchanged). (G3)~composition with updates is measured by
$\Lgen^{\mathrm{upd}}:=\Delta\V_{\mathrm{train}}-
\Delta\V_{\mathrm{shift}}=\Lgen(M')-\Lgen(M)$. This is the quantity
computed by the subject-difference identity of Sec.~\ref{sec:accounting}.

\emph{Terminology defense.}---The phrase \emph{thermodynamic
generalization} is occupied by Ref.~\cite{Boyd2025Overfitting},
whose definition is cost-side. It requires that dissipation not
diverge on tests drawn from the \emph{same} source. The quantities of
Definition~\ref{def:retention} differ in three respects. They are
value-side. Their subject is distribution shift, a concept absent
from that framework; budget and access are explicit arguments.
The primary term in this paper is \emph{value retention}, and
``thermodynamic generalization'' appears only as a related-work
distinction (Sec.~\ref{sec:related}). Ref.~\cite{Caraffa2026Dissipative}
uses ``generalization'' for a dissipative-balance steady-state
criterion. This is a formal analogy in which work extraction, task
value, and shift accounting do not appear. The work itself states, ``we
lack fundamental bounds relating computational resources to
learning capacity''; the present quantities are disjoint from that
usage. The \emph{work capacity} of
Ref.~\cite{Fiderer2025WorkCapacity} prices percept--action loops
(active updates). These loops lie outside the update class of
Definition~\ref{def:update} (Sec.~\ref{sec:value-blackwell}). No
bridge from any quantity in this paper to statistical
generalization (test risk, sample complexity) is claimed
(Sec.~\ref{sec:discussion}).

\subsection{Flat tasks: a state-uniform bound and the task
optimum}
\label{sec:retention-flat}

\begin{definition}[Flat task]
\label{def:flattask}
A task $\tau$ is a \emph{flat task} if it satisfies (F1)--(F4) of
Definition~\ref{def:flatstar} together with the \emph{$y$-clause}:
the tuple-level conditional distribution $p_\tau(x_\tau\mid y)$ is,
for each $y$, a point mass or a uniform distribution on a subset,
and $K_\tau\ge3$. The $y$-clause is the task-level part of
(F5$'$); (F5$'$) itself is the joint property that adds the
$(y,m)$-sections.
\end{definition}

\begin{lemma}[Flat-task upper bound: all states, all budgets]
\label{lem:flattask}
If $\tau$ is a flat task, then for \emph{every} memory $M$
conforming to Definition~\ref{def:memory}---including states that
violate (F5$'$)---and every $b\ge0$,
\begin{equation}
\Gap(M;\tau,b)\ \le\ \kT\,I(M;X_\tau\mid Y_\tau) .
\label{eq:flattask-bound}
\end{equation}
Proof in Appendix~\ref{app:equality}: the informed converse of
Lemma~\ref{lem:flat-identity} uses no (F5$'$) and holds for all
states; blind achievability is carried by the $y$-clause, which is
a task property independent of $\pi_M$. The $y$-clause is
essential, not a convenience: without it the blind side cannot
attain its ceiling at $b=0$ while a point-mass-informed state can,
and $\Gap>\kT\,H(X\mid Y)$ becomes possible.
\end{lemma}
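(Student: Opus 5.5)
The plan is to split $\Gap(M;\tau,b)$ into its two branches and bound each separately. The informed branch gets an upper bound valid for every state. The blind branch gets a lower bound through an explicit protocol whose feasibility uses only the $y$-clause. Subtracting the two gives Eq.~(\ref{eq:flattask-bound}), because the $\ln\lvert X_\tau\rvert$ terms cancel and
$H(X_\tau\mid Y_\tau)-H(X_\tau\mid M,Y_\tau)=I(M;X_\tau\mid Y_\tau)$.

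\emph{Step 1 (informed converse).} I will invoke the informed-branch converse from the proof of Lemma~\ref{lem:flat-identity}, which that lemma states uses none of the achievability normalization (F5$'$). It gives
$\sup_{P}\mathbb{E}[W_{\mathrm{ext}}\mid M=m,\tau]\le\kT[\ln\lvert X_\tau\rvert-H(X_\tau\mid Y_\tau,M=m)]$
for every $m$ and every $b$. The bound comes from three ingredients:
\begin{itemize}
\item the conditional chain bound applied to the $M=m$ conditional law;
\item invariance of the non-manipulable marginal $Y_\tau$ under (F2) and mechanism locality, with $M$ frozen by Definition~\ref{def:memory};
\item auxiliary netting under the terminal-balance convention ($\beta$).
\end{itemize}
Averaging over $m\sim p(M)$ gives the informed branch $\le\kT[\ln\lvert X_\tau\rvert-H(X_\tau\mid M,Y_\tau)]$ for all learning states, including those violating (F5$'$). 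Since the bound is only cited, I will check that nothing in it uses the $(y,m)$-sections.

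\emph{Step 2 (blind achievability from the $y$-clause).} By Remark~\ref{rem:blind}, the blind branch depends only on $p_\tau(x,y)$, which is tuple data. So achievability can be established from the $y$-clause alone, independently of $\pi_M$. For each $y$, the $y$-clause makes $p(x\mid y)$ a point mass or uniform on a subset $S_y$. The protocol has three stages:
\begin{enumerate}
\item read $Y$ (static all-read access, (F4));
\item apply a $y$-controlled permutation of $X_\tau$ that maps $S_y$ onto a canonical block of size $\lvert S_y\rvert$, at zero work since energies are degenerate (F1);
\item quasistatically expand the uniform block to the full alphabet, extracting $\kT\ln(\lvert X_\tau\rvert/\lvert S_y\rvert)$ deterministically on every trajectory.
\end{enumerate}
This fits within $K_\tau\ge3$ stages. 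The work is pathwise nonnegative, so the drawdown cap holds at every $b\ge0$, including $b=0$, and no auxiliary is needed. Averaging over $y$ gives the blind branch $\ge\kT[\ln\lvert X_\tau\rvert-H(X_\tau\mid Y_\tau)]$. The nonuniform-atom obstruction discussed after Definition~\ref{def:flatstar} does not arise, because each conditional is flat.

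\emph{Main obstacle.} I expect the main obstacle in Step~2 to be showing that the expansion from a uniform subset can be realized exactly in finitely many stages with \emph{pathwise} (not merely average) nonnegative work inside the imported single-step map class. My first attempt is the Szilard-type construction of Appendix~\ref{app:equality}, because uniformity on $S_y$ makes the extracted work trajectory-independent. A secondary subtlety is confirming that blind optimality needs only the lower bound here, since Eq.~(\ref{eq:flattask-bound}) is a one-sided inequality, so blind converse is not required. The remark in the statement about $\Gap>\kT\,H(X\mid Y)$ without the $y$-clause explains why Step~2 cannot be dropped, and it does not need to be proved for this lemma.
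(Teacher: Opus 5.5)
Your proposal is correct and follows the paper's proof essentially step for step. The paper also cites the (F5$'$)-free informed converse of Lemma~\ref{lem:flat-identity} (conditional chain bound, $Y$-invariance, auxiliary netting), pins the blind side with the $y$-controlled permutation followed by isothermal expansion, which is pathwise nonnegative and hence feasible at every $b\ge0$, and then subtracts; the only cosmetic difference is that you count the read of $Y$ as a separate stage where the paper folds it into the parent set of a two-stage protocol, and either version fits within $K_\tau\ge3$.
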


\begin{proposition}[Task optimum of flat-task supports]
\label{prop:taskopt}
Let the support of $T$ consist of flat tasks and assume either
(i)~the manipulable registers of the tasks in the support are
pairwise disjoint physical registers (with the off-task components
of the optimizing state specified independent), or (ii)~a shared
world in which $p_\tau(w)$ is common to all $\tau$. Then
\begin{equation}
\V^{*}(T,b)=\kT\,\mathbb{E}_{\tau\sim T}
\bigl[H(X_\tau\mid Y_\tau)\bigr] ,
\label{eq:taskopt}
\end{equation}
independent of $b$. Proof in Appendix~\ref{app:equality}; in a
shared world with $\tau$-dependent $p_\tau(w)$ the copy
construction violates draw exogeneity and
Eq.~(\ref{eq:taskopt}) is not claimed, nor is a closed form on
gate-mixed supports.
\end{proposition}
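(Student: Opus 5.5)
The proof has two halves: an upper bound valid for every admissible memory, and an explicit memory that meets it. Since $\V$ is affine in $T$, I would work task by task and then average.

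\emph{Upper bound.} For each flat task $\tau$ in the support and every memory $M$ conforming to Definition~\ref{def:memory}, Lemma~\ref{lem:flattask} gives, for every $b\ge0$,
\begin{equation*}
\Gap(M;\tau,b)\le\kT\,I(M;X_\tau\mid Y_\tau)\le\kT\,H(X_\tau\mid Y_\tau).
\end{equation*}
Averaging over $T$ bounds $\V(M;T,b)$ by $\kT\,\mathbb{E}_\tau[H(X_\tau\mid Y_\tau)]$. Taking the supremum over $M$ gives the ``$\le$'' half of Eq.~(\ref{eq:taskopt}). This half needs no (F5$'$) and no hypothesis (i) or (ii).

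\emph{Achievability.} I would build a \emph{copy memory} $M^{*}$ with $H(X_\tau\mid M^{*},Y_\tau)=0$ for every $\tau$. The idea is that $M^{*}$ holds an exact copy of each task's working medium, and the state is defined so that it is consistent with every $p_\tau$ and with draw exogeneity.
\begin{itemize}
\item Under (i), the manipulable registers are disjoint, so I set $M^{*}=(M^{(\tau)})_{\tau\in\mathcal{T}}$ with $M^{(\tau)}$ a copy of $X_\tau$. The off-task components are specified independent of each other and of the other worlds. The joint law of $M^{*}$ is then the same under every $\tau$, so $\tau\perp M^{*}$ and exogeneity holds.
\item Under (ii), there is one common world law $p(w)$. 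I let $M^{*}$ copy all the $X_\tau$ registers. Its law is again $\tau$-independent.
\end{itemize}
I would take $D$ to be a copy of the same registers, so the training closure is well defined and jointly independent of $\tau$.

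\emph{Checking (F5$'$) and concluding.} For $M^{*}$, the conditional law $p(x_\tau\mid y,m)$ is a point mass. The conditional law $p(x_\tau\mid y)$ is a point mass or uniform by the $y$-clause of flatness. Hence (F5$'$) holds for $(\pi_{M^{*}},p_\tau)$. Lemma~\ref{lem:flat-identity} then gives, for every $b$,
\begin{equation*}
\Gap(M^{*};\tau,b)=\kT\,I(M^{*};X_\tau\mid Y_\tau)=\kT\,H(X_\tau\mid Y_\tau).
\end{equation*}
Averaging over $T$ attains the bound, so Eq.~(\ref{eq:taskopt}) holds as an equality. It is $b$-independent because both sides are.

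\emph{Main obstacle.} The delicate step is the consistency of $M^{*}$ as a learning state, namely world-marginal consistency together with draw exogeneity. If the world is shared but $p_\tau(w)$ depends on $\tau$, then copying $X_\tau$ makes the law of $M^{*}$ depend on $\tau$, and exogeneity fails. That is why the statement excludes this case. So I would verify explicitly, under (i) and under (ii) separately, that the marginal of $(M^{*},D)$ is common to all $\tau$.

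A second point to check under (i) is the remaining components of $M^{*}$. If they were correlated with $Y_\tau$, the value $I(M^{*};X_\tau\mid Y_\tau)$ could change. Since $M^{(\tau)}$ is an exact copy of $X_\tau$, this quantity already equals $H(X_\tau\mid Y_\tau)$ and cannot exceed it, so extra correlated components are harmless. Specifying them independent is still what keeps (F5$'$) and exogeneity clean.
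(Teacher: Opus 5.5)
Your proof is correct and follows the paper's own argument. The ``$\le$'' half comes from the state-uniform flat-task bound (Lemma~\ref{lem:flattask}) together with $I(M;X_\tau\mid Y_\tau)\le H(X_\tau\mid Y_\tau)$, and the ``$\ge$'' half comes from a complete-copy memory $M^{*}$ satisfying (F5$'$), so that Lemma~\ref{lem:flat-identity} holds as an equality, with hypotheses (i)/(ii) used only to make the $M^{*}$-marginal $\tau$-independent.

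The one difference is that the paper takes $D$ to be a trivial register, which removes any need to check the joint law of $(M^{*},D)$. Your copy-valued $D$ also works, provided it is specified as a copy of $M^{*}$ itself in every $p_\tau$, including the off-task components under (i).
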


\subsection{Fit--value alignment}
\label{sec:retention-alignment}

\begin{definition}[Fit--value alignment]
\label{def:alignment}
Let $F=\{f_1,\dots,f_r\}$ be a finite family of candidate $M$-local
updates, $\Phifit$ a fit functional, $T_{\mathrm{eval}}$ an
evaluation distribution, and $b$ a budget. The pair
$(\Phifit,\Delta\V)$ is \emph{aligned} (weak form) on $F$ if
$\operatorname*{arg\,max}_{f\in F}\Phifit(f)\subseteq
\operatorname*{arg\,max}_{f\in F}\Delta\V(f;T_{\mathrm{eval}},b)$;
it exhibits \emph{strict misalignment} if
$\operatorname*{arg\,max}_{f}\Phifit(f)\cap
\operatorname*{arg\,max}_{f}\Delta\V(f;T_{\mathrm{eval}},b)
=\emptyset$---no fit-best candidate is value-best. The collapse
statements below are all in the strict form (a silent fit measure,
constant on $F$, would otherwise be misread as misalignment), and
the positive domain is stated as an equality of values, the
strongest form. Two eligibility clauses: $\Phifit$ is a functional
of the joint law of the training closure $(M,M',D)$ (representatives:
the acquisition $\dpJD$, the world correlation $I(M';W)$), or a
work functional observable in the training run (in
Ref.~\cite{Boyd2022TML} work itself is the training signal);
functionals that read the future distribution $T_{\mathrm{shift}}$
are ineligible as fit measures---they are the value itself. And the
family $F$ carries a \emph{capacity constraint} (writable bits
strictly fewer than the record's bits) in every witness below, so
that full copy is excluded by a resource bound rather than by
fiat: the content of the collapse theorems is that \emph{when
capacity forces a choice, the fit ranking can misguide}---the
faithful counterpart of training constrained to a bounded memory
class in Ref.~\cite{Boyd2022TML}.
\end{definition}

\begin{proposition}[Correspondence proposition: the degenerate
domain, in two layers]
\label{prop:corresp}
Consider the degenerate setting:
($\alpha$)~a single $\flatstar$ task $\tau$, with $D$ the record of
$\tau$'s own manipulable medium (same source; with
$T_{\mathrm{eval}}=\delta_\tau$ there is no separate train/shift
slot, and this clause is the type-correct expression of ``same
source'');
($\beta$)~$D=X_\tau$ as a perfect copy (the record is a faithful
observation of the medium);
($\gamma_{\mathrm{acc}}$)~static all-read, no manipulation blockage;
($\delta$)~$b$ arbitrary;
and let the candidates be $M$-local updates of one and the same
initial state $M$, forming an (F5$'$)-stable family (all post-update
states satisfy (F5$'$); deterministic copy-type updates qualify).
\begin{enumerate}
\item[(A)] (\emph{Conditional exchange rate.}) For every candidate
$f$,
\begin{equation}
\V(M'_f;\delta_\tau,b)
=\kT\,I(M'_f;X_\tau\mid Y_\tau)
=\kT\,I(M'_f;D\mid Y_\tau) ,
\label{eq:corresp}
\end{equation}
independently of $b$: the value coincides with the
\emph{side-information-adjusted} record fit
$\Phifit^{Y}(f):=I(M'_f;D\mid Y_\tau)$ at the exchange rate $\kT$,
and alignment holds as an equality of values. \emph{No independence
between $X_\tau$ and $Y_\tau$ is assumed.}
\item[(B)] (\emph{Raw record stock, under joint neutrality.}) If in
addition ($\gamma$)~\emph{joint side-information neutrality} holds,
$(M,D)\perp Y_\tau$, then every candidate satisfies
$(M'_f,D)\perp Y_\tau$ (the fresh ancilla is jointly independent,
and $M'_f$ is a function of $(M,D,A)$), and
\begin{equation}
\V(M'_f;\delta_\tau,b)=\kT\,I(M'_f;D)=\kT\,\JD(M'_f) .
\label{eq:corresp-raw}
\end{equation}
Under a blank start ($M$ deterministic), neutrality reduces to
$D\perp Y_\tau$ (automatic when $X_\tau\perp Y_\tau$ or
$Y_\tau=\emptyset$), $\JD(M)=0$, and
$\Delta\V_f=\kT\,\dpJD(f)$: the raw acquisition ranking is the
value ranking.
\end{enumerate}
Proof in Appendix~\ref{app:equality}.
\end{proposition}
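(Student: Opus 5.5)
The plan is to reduce everything to the $\flatstar$ extraction identity (Lemma~\ref{lem:flat-identity}) and then convert conditional mutual informations using ($\beta$) and ($\gamma$). For (A), fix a candidate $f$. By hypothesis the post-update state $M'_f$ is (F5$'$), and $\tau$ is $\flatstar$ with respect to it. With $T_{\mathrm{eval}}=\delta_\tau$, Corollary~\ref{cor:v-identity} (equivalently Main Theorem~\ref{thm:main-I}(i)(e) collapsing the $T$-average) gives $\V(M'_f;\delta_\tau,b)=\Gap(M'_f;\tau,b)=\kT\,I(M'_f;X_\tau\mid Y_\tau)$ for every $b\ge0$. This disposes of the budget independence ($\delta$). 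Clause ($\gamma_{\mathrm{acc}}$) is just (F4), and it ensures that the identity is applied without any read restriction on $Y_\tau$.

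The second equality in (A) is where ($\beta$) enters. Because $D=X_\tau$ as a perfect copy, $D$ and $X_\tau$ generate the same $\sigma$-algebra on the bundle space of the probability-space convention. Conditional mutual information depends only on the generated $\sigma$-algebras, just as in the bijective-recoding argument of Main Theorem~\ref{thm:main-I}(i)(d). Hence $I(M'_f;X_\tau\mid Y_\tau)=I(M'_f;D\mid Y_\tau)$. No independence of $X_\tau$ and $Y_\tau$ is used anywhere: the conditioning on $Y_\tau$ is carried along unchanged. Since the equality holds for every $f$, the argmax sets of $\Phifit^{Y}$ and $\Delta\V$ coincide, because both are computed from the same initial $M$. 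Alignment therefore holds in its strongest form.

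For (B), I first propagate neutrality. $M'_f=f(M,D,A)$ with $A$ jointly independent of $(W,M,D)$, so $A\perp(M,D,Y_\tau)$. Combined with $(M,D)\perp Y_\tau$, this gives $(M,D,A)\perp Y_\tau$, and therefore $(M'_f,D)\perp Y_\tau$. Then I use the chain rule twice:
\begin{equation*}
I(M'_f;D\mid Y_\tau)=I(M'_f,Y_\tau;D)-I(Y_\tau;D)=I(M'_f;D)+I(Y_\tau;D\mid M'_f)-I(Y_\tau;D).
\end{equation*}
Joint independence kills both $Y$-terms, because $I(Y_\tau;M'_f,D)=0$ implies $I(Y_\tau;D)=0$ and $I(Y_\tau;D\mid M'_f)=0$. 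This yields $\kT\,I(M'_f;D)=\kT\,\JD(M'_f)$.

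For the blank-start reduction, a deterministic $M$ makes $(M,D)\perp Y_\tau$ equivalent to $D\perp Y_\tau$. This holds automatically when $X_\tau\perp Y_\tau$, by ($\beta$), or when $Y_\tau=\emptyset$. In that case $\JD(M)=0$. Moreover, $\dpJD=I(M'_f;D\mid M)=I(M'_f;D)$ for deterministic $M$. Hence $\Delta\V_f=\kT\,\dpJD(f)$, where $\V(M;\delta_\tau,b)=0$ follows from the identity applied to the blank state. A blank state is (F5$'$) exactly when the $y$-clause holds, which it does since $\tau$ is $\flatstar$ relative to the initial state.

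I expect the main obstacle to be bookkeeping rather than analysis. The (F5$'$) hypothesis must be available at both the initial and the post-update states so that the extraction identity applies as an equality and not merely as the one-sided bound of Lemma~\ref{lem:flattask}. In addition, the propagation of neutrality must use the joint independence of the ancilla. Marginal independence would fail, by the key-composition counterexample after Definition~\ref{def:update}, so I would state that step explicitly.
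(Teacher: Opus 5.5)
Your proposal is correct and follows the paper's proof essentially step for step. For (A), you use the extraction identity at each (F5$'$) post-update state together with the identical content of $D$ and $X_\tau$. For (B), you propagate $(M,D)\perp Y_\tau$ through the jointly independent ancilla to obtain $(M'_f,D)\perp Y_\tau$. Your explicit chain-rule expansion and your justification of $\V(M;\delta_\tau,b)=0$ at the blank start spell out steps the paper leaves implicit.
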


\begin{proposition}[Boundary witness: $X\perp Y$ does not suffice
for the raw layer]
\label{prop:corresp-boundary}
Weakening the hypothesis ($\gamma$) of
Proposition~\ref{prop:corresp}(B) to $X_\tau\perp Y_\tau$ makes
Eq.~\eqref{eq:corresp-raw} false. Witness: $X=(X_1,X_2)$ two uniform
bits (a single manipulable register), $Y$ one readable bit with
$X\perp Y$, $D$ a perfect copy of $X$, and $M$ a content copy of
$Y$---a world-correlated initial memory, which no admissibility
clause excludes; candidates $f_1=\mathbf{1}\{X=(0,0)\}$ (reading $D$
only) and $f_2=X_1\oplus M$ (a one-time pad). All hypotheses of
part (A) hold ($M$-local, (F5$'$)-stable, ($\alpha$)($\beta$)%
($\gamma_{\mathrm{acc}}$)($\delta$)), yet
\begin{equation*}
\begin{aligned}
I(M'_2;D) &= 0<\ln2=I(M'_2;D\mid Y),\\
I(M'_1;D) &= I(M'_1;D\mid Y)=h(\tfrac14) ,
\end{aligned}
\end{equation*}
with $h$ the binary entropy in nats, $h(\tfrac14)<\ln2$: the raw
record-stock argmax $\{f_1\}$ and the value argmax $\{f_2\}$ are
disjoint---strict misalignment strictly inside the domain of part
(A), whose conditional exchange rate remains exact on both
candidates. The driving mechanism is externally sourced
$Y$-correlation held by $M$, which cannot be manufactured from $D$;
joint neutrality is precisely the condition that closes this
channel. The witness is kept as a negative control in the numerical
suite (Appendix~\ref{app:devices}).
\end{proposition}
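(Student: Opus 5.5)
The plan is to verify the witness by direct computation, and to obtain the values from Proposition~\ref{prop:corresp}(A) rather than recomputing them. That part needs no neutrality, so the only new content is the pair of raw record stocks.

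First, fix the registers and check the hypotheses of part (A).
\begin{itemize}
\item[] \emph{World and record.} $W_\tau=(X,Y)$ with $X=(X_1,X_2)$ uniform on four states and $Y$ a uniform bit independent of $X$. Take $E\equiv0$, $G=\emptyset$, $\mathrm{Man}=\{X\}\cup$ ancillas, static all-read access, and $K_\tau\ge3$. This gives (F1)--(F4) together with ($\alpha$), ($\gamma_{\mathrm{acc}}$) and ($\delta$). Setting $D:=X$ gives ($\beta$).
\item[] \emph{Initial memory.} $M:=Y$ is admissible: Definition~\ref{def:memory} constrains only energetics, manipulability and gate access, not correlation with $W$. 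Draw exogeneity is trivial because the support is a single task.
\item[] \emph{Candidates.} Both updates are deterministic, non-growth, one-bit overwrites of the memory cell, so each is an $M$-local kernel of $(m,d)$. Under the totality convention, $M'_1=\mathbf{1}\{D=(0,0)\}$ and $M'_2=D_1\oplus M$. Writing one bit against a two-bit record realizes the capacity constraint of Definition~\ref{def:alignment}.
\item[] \emph{(F5$'$) before and after.} Pre-update, $p(x\mid y)$ and $p(x\mid y,m)$ are uniform on four states, since $M=Y$ carries nothing about $X$. For $f_1$, $p(x\mid y,m'_1)$ is the point mass $(0,0)$ when $m'_1=1$ and uniform on the other three states when $m'_1=0$. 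For $f_2$, $(y,m'_2)$ fixes $x_1=m'_2\oplus y$ and leaves $x_2$ uniform on two states. All are flat, so the family is (F5$'$)-stable.
\end{itemize}

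Next, compute the four informations. Since $M'_1$ is a function of $D$ and $D\perp Y$, we have $(M'_1,D)\perp Y$. Hence $I(M'_1;D\mid Y)=I(M'_1;D)=H(M'_1)=h(\tfrac14)$. For $f_2$, $Y$ is a uniform bit independent of $D$, so $M'_2=X_1\oplus Y$ is a one-time pad and is independent of $D$, giving $I(M'_2;D)=0$. Conditional on $Y=y$, however, $M'_2$ determines $X_1$, so $I(M'_2;D\mid Y)=H(X_1)=\ln2$. The elementary check $h(\tfrac14)=\ln4-\tfrac34\ln3\approx0.562<\ln2$ then separates the two candidates.

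Then apply Proposition~\ref{prop:corresp}(A) to both candidates, using $D=X$ so that $I(M';X\mid Y)=I(M';D\mid Y)$. This gives $\V(M'_1)=\kT\,h(\tfrac14)$ and $\V(M'_2)=\kT\ln2$, so the conditional exchange rate remains exact on both. The value argmax is therefore $\{f_2\}$. The raw record-stock fit $\Phifit=\JD(M')$ is an eligible functional of the training closure, and its argmax is $\{f_1\}$. The two argmax sets are disjoint, which is strict misalignment. Consequently Eq.~\eqref{eq:corresp-raw} fails on $f_2$, where $\V=\kT\ln2\ne0=\kT\,\JD(M'_2)$, even though $X\perp Y$ holds.

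Finally, record why part (B) does not apply: joint neutrality fails, because $(M,D)\not\perp Y$ when $M=Y$. The pad $f_2$ converts this externally sourced $Y$-correlation into task-usable correlation that is invisible to $I(\cdot;D)$. The main point requiring care is the (F5$'$) bookkeeping for the post-update states, including the three-state uniform section for $f_1$ and the pre-update sections. The information computations themselves are routine.
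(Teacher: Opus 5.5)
Your proposal is correct and follows the paper's own route: you verify the witness by direct computation, take the values from the conditional exchange rate of Proposition~\ref{prop:corresp}(A) (which applies because the family is (F5$'$)-stable, including the three-state uniform section for $f_1$), and compute the raw stocks $I(M'_1;D)=h(\tfrac14)$ and $I(M'_2;D)=0$. Your point that joint neutrality fails through $M=Y$ even though $D\perp Y$ holds is the same failed route the paper records in its proof of part (B).
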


\begin{remark}[Limits of the correspondence---to be read with
Proposition~\ref{prop:corresp}]
\label{rem:corresp-limits}
The setting of Refs.~\cite{Boyd2022TML,Boyd2025Overfitting} and the
present one are different formal systems: (i)~unipartite ratchet
versus multipartite task tuple; (ii)~$\epsilon$-machine versus
learning state $\pi_M$; (iii)~sequence asymptotics versus one-shot
per draw; (iv)~the fit functionals differ in type---their
likelihood is a pathwise functional of realized data (an
estimator), our mutual information is an ensemble functional (not
an estimator), so the specifically maximum-likelihood content of
their equivalence has no image under the correspondence;
(v)~the record/world distinction does not exist in their
formalism (the data \emph{is} the tape), so clause ($\beta$) is
unbreakable there; (vi)~their ``efficient agent'' hypothesis
corresponds to (F5$'$)-stability here---both equivalences hold
inside a class of efficiently implementable candidates.
Proposition~\ref{prop:corresp} is therefore neither a re-derivation
nor a subsumption of their theorem: it is a correspondence that
isolates, in the present vocabulary, the structural features under
which their equivalence lives.
\end{remark}

\subsection{Main Theorem III}
\label{sec:retention-main}

\begin{maintheorem}[Value-retention alignment schema]
\label{thm:main-III}
Let the setting be that of Secs.~\ref{sec:setting}
and~\ref{sec:value}. Then:
\begin{enumerate}
\item[(i)] (\emph{Alignment in the degenerate domain, two
layers.}) Under the degenerate conditions
($\alpha$)($\beta$)($\gamma_{\mathrm{acc}}$)($\delta$) and
(F5$'$)-stability, value ranking aligns with the
side-information-adjusted record fit $I(M'_f;D\mid Y_\tau)$ through
the linear exchange rate $\kT$, with no record--side-information
independence assumed [Proposition~\ref{prop:corresp}(A)]; under the
additional joint neutrality ($\gamma$): $(M,D)\perp Y_\tau$, it
aligns with the raw acquired record stock $I(M'_f;D)$ (or any
strictly increasing transform of it)
[Proposition~\ref{prop:corresp}(B)]. The boundary between the two
layers is genuine: $X\perp Y$ alone does not restore the raw layer
(Proposition~\ref{prop:corresp-boundary}).
\item[(ii)] (\emph{S: shift collapse---dropping ($\alpha$).}) On
the device SH (shared world of two independent uniform bits
$X_A,X_B$; two symmetric $\flatstar$ tasks with
$\mathrm{Man}=\{X_A\}$ resp.\ $\{X_B\}$; $D$ a perfect two-bit
record; blank start; capacity one bit; candidates $f_A,f_B$ the two
one-bit copies; $T_{\mathrm{train}}=(1-\epsilon,\epsilon)$,
$T_{\mathrm{shift}}=(\epsilon,1-\epsilon)$,
$\epsilon\in(0,\tfrac12)$; Appendix~\ref{app:sbar}):
$\Delta\V(f_A;T_{\mathrm{train}},b)=(1-\epsilon)\kT\ln2
>\epsilon\,\kT\ln2=\Delta\V(f_B;T_{\mathrm{train}},b)$, while under
$T_{\mathrm{shift}}$ the value ranking strictly reverses (the
argmax sets are disjoint); moreover
$\rhogen(M_A)=\epsilon/(1-\epsilon)$, arbitrarily small inside the
domain $\V_{\mathrm{train}}>0$, and
$\Lgen(M_A)=(1-2\epsilon)\,\kT\ln2$ attains the bound (G1) with
equality. Here $\Phifit=\Delta\V(\cdot\,;T_{\mathrm{train}},b)$
is the training-distribution value itself (eligible as a work-type
fit functional: it reads $T_{\mathrm{train}}$, not
$T_{\mathrm{shift}}$), and the acquisition $\dpJD$ is tied at
$\ln2$ for both candidates, as is the world correlation: the
representative closure measures do not separate the candidates;
the evaluation distribution alone does.
\item[(iii)] (\emph{B: budget collapse---dropping ($\delta$).}) On
the device BG (an independent pair: a flat part of $n$ uniform bits
and an imported gate part with key length $m<n$, prize
$F_{\mathrm{cart}}$, horizon $K_{\mathrm{hor}}$, Type~I
realization; $T=(\tfrac12,\tfrac12)$ fixed; $D=(D_X,D_B)$ perfect
copies; capacity $n$ bits; candidates $f_X,f_B$ the two full
copies; parameters satisfying $n\ln2>2m\ln2+2\,C_G$ and
$\beta F_{\mathrm{cart}}(1-K_{\mathrm{hor}}2^{-m})-2\,C_G>n\ln2$,
with $C_G$ the finite uniform bookkeeping constant of
Sec.~\ref{sec:regime}---whether a given numerical quadruple
satisfies these inequalities depends on the unproven value of
$C_G$; e.g.\ $(n,m,\beta F_{\mathrm{cart}},K_{\mathrm{hor}})
=(80,20,100,3)$ qualifies iff $C_G<13.9$~nats, which holds under
the script convention of Appendix~\ref{app:verification} but is
not proven, while for every finite $C_G$ admissible quadruples
exist (take $n$ large); Appendix~\ref{app:sbar}):
(1)~the fit order is budget-independent,
$\dpJD(f_X)=n\ln2>m\ln2=\dpJD(f_B)$;
(2)~at $b=0$ the value order strictly reverses,
$\V(M_B;T,0)\ge\tfrac12\kT[\beta
F_{\mathrm{cart}}(1-K_{\mathrm{hor}}2^{-m})-2C_G]
>\V(M_X;T,0)=\tfrac12\,n\,\kT\ln2$;
(3)~at large budget the order recovers: for
$\beta b\ge m\ln2+C_G$, the gate-part gap lies in the hypothesis-%
free sandwich $[\,m\,\kT\ln2,\ 2m\,\kT\ln2+2\,C_G\,\kT\,]$, whose
upper end still gives $\V(M_B)<\V(M_X)$ under the stated parameter
condition (for reference: the exact equality
$\mathrm{Gap}_{\mathrm{gate}}=m\,\kT\ln2$ holds conditionally on a
calibration hypothesis of the companion framework, which is open
in general and constructively satisfied on cartridges; it is not
used);
(4)~hence the success of alignment reverses as a function of the
evaluation budget $b$---established at the two endpoints; no
monotonicity in $b$ or uniqueness of the reversal point is
claimed.
\item[(iv)] (\emph{A: access-route collapse---breaking the
joint-neutrality clause ($\gamma$).}) On the device AC$'$ (a single
task: $X=(X_1[2~\text{bits}],X_2[1~\text{bit}])$ a single
manipulable register, $Y$ a readable non-manipulable duplicate of
the content of $X_1$, so $I(X;Y)=2\ln2>0$; $D$ a perfect copy of
$X$, so ($\beta$) is held; $T=\delta_\tau$, $b$ arbitrary; blank
start; capacity two bits; candidates $f_1,f_2$ copying the $X_1$-
resp.\ $X_2$-part of $D$; Appendix~\ref{app:sbar}): the
representative fit measures prefer $f_1$
($\dpJD(f_1)=2\ln2>\ln2=\dpJD(f_2)$; likewise the world
correlation), while the value strictly reverses:
$\Delta\V(f_2)=\kT\ln2>0=\Delta\V(f_1)$---the larger acquisition
is $Y$-redundant, a re-purchase of what the blind side reads for
free. This occurs at the same $T$ and every $b$; extending the
manipulable set to $Z=(X,Y)$ restores the fit order,
$\V'(f_1)=2\kT\ln2>\V'(f_2)$.
\item[(v)] (\emph{R: record-excess collapse---dropping
($\beta$).}) On the device RJ (a single $\flatstar$ task with $X$
one uniform bit, $Y=\emptyset$; $J$ a junk register of $n\ge2$
uniform bits independent of $X$; $D=(\text{copy of }X,\,J)$, so the record
exceeds the faithful observation of the medium; $T=\delta_\tau$,
$b$ arbitrary; capacity $n$ bits; candidates $f_X,f_J$;
Appendix~\ref{app:sbar}): $\dpJD(f_J)=n\ln2>\ln2=\dpJD(f_X)$ while
$\Delta\V(f_X)=\kT\ln2>0=\Delta\V(f_J)$---strict reversal;
replacing every component of $D$ by $X$-content (record
faithfulization) restores agreement.
\item[(vi)] (\emph{Four-coordinate intervention theorem.}) For
each coordinate $C\in\{$S (shift), B (budget), A (access route), R
(record content)$\}$ there is a paired setting---same device
family, same candidate family, same fit functional---in which
changing coordinate $C$ \emph{alone} switches strict misalignment
to alignment: parts (ii)--(v) supply the pairs, with the
restoration operations
$T_{\mathrm{shift}}\to T_{\mathrm{train}}$; $b\to\infty$;
manipulable-set extension; record faithfulization. Here the
access-route coordinate is the pair
$(\mathrm{Man}_\tau,A_\tau)$ of manipulability and access
structure---in the formal task tuple
(Definition~\ref{def:task}) these are separate slots, and the
AC$'$ intervention moves only its $\mathrm{Man}_\tau$ member while
the narrow $A_\tau=(\mathrm{Acc}_0,R)$ stays all-read. (Gate
removal would also restore the budget pair, but it changes
$G_\tau$, not $b$, and is therefore not part of the
single-coordinate intervention.) The
fit--value ranking is thus separately sensitive to each of the
four operational coordinates. This part asserts \emph{no} uniform
necessity claim for the four conditions of
Proposition~\ref{prop:corresp}: the S, B, and R witnesses move the
letter of other clauses as a structural necessity of housing
several candidates (a multi-candidate record exceeds a single
faithful medium copy; the BG support contains a non-$\flatstar$
task), and for the shift coordinate no witness family can repair
this (candidates that a shift separates force a multi-medium
record, breaking the letter of ($\beta$)).
\item[(vii)] (\emph{Necessity of the neutrality coordinate.}) The
access witness AC$'$ of part (iv) satisfies ($\alpha$), ($\beta$),
($\delta$), a blank start, and (F5$'$) at all reachable states,
and uses the same fit functional as
Proposition~\ref{prop:corresp}(B) (the raw record stock, equal to
$\dpJD$ under the blank start); the only hypothesis of part (B)
that fails is joint neutrality ($\gamma$)---under a blank start,
$(M,D)\perp Y\iff D\perp Y$, broken here by the $X$--$Y$
redundancy. Hence ($\gamma$) cannot be dropped from the hypothesis
set of Proposition~\ref{prop:corresp}(B) if the alignment
guarantee is to be uniform over all devices and (F5$'$)-stable
candidate families. The necessity is necessity \emph{for the
uniform guarantee}, not for individual instances (on a
single-element family the argmax sets coincide trivially).
\end{enumerate}
\end{maintheorem}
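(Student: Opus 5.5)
The proof is assembled part by part. Most parts are explicit computations on witness devices via the $\flatstar$ extraction identity (Lemma~\ref{lem:flat-identity}, Corollary~\ref{cor:v-identity}). The only non-$\flatstar$ input is the imported gate ledger bound of Sec.~\ref{sec:regime}.

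For part (i), the plan is to cite Proposition~\ref{prop:corresp}. Its proof has three steps.
\begin{enumerate}
\item Apply Lemma~\ref{lem:flat-identity} to each post-update state. This is allowed by (F5$'$)-stability and gives $\V(M'_f;\delta_\tau,b)=\kT\,I(M'_f;X_\tau\mid Y_\tau)$ for every $b$.
\item Under ($\beta$), $D$ is a bijective image of $X_\tau$, so the conditional mutual information is unchanged: $I(M'_f;X_\tau\mid Y_\tau)=I(M'_f;D\mid Y_\tau)$. No $X$--$Y$ independence is used.
\item For layer (B), push joint neutrality through $M'_f=f(M,D,A)$, using that the ancilla is jointly independent, to get $(M'_f,D)\perp Y$. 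Then $I(M'_f;D\mid Y)=I(M'_f;D)$ by the chain rule, since $I(M'_f,D;Y)=0$.
\end{enumerate}
The boundary claim is Proposition~\ref{prop:corresp-boundary}. There I would verify the one-time-pad numbers: $X_1\oplus M$ is uniform and independent of $D$, but determines $X_1$ given $Y=M$. I would also check that $f_1$ is $Y$-independent, with mutual information $h(1/4)$.

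Parts (ii), (iv) and (v) are direct computations on $\flatstar$ supports.
\begin{itemize}
\item On SH, Corollary~\ref{cor:v-identity} gives $\V(M_A;T)=T(\tau_A)\,\kT\ln2$. This yields the stated values, the reversal, $\rhogen=\epsilon/(1-\epsilon)$, and $\Lgen=(1-2\epsilon)\kT\ln2$. The bound (G1) is attained because $\lVert T_{\mathrm{train}}-T_{\mathrm{shift}}\rVert_{\mathrm{TV}}=1-2\epsilon$ and $\sup_\tau\Gap=\kT\ln2$. The tie in $\dpJD$ and in world correlation is immediate.
\item On AC$'$, compute $I(M';X\mid Y)$ for the two copies. $X_1$ is determined by $Y$, so the first term is $0$; the second equals $\ln2$. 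The enlarged manipulable set $Z=(X,Y)$ is one bundled register with empty $Y$, so the identity gives $2\ln2$ versus $\ln2$.
\item On RJ, the junk is independent of $X$, so the identity gives $\Delta\V(f_J)=0$. The restoration with a faithful record is the correspondence of part (i)(B).
\end{itemize}
In each case (F5$'$) holds because all conditionals are point masses or uniform.

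Part (iii) is the main obstacle, since it leaves the $\flatstar$ world. I would use affinity (Main Theorem~\ref{thm:main-I}(i)(b)) to write $\V=\tfrac12(\Gap_{\mathrm{flat}}+\Gap_{\mathrm{gate}})$.
\begin{itemize}
\item The flat part gives $n\kT\ln2$ or $0$ by Lemma~\ref{lem:flat-identity}. Independence of the two parts makes $f_B$ useless on the flat task.
\item At $b=0$, the informed side opens the gate by a zero-work conditional swap. The blind side is capped by $F_{\mathrm{cart}}K_{\mathrm{hor}}2^{-m}+C_G\kT$. This gives the lower bound on $\V(M_B)$, and the second parameter inequality yields the reversal.
\item At $\beta b\ge m\ln2+C_G$, the lower end $m\kT\ln2$ of the sandwich comes from erasure-purchase by the blind side. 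It has to be stated carefully as hypothesis-free, citing the appendix B statements rather than the calibration hypothesis. The upper end $2m\kT\ln2+2C_G\kT$ combines with $n\ln2>2m\ln2+2C_G$ to restore the order.
\item The uninformed gate contribution of $M_X$ is $0$, again by independence, and the capacity constraint prevents copying both parts.
\end{itemize}
I expect the care here to go into tracking exactly which imported bounds are used and in what direction.

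Parts (vi) and (vii) are bookkeeping.
\begin{itemize}
\item For (vi), I would list for each witness the single coordinate moved and check that the device, the candidates and $\Phifit$ are fixed. For budget, the pair is $b=0$ versus large $b$. For access, only $\mathrm{Man}_\tau$ changes.
\item For (vii), I would check clause by clause on AC$'$ that ($\alpha$), ($\beta$), ($\delta$), the blank start and (F5$'$) hold, and that $D\perp Y$ fails because $I(D;Y)=2\ln2$. Any uniform guarantee of Proposition~\ref{prop:corresp}(B) without ($\gamma$) would then be contradicted by the strict reversal in part (iv).
\end{itemize}
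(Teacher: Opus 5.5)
Your treatment of parts (i), (ii), (iv)--(vii) and of (iii)(1)--(2) follows the paper's proof essentially step for step. One small point: the zero gate-part gap of $M_X$ in (iii) rests on the uninformative-port lemma (Lemma~\ref{lem:noport}), not on the extraction identity, because $\tau_{\mathrm{gate}}$ is not $\flatstar$. Your ``by independence'' is that lemma and should be cited as such.

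\textbf{The gap is in (iii)(3).} You attribute the lower end $m\,\kT\ln2$ of the sandwich to erasure-purchase by the blind side. That has the direction backwards. Buying the key state by erasure at cost $m\,\kT\ln2\le b$ and then swapping is a blind \emph{achievability}: it gives $\Wblind\ge F_{\mathrm{cart}}-m\,\kT\ln2$. A lower bound on the subtracted branch can only \emph{upper}-bound the gap.

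In the paper, this erasure-purchase ingredient yields the \emph{upper} end of the sandwich. It is paired with an informed-branch ceiling that your proposal never mentions. That ceiling is the unconditional chain bound (cf.~Lemma~\ref{lem:condchainprime}): informed $\le F_{\mathrm{cart}}+\kT\,TC(t_0)+C_G\kT=F_{\mathrm{cart}}+m\,\kT\ln2+C_G\kT$. The lower end $m\,\kT\ln2$ has a different source altogether: the exact per-realization cancellation, i.e., the saturation direction of the cartridge tightness result.

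This matters because the recovery of the order at large budget uses only the upper end. The flat part of $M_B$ contributes zero, so $\V(M_B;T,b)=\tfrac12\Gap_{\mathrm{gate}}\le(m\ln2+C_G)\kT<\tfrac12 n\,\kT\ln2=\V(M_X;T,b)$. As written, you assert $2m\,\kT\ln2+2C_G\kT$ without any argument. Your one mechanism is spent on the non-load-bearing direction, which it cannot support anyway.

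To repair the step:
\begin{enumerate}
\item Obtain the upper end by pairing the blind erasure-purchase achievability with the informed chain-bound ceiling. The budget condition $\beta b\ge m\ln2+C_G$ is what makes that purchase admissible under the pathwise cap.
\item Cite the per-realization cancellation for the lower end.
\end{enumerate}
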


Device definitions, proofs (arithmetic of
Corollary~\ref{cor:v-identity}, the uninformative-port lemma, and
the imported gate bounds), and the restoration numerics appear in
Appendix~\ref{app:sbar}. The noisy-record variant of (iv), whose
reversal claim is deliberately \emph{excluded} from the theorem, is
also recorded there. We emphasize the claim level: the four axes are
not asserted to classify all
misalignment phenomena; no if-and-only-if is claimed. The theorem is
a schema comprising a two-layer positive domain with its boundary
witness, four paired one-coordinate interventions, and a genuine
one-condition-drop necessity on the neutrality axis alone.

The four axes form the paper's interpretive backbone.
Alignment holds where training-side fit suffices. It collapses
where \emph{what to fit to} (shift), \emph{which part of the record
is the medium} (record excess), \emph{through which route the
correlation can be cashed} (access), and \emph{under what budget it
is cashed} (budget) become decisive independently of the goodness
of fit. The first three axes will reappear as ledger subjects in
Sec.~\ref{sec:accounting} (($\beta$)$\leftrightarrow g_b$,
($\gamma$)$\leftrightarrow g_c$, ($\alpha$)$\leftrightarrow$
re-posting between books). The budget axis lies outside the subject
ledger (Remark~\ref{rem:reposting});
Fig.~\ref{fig:alignment} charts the two-layer domain and the four
paired interventions.

\begin{figure}[t]
\begin{tikzpicture}[every node/.style={font=\footnotesize}]
\node[align=center,anchor=south] at (4.1,6.05)
{paired one-coordinate interventions:\\[1pt]
 change the named coordinate alone $\Rightarrow$\\
 ranking reverses, or alignment returns};
\draw (0.2,3.0) rectangle (8.0,5.7);
\node[align=center] at (4.1,4.35)
{two-layer alignment domain\\[2pt]
 (A) conditional rate $\kT\,I(M';D\mid Y)$;\\
 (B) raw stock under $(M,D)\perp Y$\\
 (Prop.~\ref{prop:corresp})};
\draw[<->] (0.9,3.0) -- (0.8,1.8);
\node[align=center,anchor=north,text width=1.4cm] at (0.8,1.65)
{S: shift\\ device SH};
\draw[<->] (2.7,3.0) -- (2.6,1.8);
\node[align=center,anchor=north,text width=1.4cm] at (2.6,1.65)
{B: budget\\ device BG};
\draw[<->] (4.8,3.0) -- (4.8,1.8);
\node[align=center,anchor=north,text width=2.2cm] at (4.8,1.65)
{A: access route\\ device AC$'$\\ ($\gamma$: necessity)};
\draw[<->] (7.2,3.0) -- (7.3,1.8);
\node[align=center,anchor=north,text width=1.5cm] at (7.3,1.65)
{R: record\\ device RJ};
\end{tikzpicture}
\caption{\label{fig:alignment}%
Fit--value alignment: the two-layer domain in which fit ranking
and value ranking coincide (Main Theorem~\ref{thm:main-III}(i)),
whose boundary is marked by the one-time-pad witness
(Proposition~\ref{prop:corresp-boundary}), and the four paired
interventions S/B/A/R (parts (ii)--(vi)): each double arrow is a
setting pair in which changing that coordinate alone reverses or
restores the ranking. The neutrality coordinate ($\gamma$, device
AC$'$) alone carries a one-condition-drop necessity witness (part
(vii)); no uniform necessity is claimed for the other three. The
four axes are not claimed to classify all misalignment phenomena.}
\end{figure}

\section{Accounting consequences}
\label{sec:accounting}

The three subjects of Lemma~\ref{lem:decomp} translate into the
language of value retention. Call a subject \emph{absolute} if it
never contributes to $\Lgen^{\mathrm{upd}}$ (its subject difference
vanishes for every shift pair) and \emph{relative} otherwise. The
exact types are as follows.
\emph{Forgetting} $g_a=I(M;X_\tau\mid Y_\tau,M')$ is a
forgetting-type retention failure. It is $\tau$-dependent and can
be \emph{invisible in the training books}. If the support of
$T_{\mathrm{train}}$ does not contain the old task,
$g_a(\tau_{\mathrm{train}})=0$, while the loss surfaces only after
the shift. The corresponding device~F of
Appendix~\ref{app:devices} is the book reversal in which
overwriting an old memory costs nothing on the new book and
$\kT\ln2$ on the old one. This is the standard continual scenario
``train on the new task, the old one returns''. Hence
$g_a$ is relative. \emph{Waste} $g_b=I(M';D\mid X_\tau,Y_\tau,M)$
is adaptation to record excess (the ledger version of Main
Theorem~\ref{thm:main-III}(v)). On a shared world it is
$\tau$-constant and hence absolute, as the following lemma shows.
On non-shared worlds it is $\tau$-dependent and relative. The
example is device~DJ of Appendix~\ref{app:devices}: an acquisition
that is capital on $\tau_A$'s book is booked entirely as $g_b$ on
$\tau_B$'s. \emph{Junk overfitting} is acquisition dead on every
book, the component independent of all task variables. It is a
\emph{sub-event} of $g_b$, not the subject itself.
\emph{$Y$-contamination} $g_c=I(M';Y_\tau\mid M)$ charges the
whole $Y$-redundant component of the acquisition, wider than
literal copies of readable variables. It includes supply-side
contributions. There are $(X,Y,D)$ with $I(X;Y)>0$ on which a
perfectly hygienic learner cannot achieve $g_c=0$ together with
positive acquisition. The record itself is $Y$-redundant, and the
update kernel, which cannot read $Y$, cannot filter out the
redundant component. It is relative.

\begin{lemma}[Shared-world waste is $\tau$-constant]
\label{lem:sharedworld}
If all tasks in the support share one world register set $W$, with
$(X_\tau,Y_\tau)$ a partition of $W$ for each $\tau$, then
$g_b(\tau)=I(M';D\mid W,M)$ does not depend on $\tau$. (Proof in
Appendix~\ref{app:equality}: conditioning on $(X_\tau,Y_\tau)$
jointly is conditioning on $W$.)
\end{lemma}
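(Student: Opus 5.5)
The argument needs only that $W$ is determined by the pair $(X_\tau,Y_\tau)$ and that $W$ in turn determines this pair.

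First, I fix the probability space. On a shared-world support every $p_\tau(m,m',d,w)$ lives on the same register set $(M,M',D,W)$. By draw exogeneity (Definition~\ref{def:exogeneity}) together with Lemma~\ref{lem:exo-invariance}, the joint law of $(M,D,A)$, and hence of $M'=f(M,D,A)$, is common to all $\tau$.

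Second, for each $\tau$ the partition hypothesis says $(X_\tau,Y_\tau)$ is a relabeling of $W$. The two are obtained from each other by a bijective regrouping of coordinates, so they generate the same $\sigma$-algebra: $\sigma(X_\tau,Y_\tau)=\sigma(W)$. Conditional mutual information depends on the conditioning variables only through the $\sigma$-algebra they generate. Therefore
\begin{equation*}
g_b(\tau)=I(M';D\mid X_\tau,Y_\tau,M)=I(M';D\mid W,M),
\end{equation*}
evaluated under the joint law $p_\tau(m,m',d,w)$.

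Third, I show that this joint law does not depend on $\tau$. The right-hand side mentions no $\tau$-indexed register, so what remains is to check that the law of $(M,M',D,W)$ is itself $\tau$-independent. I see two routes.
\begin{itemize}
\item[(a)] If $p_\tau(w)$ and the coupling $p_\tau(m,d,w)$ are common across tasks, as in the shared-world devices (e.g.\ LB, SH), the law is common and the claim is immediate.
\item[(b)] If the world marginal is common but the coupling is not specified, I read "shared world" as a single joint law over $(M,D,W)$. This is the reading under which global quantities like $I(M;W)$ were declared meaningful in Sec.~\ref{sec:setting-tasks}. Under that reading, $\tau$ enters $g_b$ only through the choice of partition, which the previous step eliminated.
\end{itemize}

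The main obstacle is exactly this third step. The lemma's parenthetical proof suggests the authors take a single shared joint law as part of "sharing one world register set". I would state this reading explicitly and note that, under it, the update kernel's pushforward $p(a)K$ is also $\tau$-free. This yields $\tau$-constancy.
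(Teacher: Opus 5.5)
Your proof is correct and is the paper's argument: the appendix proof consists exactly of your second step, that $(X_\tau,Y_\tau)$ and $W$ generate the same $\sigma$-algebra, so $g_b(\tau)=I(M';D\mid X_\tau,Y_\tau,M)=I(M';D\mid W,M)$. Your third step makes explicit a hypothesis the paper leaves tacit, and it is genuinely needed: draw exogeneity fixes only the law of $(M,D,A)$, so the coupling $p_\tau(m,d,w)$ must also be common across tasks. For instance, take $W=(X_A,X_B)$ uniform, $M$ blank, $M'$ a copy of $D$, with $D=X_A$ under $\tau_A$ but $D\perp W$ uniform under $\tau_B$; this respects exogeneity yet gives $g_b(\tau_A)=0\neq\ln 2=g_b(\tau_B)$.
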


The absolute/relative distinction is the content of the
translation. The classical word ``overfitting'' merges
(i)~acquisition that is dead on every future (junk; represented by
shared-world $g_b$) and (ii)~acquisition that is capital on the
training distribution and dies under the shift. The typed
accounting separates the two as vanishing versus surviving subject
differences in device-verifiable form. The dividing line depends
on the world-sharing structure, not on the subject's name. Even
$g_b$ is relative on non-shared worlds.

\begin{corollary}[Subject-difference identity for the retention
gap]
\label{cor:subject-diff}
Assume: (i)~$T_{\mathrm{train}}$ and $T_{\mathrm{shift}}$ are
distributions on a \emph{common finite support} $\mathcal{T}$,
every $\tau\in\mathcal{T}$ is a flat task, and both states satisfy
(F5$'$) on every $\tau\in\mathcal{T}$ ((F5$'$)-stability of the
update, in the two-state, all-$\tau$ quantification);
(ii)~$M\to M'$ is an $M$-local update
(Definition~\ref{def:update}); (iii)~draw exogeneity holds over
all of $\mathcal{T}$ (Definition~\ref{def:exogeneity}; in
particular the shifted drawing mechanism does not read the training
closure). Then
\begin{equation}
\Lgen^{\mathrm{upd}}
=\kT\Bigl(
\mathbb{E}_{\tau\sim T_{\mathrm{shift}}}
\bigl[g_a+g_b+g_c\bigr]
-\mathbb{E}_{\tau\sim T_{\mathrm{train}}}
\bigl[g_a+g_b+g_c\bigr]
\Bigr) .
\label{eq:subject-diff}
\end{equation}
Proof in Appendix~\ref{app:equality}. Hypothesis (iii) is
load-bearing: without it $\dpJD$ becomes $\tau$-dependent and
Eq.~(\ref{eq:subject-diff}) acquires the extra term
$\kT(\dpJD^{\mathrm{train}}-\dpJD^{\mathrm{shift}})$---the
exclusion of realization-correlated shifts
(Definition~\ref{def:shift}) is exactly what removes it. Scope:
noisy acquisitions (stochastic kernels, degraded copies) generally
violate (F5$'$) and lie outside the identity; the general dyadic
extension is the same open slot as in
Remark~\ref{rem:flat-scope}(v).
\end{corollary}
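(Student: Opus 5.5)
The identity follows by applying Lemma~\ref{lem:decomp} task by task, averaging over the two distributions, and subtracting; the only non-routine point is that the gross-acquisition term is the same on both books. We start from the definition $\Lgen^{\mathrm{upd}}=\Delta\V_{\mathrm{train}}-\Delta\V_{\mathrm{shift}}$ with $\Delta\V_{T}:=\V(M';T,b)-\V(M;T,b)$, both evaluated at the same $b$ by the budget convention of Sec.~\ref{sec:retention-defs}. By Definition~\ref{def:value}, $\V$ is the $T$-average of per-task gaps (affinity, Main Theorem~\ref{thm:main-I}(i)(b)). On the common finite support $\mathcal{T}$ this gives
\begin{equation*}
\Delta\V_{T}=\sum_{\tau\in\mathcal{T}}T(\tau)\,\Delta\Gap_\tau,\qquad T\in\{T_{\mathrm{train}},T_{\mathrm{shift}}\},
\end{equation*}
where $\Delta\Gap_\tau:=\Gap(M';\tau,b)-\Gap(M;\tau,b)$ is a per-task quantity that does not depend on which distribution is being averaged. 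It does depend on the learning state $\pi_M$ restricted to $\tau$, which is a single object across both distributions by the support convention.

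Next we invoke Lemma~\ref{lem:decomp} task by task. A flat task satisfies (F1)--(F4) and the $y$-clause, and hypothesis (i) adds the $(y,m)$-sections of (F5$'$) for both $M$ and $M'$. Hence every $\tau\in\mathcal{T}$ is $\flatstar$ relative to both states, and the decomposition
\begin{equation*}
\Delta\Gap_\tau=\kT\bigl[\dpJD(\tau)-g_a(\tau)-g_b(\tau)-g_c(\tau)\bigr]
\end{equation*}
holds as an equality for each $\tau$. Here we write $\dpJD(\tau)=I(M';D\mid M)$ evaluated under $p_\tau$, to keep track of possible $\tau$-dependence.

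The key step, and the only place where hypothesis (iii) enters, is showing that $\dpJD(\tau)$ is constant in $\tau$. Draw exogeneity (Definition~\ref{def:exogeneity}) says that the joint law of $(M,D,\text{ancillas})$ is common to all $\tau\in\mathcal{T}$. Since $M'=f(M,D,A)$ with a $\tau$-independent kernel (Definition~\ref{def:update}), the joint law of $(M,M',D)$ is also $\tau$-independent; Lemma~\ref{lem:exo-invariance} gives the same conclusion in its $\sigma$-algebra form. Therefore $I(M';D\mid M)$ is a single number $\dpJD$. I expect this to be the main obstacle only in the sense of stating it carefully: one must use joint rather than marginal exogeneity, because the three-variable law must be common. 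This is exactly where the remark after the corollary locates the extra term when (iii) fails.

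Averaging the per-task decomposition then gives $\Delta\V_{T}=\kT\,\dpJD-\kT\,\mathbb{E}_{\tau\sim T}[g_a+g_b+g_c]$ for each $T$, with the same $\dpJD$ in both cases because $\sum_\tau T(\tau)=1$. Subtracting the shift case from the train case cancels $\kT\,\dpJD$ and yields Eq.~\eqref{eq:subject-diff}. All sums are finite by Definition~\ref{def:taskdist}, so no convergence issues arise. We do not use Lemma~\ref{lem:updateDP} or the ledger identity.
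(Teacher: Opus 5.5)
Your proposal is correct and follows the paper's own proof. You apply Lemma~\ref{lem:decomp} at every $\tau\in\mathcal{T}$ (licensed by hypothesis (i)), average against $T_{\mathrm{train}}$ and $T_{\mathrm{shift}}$, and subtract; hypothesis (iii) makes the joint law of $(M,M',D)$ common to all $\tau$, so $\kT\,\dpJD$ cancels, and your explicit checks that flat tasks plus two-state (F5$'$) supply the $\flatstar$ hypotheses and that the sign of $\Lgen^{\mathrm{upd}}=\Delta\V_{\mathrm{train}}-\Delta\V_{\mathrm{shift}}$ comes out right are consistent with that argument.
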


The per-update retention gap is thus \emph{exactly} the increase of
breakage subjects on the shifted book. The gross acquisition
$\dpJD$ is book-independent. Across books, what changes is the
subject posting of the acquisition \emph{and} the re-valuation of
already-held capital. The $g_a$ difference re-values existing
stock, not the current acquisition. Both enter the identity. On the
device SH, the same $\ln2$ of acquisition is posted as full
capital on $\tau_A$'s book and as full $g_c$ on
$\tau_B$'s. One task's capital is another task's $Y$-contamination,
the ledger version of the $T$-relativity
of capital (Remark~\ref{rem:t-relativity}).

\begin{remark}[Re-posting, and its fence]
\label{rem:reposting}
Within its stated range---flat-task supports, support-fixed shifts
(plus coupling-specified extensions)---%
Corollary~\ref{cor:subject-diff} says that shift-type overfitting
is accounting re-posting: no new breakage subject is needed.
\emph{Outside that range one may be needed}: for the gate-family
shifts of Sec.~\ref{sec:gate} Lemma~\ref{lem:decomp} does not
exist, the value collapse of the threshold theorem is not
decomposed into any subject, and in the budget-bound domain value
is not a linear exchange of information subjects at all (the
enablement premium of Sec.~\ref{sec:regime}); a $\kT\times$
(information subject) currency cannot carry what the shift
destroys there. The ledger decomposition of the gate domain---%
whether an enablement subject exists---is open
(Sec.~\ref{sec:discussion}). This is the precise meaning of
``($\delta$) lies outside the subjects'' in Main
Theorem~\ref{thm:main-III}.
\end{remark}

\begin{corollary}[Orthogonality of efficiency and retention]
\label{cor:orthogonality}
Fix the evaluation convention inside the statement: the numerator
$\Delta\V$ of $\etacap$ and the denominator
$\V_{\mathrm{train}}$ of $\rhogen$ are both evaluated on
$T_{\mathrm{train}}=\delta_{\tau_A}$, and only the numerator
$\V_{\mathrm{shift}}$ of $\rhogen$ varies with
$T_{\mathrm{shift}}$---$\etacap$ is a functional of the training
distribution and carries no shift argument. (Dropping this
convention and evaluating the $\Delta\V$ of $\etacap$ on
$T_{\mathrm{shift}}$ changes the truth value: the realized set
below then collapses to three points, losing $(1,0)$ and
$(\eta_0,0)$.) Then, for every $\eta_0\in(0,1]$ there are a device
and shifts---a shared
world $(X_A,X_B)$, the update a copy of $D_A$ into blank memory,
implemented reversibly ($\sigmaM=\ln2$) or with excess dissipation
$s=\ln2\,(1-\eta_0)/\eta_0$, the retention ratio evaluated against
$T_{\mathrm{shift}}=\delta_{\tau_A}$ (no shift) or
$\delta_{\tau_B}$ (orthogonal shift)---realizing all four pairs
\begin{equation*}
(\etacap,\rhogen)\in
\{(1,1),\,(1,0),\,(\eta_0,1),\,(\eta_0,0)\} ,
\end{equation*}
and partial shifts $T_{\mathrm{shift}}=(\delta,1-\delta)$ realize
$\rhogen=\delta$ continuously: the realized set fills the rectangle
$(0,1]\times[0,1]$. Hence no functional and no monotone relation
between $\etacap$ and $\rhogen$ exists on the device class
($\flatstar$ pairs, $M$-local updates, $\sigmaM>0$):
\emph{``overfitting equals low capitalization efficiency'' is false
in both directions.} Device and numerics in
Appendix~\ref{app:devices}; in the $(1,0)$ corner the four equality
conditions of Main Theorem~\ref{thm:main-II}(iv) hold $T$-a.s.\ on
$T_{\mathrm{train}}=\delta_{\tau_A}$.
\end{corollary}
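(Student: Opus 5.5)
The approach is to realize all four corners on one explicit device, computing every value through the $\flatstar$ value identity (Corollary~\ref{cor:v-identity}) and every ledger through the universal ledger identity (Lemma~\ref{lem:ledger-identity}); the edge $\rhogen=\delta$ then comes from affinity in $T$ (Main Theorem~\ref{thm:main-I}(i)(b)).

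\emph{Device.} Take a shared world $W=(X_A,X_B)$ of two independent uniform bits and two $\flatstar$ tasks: $\tau_A$ with $\mathrm{Man}=\{X_A\}$, $Y_{\tau_A}=X_B$, and $\tau_B$ with the roles exchanged. Let $D=(D_A,D_B)$ be perfect copies, and let $M_0$ be blank. The update is the growth update $M_1:=\text{copy of }D_A$ into a fresh cell. All conditionals are point masses or uniform, so (F5$'$) holds at both states and the update is (F5$'$)-stable. The growth clause gives (f).

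\emph{Values.} By Corollary~\ref{cor:v-identity}, $\V(M_1;\delta_{\tau_A},b)=\kT\,I(M_1;X_A\mid X_B)=\kT\ln2$. Since $M_1\perp X_B$, we get $\V(M_1;\delta_{\tau_B},b)=\kT\,I(M_1;X_B\mid X_A)=0$. Both values vanish at $M_0$. Hence $\Delta\V=\kT\ln2$ on $T_{\mathrm{train}}=\delta_{\tau_A}$, and $\V_{\mathrm{train}}=\kT\ln2>0$, so $\rhogen$ is defined. For the shift $T_{\mathrm{shift}}=(\delta,1-\delta)$, affinity gives $\V_{\mathrm{shift}}=\delta\,\kT\ln2$, hence $\rhogen=\delta\in[0,1]$. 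The endpoints $\delta=1$ and $\delta=0$ give $\rhogen=1$ and $\rhogen=0$.

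\emph{Efficiencies.} By Lemma~\ref{lem:ledger-identity}, $\sigmaM=\Delta I(M;D)+\Sigmatot=\ln2+\Sigmatot$.
\begin{itemize}
\item The reversible controlled-copy (as for device~E) has $\Sigmatot=0$, so $\etacap=1$.
\item An implementation with excess dissipation $s$ yields the same kernel, hence the same $\Delta\V$, because by Definition~\ref{def:implementation}(i) values depend only on the logical specification. It gives $\etacap=\ln2/(\ln2+s)$.
\item Choosing $s=\ln2(1-\eta_0)/\eta_0\ge0$ gives exactly $\eta_0$.
\end{itemize}

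Because $\etacap$ is fixed by the convention on $T_{\mathrm{train}}=\delta_{\tau_A}$, it is independent of the shift argument. Varying $s$ over $[0,\infty)$ and $\delta$ over $[0,1]$ independently therefore fills $(0,1]\times[0,1]$. This includes the four listed corners.

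\emph{No functional or monotone relation.} The pairs $(1,1)$ and $(1,0)$ share $\etacap$ but differ in $\rhogen$, which rules out a functional relation $\rhogen=\phi(\etacap)$. A monotone relation in either direction is contradicted by the rectangle, e.g.\ by $(\eta_0,1)$ against $(1,0)$ together with $(\eta_0,0)$ against $(1,1)$. Finally, at $(1,0)$ Main Theorem~\ref{thm:main-II}(iv) applies on $\delta_{\tau_A}$, and $g_a=g_b=g_c=0$ can be checked directly: $M_0$ is blank, $D_B$ is not copied, and $M_1\perp X_B$.

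\emph{Expected obstacle.} The main subtlety is the parenthetical convention claim. If $\Delta\V$ is evaluated on $T_{\mathrm{shift}}=\delta_{\tau_B}$, then $\Delta\V=0$, so $\etacap=0$, and the realized set degenerates. One must verify that the realized set is exactly three points and that $(1,0)$ and $(\eta_0,0)$ are then lost. Each shift is also a distribution-level choice, so exogeneity over $\{\tau_A,\tau_B\}$ holds trivially and should be stated.
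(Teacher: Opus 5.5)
Your proposal is correct and follows the paper's own route. The paper's proof is just the rectangle device of Appendix~\ref{app:devices}: a shared world $(X_A,X_B)$, a copy of $D_A$ into blank memory, a reversible or excess-$s$ implementation, and $\rhogen$ evaluated against $\delta_{\tau_A}$, $\delta_{\tau_B}$, or $(\delta,1-\delta)$, with values read off Corollary~\ref{cor:v-identity} and $T$-affinity and the ledger read off Lemma~\ref{lem:ledger-identity}, exactly as you do. The convention check you list as an obstacle is immediate: both $\delta_{\tau_B}$ configurations map to $(0,0)$, leaving $\{(1,1),(\eta_0,1),(0,0)\}$ (three points when $\eta_0<1$), and the converse non-functionality $\etacap\neq\psi(\rhogen)$ follows from $(1,1)$ versus $(\eta_0,1)$.
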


The orthogonality follows from type. $\rhogen(M')$ is a
functional of $(\pi_{M'},T_{\mathrm{train}},T_{\mathrm{shift}},b)$
alone and never sees the implementation. The denominator of
$\etacap$ contains $\Sigmatot$ and never sees the shift. Two
quantities with these disjoint blind spots are structurally
expected to admit no functional relation. The corollary is the
witness, not the source. Its value is operational. A single device
refutes, in both directions, a conflation that occurs in the
literature. In particular, the $(1,0)$ corner provides a device
that overfits at thermodynamically perfect efficiency: full
capitalization on $T_{\mathrm{train}}$ coexists with total shift
fragility (Fig.~\ref{fig:rectangle}).

\begin{figure}[t]
\begin{tikzpicture}[scale=1.0]
\draw[->] (0,0) -- (4.6,0);
\draw[->] (0,0) -- (0,3.4);
\node[font=\footnotesize,anchor=west] at (4.65,0)
{$\etacap$};
\node[font=\footnotesize,anchor=east] at (-0.1,3.2)
{$\rhogen$};
\node[font=\footnotesize,anchor=north] at (4.0,-0.1) {$1$};
\node[font=\footnotesize,anchor=east] at (-0.1,2.8) {$1$};
\node[font=\footnotesize,anchor=north] at (1.4,-0.1)
{$\eta_0$};
\draw[dashed] (1.4,0) rectangle (4.0,2.8);
\fill (4.0,2.8) circle (0.06);
\fill (4.0,0) circle (0.06);
\fill (1.4,2.8) circle (0.06);
\fill (1.4,0) circle (0.06);
\node[font=\footnotesize,anchor=south west] at (4.0,2.85)
{$(1,1)$};
\node[font=\footnotesize,anchor=north west] at (4.0,0.1)
{$(1,0)$};
\node[font=\footnotesize,anchor=south east] at (1.4,2.85)
{$(\eta_0,1)$};
\node[font=\footnotesize,anchor=north east] at (1.35,0.1)
{$(\eta_0,0)$};
\draw[dotted] (4.0,0) -- (4.0,2.8);
\draw[dotted] (1.4,0) -- (1.4,2.8);
\end{tikzpicture}
\caption{\label{fig:rectangle}%
Rectangle witness in the $(\etacap,\rhogen)$ plane
(Corollary~\ref{cor:orthogonality}): for every $\eta_0\in(0,1]$
all four corners are realized by one device family, and partial
shifts fill the vertical segments continuously, so the realized
set contains the rectangle $(0,1]\times[0,1]$. No functional or
monotone relation between efficiency and retention exists on this
class; the $(1,0)$ corner is overfitting at perfect
capitalization efficiency.}
\end{figure}

\section{Gate-family application}
\label{sec:gate}

The dials $d$
(degeneracy) and $c$ (transfer) are control variables of value
retention \emph{in the gate-learning family of
Ref.~\cite{Sudo_Spectrum}}. The dial $c_0$ (effective
identification rate) enters only conjecturally, as a recovery
price (Sec.~\ref{sec:gate-outlook}); none of the three is claimed
to be an invariant of general distribution shift. On a common
horizontal axis, the main claim contrasts per-bit linear
retention in the $\flatstar$ family with a threshold transition
in the gate family.

\subsection{The $\flatstar$ baseline: linear}
\label{sec:gate-linear}

\begin{proposition}[TV-linear control on flat supports]
\label{prop:tvlinear}
If the support of $T$ consists of flat tasks (support-fixed shift),
then $\Lgen(M)\le\lVert T_{\mathrm{train}}-T_{\mathrm{shift}}
\rVert_{\mathrm{TV}}\cdot\sup_\tau\Gap(M;\tau,b)$, with equality on
the device SH (Main Theorem~\ref{thm:main-III}(ii)); and along the
mixing path $T_t=(1-t)\,T_{\mathrm{train}}+t\,T_{\mathrm{shift}}$
the value $\V$ is an affine function of $t$---no discontinuous
jumps on support-fixed paths. (The bound is Main
Theorem~\ref{thm:main-I}(i)(b); the equality case is Main
Theorem~\ref{thm:main-III}(ii).)
\end{proposition}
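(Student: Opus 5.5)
The proposition has three parts: an upper bound, an equality case, and affinity along mixing paths. Each one reduces to a result already in hand.

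For the bound, we use that $\V$ is linear in $T$, namely $\V(M;T,b)=\sum_\tau T(\tau)\Gap(M;\tau,b)$ by Definition~\ref{def:value}. On a support-fixed shift both distributions live on the common finite support $\mathcal{T}$. Hence
\[
\Lgen(M)=\sum_{\tau\in\mathcal{T}}\bigl[T_{\mathrm{train}}(\tau)-T_{\mathrm{shift}}(\tau)\bigr]\Gap(M;\tau,b).
\]
Because the signed weights sum to zero, only the positive part of $T_{\mathrm{train}}-T_{\mathrm{shift}}$ survives as an upper bound, once $\Gap\ge0$ is used (Main Theorem~\ref{thm:main-I}(i)(a)). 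That positive part has total mass $\lVert T_{\mathrm{train}}-T_{\mathrm{shift}}\rVert_{\mathrm{TV}}$. Bounding each $\Gap(M;\tau,b)$ by its supremum then gives the inequality; this is exactly Eq.~\eqref{eq:tv-bound} read one-sidedly.

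The flat-task hypothesis is not needed for the bound itself. It guarantees finiteness of $\sup_\tau\Gap$, since Lemma~\ref{lem:flattask} gives $\Gap\le\kT\,I(M;X_\tau\mid Y_\tau)\le\kT\ln\lvert X_\tau\rvert$. It also guarantees $b$-independence on the $\flatstar$ states used in the equality case.

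For equality, we invoke the device SH of Main Theorem~\ref{thm:main-III}(ii). There Corollary~\ref{cor:v-identity} gives $\Gap(M_A;\tau_A,b)=\kT\ln2$ and $\Gap(M_A;\tau_B,b)=0$. With $T_{\mathrm{train}}=(1-\epsilon,\epsilon)$ and $T_{\mathrm{shift}}=(\epsilon,1-\epsilon)$, we get
\[
\Lgen(M_A)=(1-2\epsilon)\,\kT\ln2 .
\]
The total-variation distance is $1-2\epsilon$ and the supremum of the gap is $\kT\ln2$, so the bound is attained.

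For the mixing path, linearity of $T\mapsto\V$ immediately gives
\[
\V(M;T_t,b)=(1-t)\,\V(M;T_{\mathrm{train}},b)+t\,\V(M;T_{\mathrm{shift}},b),
\]
an affine function of $t$ and hence continuous. Continuity is also quantitative: Eq.~\eqref{eq:tv-bound} applied to $T_t,T_s$, with $\lVert T_t-T_s\rVert_{\mathrm{TV}}=|t-s|\,\lVert T_{\mathrm{train}}-T_{\mathrm{shift}}\rVert_{\mathrm{TV}}$, gives a Lipschitz bound in $t$.

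The only step needing care is that $T_t$ remains an admissible evaluation distribution. It keeps the common support, so the learning state $\pi_M$ and draw exogeneity (Definition~\ref{def:exogeneity}) already cover it under the support convention of Definition~\ref{def:shift}. The per-task gaps are therefore the same numbers along the whole path. No real obstacle remains beyond this bookkeeping.
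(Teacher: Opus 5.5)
Your proposal is correct and matches the paper's route. The inequality is the total-variation duality with $\Gap\ge0$ behind Main Theorem~\ref{thm:main-I}(i)(b), and your positive-part bound is that estimate read one-sidedly. The equality case is the SH arithmetic $\Lgen(M_A)=(1-2\epsilon)\kT\ln2$ of Main Theorem~\ref{thm:main-III}(ii). Affinity along $T_t$ is immediate from $\V(M;T,b)=\sum_\tau T(\tau)\,\Gap(M;\tau,b)$.
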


The bound alone does not exclude threshold shapes in the
\emph{ratio} form. The slope of $\rhogen$ against TV is
$\sup_\tau\Gap/\V_{\mathrm{train}}$; it is device-dependent and
unbounded. A state whose capital concentrates on light tasks can
reach $\rhogen=0$ under an arbitrarily small TV movement. The
same-axis pair carries the shape contrast:

\begin{proposition}[$\flatstar$ redraw: exact per-bit linearity]
\label{prop:perbit}
Let $\tau$ be a flat task with $X$ a single register of
$k_{\mathrm{eff}}$ uniform bits, $M$ a perfect copy of $X$, and
let the shift redraw $k_{\mathrm{res}}=k_{\mathrm{eff}}-c$ bits of
$X$ independently and uniformly (support-extending; coupling
specification: the persisting part matches, the remainder is
independent). Then
\begin{equation}
\begin{aligned}
\V_{\mathrm{train}} &= k_{\mathrm{eff}}\,\kT\ln2,\\
\V_{\mathrm{shift}} &= c\,\kT\ln2,\\
\rhogen &= \frac{c}{k_{\mathrm{eff}}}
=1-\frac{k_{\mathrm{res}}}{k_{\mathrm{eff}}} :
\end{aligned}
\label{eq:perbit}
\end{equation}
retention is exactly per-bit linear in the redraw depth
$k_{\mathrm{res}}$. (Arithmetic of Corollary~\ref{cor:v-identity}
on the copy state; numerics in Appendix~\ref{app:verification}.)
\end{proposition}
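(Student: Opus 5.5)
The plan is to reduce both values to conditional mutual informations via the extraction identity, and then compute those informations on the copy state. Take the training task $\tau$ with $Y_\tau=\emptyset$ (or $Y$ independent of $X$ and of $M$, which changes nothing below). $X$ is $k_{\mathrm{eff}}$ uniform bits and $M$ is a perfect copy of $X$, so every conditional $p(x\mid m)$ is a point mass and $p(x)$ is uniform. Hence (F5$'$) holds, $\tau$ is $\flatstar$ relative to $\pi_M$, and Lemma~\ref{lem:flat-identity} gives
\[
\V_{\mathrm{train}}=\kT\,I(M;X)=\kT\,H(X)=k_{\mathrm{eff}}\,\kT\ln2,
\]
independently of $b$.

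For the shifted task $\tau'$ I would use the declared coupling. Split the redrawn medium as $X'=(X'_{\mathrm{per}},X'_{\mathrm{res}})$. The persisting $c$ bits coincide with the corresponding bits of $X$, and the $k_{\mathrm{res}}$ redrawn bits are uniform and independent of $(M,X,D)$. Draw exogeneity over the extended support is part of the support convention. The $\tau'$-entry of $\pi_M$ is fixed by this specification, and it is consistent with the world marginal: $X'$ is uniform on $k_{\mathrm{eff}}$ bits because $X'_{\mathrm{per}}$ is uniform and $X'_{\mathrm{res}}$ is independent and uniform.

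The next step is to check (F5$'$) for $\tau'$ relative to $\pi_M$. Given $M=m$, the persisting block is a point mass and the redrawn block is uniform on $2^{k_{\mathrm{res}}}$ values, so $p(x'\mid m)$ is uniform on a subset. The unconditional $p(x')$ is uniform. Together with (F1)--(F4), which are inherited from the tuple and unchanged by the redraw, this makes $\tau'$ a $\flatstar$ task. Lemma~\ref{lem:flat-identity} then gives $\V_{\mathrm{shift}}=\kT\,I(M;X')$. By the chain rule and the independence of the redrawn block,
\[
I(M;X')=I(M;X'_{\mathrm{per}})+I(M;X'_{\mathrm{res}}\mid X'_{\mathrm{per}})=c\ln2+0 .
\]
The first term equals $c\ln2$ because $M$ determines $X'_{\mathrm{per}}$; the second is zero because $X'_{\mathrm{res}}$ is independent of $(M,X'_{\mathrm{per}})$.

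Dividing the two values gives $\rhogen=c/k_{\mathrm{eff}}=1-k_{\mathrm{res}}/k_{\mathrm{eff}}$, which is valid since $\V_{\mathrm{train}}>0$.

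The main obstacle is the support-extending tier. Total variation is uninformative there, so the proof must rest entirely on the declared coupling. Concretely, the coupling has to be a legitimate $\tau'$-entry of the learning state, with world-marginal consistency and exogeneity satisfied. The joint (F5$'$) clause also has to hold for the new entry, since it is a property of $(\pi_M,p_{\tau'})$ and not of the tuple alone. Once these points are checked, the rest is the arithmetic of Corollary~\ref{cor:v-identity}.
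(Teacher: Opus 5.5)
Your proposal is correct and is essentially the paper's argument: the paper's proof consists only of the arithmetic of Corollary~\ref{cor:v-identity} (i.e., Lemma~\ref{lem:flat-identity}) on the copy state, applied once with $T_{\mathrm{train}}=\delta_\tau$ and once with the shifted point mass on the redrawn task. You also make explicit three points the paper leaves implicit or delegates to its numerical machine check. These are the (F5$'$) verification for the new $\tau'$-entry of $\pi_M$, the world-marginal consistency and exogeneity of the declared coupling, and the tacit requirement $X\perp Y_\tau$ (or $Y_\tau=\emptyset$); without that last condition $\V_{\mathrm{train}}$ would be $\kT\,H(X\mid Y_\tau)$ rather than $k_{\mathrm{eff}}\,\kT\ln2$.
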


\subsection{The gate family: key redraw and the threshold}
\label{sec:gate-threshold}

The gate device is realized in \emph{Type II} form. This
declaration matters and is part of the device definition. In the
gate imported from Ref.~\cite{Sudo_MPU}, the key can either sit on
the initial value of a manipulable medium register (Type~I) or be
wired into a frozen, energy-decoupled parameter $\Theta$. The gate
passes iff the submitted $B$ equals $s(\Theta)$, with the medium $B$
starting uniform (Type~II). For lower bounds, Type~I is safe: the
conditional-extraction channel it opens only strengthens the
informed side (cf.\ Main Theorem~\ref{thm:main-III}(iii)). The
\emph{upper} bounds that carry the present theorem, however, require
Type~II. Because $\Theta$ is frozen and untouchable, the
$M$--$\Theta'$ correlation opens no extraction channel, and the
imported caps hold as stated (Appendix~\ref{app:sbar}).

\emph{Device KR (key-redraw family).} This Type~II gate has key
length $k$, prize $F_{\mathrm{cart}}$, horizon $K_{\mathrm{hor}}$,
and budget $b$. The pass map $s$ is uniformly $d$-fold
degenerate. Because the agent never touches the labels of $\Theta$
(Type~II), the normal-form reduction of Ref.~\cite{Sudo_Spectrum}
applies, and the device reduces to an effective key of
$k_{\mathrm{eff}}=k-\log_2 d$ bits. The shift $\Theta\to\Theta'$
preserves $c$ bits of the \emph{effective} coordinates and redraws
the remaining
\begin{equation}
k_{\mathrm{res}}:=k_{\mathrm{eff}}-c=k-\log_2 d-c
\label{eq:kres}
\end{equation}
independently and uniformly. The memory is $M=s(\Theta)$, a copy
of the pass state. The coupling specification (persisting bits
match; remainder and degeneracy coordinates independent) is part of
the device. Equation~(\ref{eq:kres}) is an \emph{import} of the
composed-exemption synthesis of Ref.~\cite{Sudo_Spectrum} in its
compatible case: $c_{\mathrm{eff}}=c$, and the transfer split refines
the degeneracy split. Here $c\le k_{\mathrm{eff}}$ by construction,
so the truncation $\min\{c,k_{\mathrm{eff}}\}$ is inactive. The
closed form in general position is open
there and is not claimed here.

\begin{theorem}[Gate threshold transition, Type II]
\label{thm:gatethreshold}
Write $C_G$ for the bookkeeping remainder of the imported gate
bounds (Sec.~\ref{sec:regime}): a constant assumed finite and
uniform in
$(k,k_{\mathrm{eff}},k_{\mathrm{res}},\beta b,
\beta F_{\mathrm{cart}},K_{\mathrm{hor}})$, whose proven numerical
upper bound is not available from the companion. Assume two
hypotheses:
\emph{(i) full suppression},
\begin{equation}
k_{\mathrm{eff}}\ln2\ \ge\ \beta b
+\ln\bigl(\beta F_{\mathrm{cart}}\,K_{\mathrm{hor}}\bigr)
\iff
\beta F_{\mathrm{cart}}\,c_{\mathrm{full}}\le1,
\label{eq:fullsup}
\end{equation}
with $c_{\mathrm{full}}:=K_{\mathrm{hor}}2^{-k_{\mathrm{eff}}}
e^{\beta b}$; and \emph{(ii) positivity},
\begin{equation}
\beta F_{\mathrm{cart}}\,(1-c_{\mathrm{full}})-\beta b-2\,C_G>0 ,
\label{eq:positivity}
\end{equation}
which certifies $\V_{\mathrm{train}}>0$ and hence the domain of
$\rhogen$ (Definition~\ref{def:retention}). Hypothesis (i) alone
does not: $(k_{\mathrm{eff}},\beta b,\beta F_{\mathrm{cart}},
K_{\mathrm{hor}})=(100,50,10,3)$ satisfies
Eq.~\eqref{eq:fullsup} while the lower bound of part (1) reads
$-40-2C_G<0$. Then on the device KR:
\begin{enumerate}
\item[(1)] $\V_{\mathrm{train}}\in
\kT\,\bigl[\beta F_{\mathrm{cart}}(1-c_{\mathrm{full}})-\beta b
-2\,C_G,\ \beta F_{\mathrm{cart}}+\beta b+C_G\bigr]$, with the
lower end positive by (ii);
\item[(2)] $\V_{\mathrm{shift}}\le\kT\,\bigl[\beta
F_{\mathrm{cart}}\min\bigl(1,K_{\mathrm{hor}}
2^{-k_{\mathrm{res}}}e^{\beta b}\bigr)+\beta b+C_G\bigr]$;
\item[(3)] $\V_{\mathrm{shift}}\ge\kT\,\bigl[\beta
F_{\mathrm{cart}}\min\bigl(1,K_{\mathrm{hor}}
2^{-k_{\mathrm{res}}}\bigr)-\beta F_{\mathrm{cart}}
\min(1,c_{\mathrm{full}})-\beta b-2\,C_G\bigr]$;
\item[(4)] consequently, as a function of $k_{\mathrm{res}}$: on
the side $k_{\mathrm{res}}\le\log_2 K_{\mathrm{hor}}$,
\begin{equation*}
\begin{aligned}
\rhogen
&\ge \frac{\beta F_{\mathrm{cart}}-1-\beta b-2C_G}
{\beta F_{\mathrm{cart}}+\beta b+C_G}\\
&=1-\frac{1+2\beta b+3C_G}
{\beta F_{\mathrm{cart}}+\beta b+C_G}\,;
\end{aligned}
\end{equation*}
for $k_{\mathrm{res}}\ge\log_2\bigl(K_{\mathrm{hor}}\,\beta
F_{\mathrm{cart}}\,e^{\beta b}\bigr)+t$,
\begin{equation*}
\rhogen\ \le\
\frac{2^{-t}+\beta b+C_G}
{\beta F_{\mathrm{cart}}(1-c_{\mathrm{full}})-\beta b-2C_G}\,.
\end{equation*}
The two shoulders are separated by
$\log_2(\beta F_{\mathrm{cart}})+\beta b/\ln2+t$ bits on the
$k_{\mathrm{res}}$ axis (the difference of the two boundary
positions; the $\log_2K_{\mathrm{hor}}$ terms cancel, and $t$ is
the free right-shoulder margin parameter, not a rounding
term)---a
threshold transition, on the same axis $k_{\mathrm{res}}$ on which
Proposition~\ref{prop:perbit} is exactly linear.
\end{enumerate}
Proof in Appendix~\ref{app:sbar}. Two premises carry it
(Appendix~\ref{app:sbar}): the posterior of the redrawn key given
the memory is uniform on the persisting fiber (exact, since
$\Theta$ is uniform and the persisting set is chosen independently
of $\Theta$), and the degeneracy is uniformly $d$-fold with Type~II
untouchable labels (non-uniform degeneracy is not treated, as in
Ref.~\cite{Sudo_Spectrum}).
\end{theorem}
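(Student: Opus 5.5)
The plan is to derive all four parts from two imported gate bounds, applied separately to the training and shifted learning states, and then to do the shoulder algebra of part~(4). The first input is the \emph{blind ledger cap} of Appendix~\ref{app:fixedI}. For a Type~II gate whose effective key is uniform over $2^{k}$ values from the agent's point of view, every admissible protocol satisfies
\[
\beta\,\mathbb{E}[W_{\mathrm{ext}}]\le\beta F_{\mathrm{cart}}\min\bigl(1,K_{\mathrm{hor}}2^{-k}e^{\beta b}\bigr)+\beta b+C_G .
\]
The second input is the matching \emph{achievability} statement. An agent that knows the pass state collects $F_{\mathrm{cart}}$ by a zero-work conditional swap, up to one bookkeeping remainder $C_G$. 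An agent that merely submits $K_{\mathrm{hor}}$ distinct guesses at zero work collects $F_{\mathrm{cart}}\min(1,K_{\mathrm{hor}}2^{-k})$, up to the same remainder. Before using either input I reduce the device through the normal form of Ref.~\cite{Sudo_Spectrum}, which applies because the labels are Type~II and untouchable. After this reduction $\Theta$ enters only through its $k_{\mathrm{eff}}$ effective coordinates.

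For part~(1), take $T_{\mathrm{train}}$ with $M=s(\Theta)$, so the informed branch knows the effective key exactly. The informed branch then lies between $F_{\mathrm{cart}}-C_G\kT$ and $F_{\mathrm{cart}}+b+C_G\kT$; the upper end is the trivial cap. By Remark~\ref{rem:blind}, the blind branch is the blind game on the full $k_{\mathrm{eff}}$-bit key. Under full suppression~(i) we have $c_{\mathrm{full}}\le 1/(\beta F_{\mathrm{cart}})\le1$, so the $\min$ resolves to $c_{\mathrm{full}}$. The blind branch is also at least $0$, since the idle protocol is admissible. Subtracting gives the interval in~(1). Hypothesis~(ii) is exactly the statement that its lower end is positive, which places $\rhogen$ in its domain.

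For parts~(2) and~(3), I first use the posterior premise. Given $M$, the shifted effective key is uniform on a fiber of $2^{k_{\mathrm{res}}}$ values, because the persisting coordinates match and the redrawn ones are independent. The informed per-realization problem at each $m$ is therefore a blind game on a $k_{\mathrm{res}}$-bit key, with the known persisting coordinates hard-wired into the protocol. This gives the upper bound of~(2) after dropping the blind branch, which is nonnegative. For~(3), I run the guessing protocol on the fiber for the informed branch and insert the blind cap with $\min(1,c_{\mathrm{full}})$; this yields two remainders and the $-\beta b$ term.

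Part~(4) is then algebra. On the side $k_{\mathrm{res}}\le\log_2K_{\mathrm{hor}}$ the first $\min$ in~(3) equals $1$. Using $\beta F_{\mathrm{cart}}\min(1,c_{\mathrm{full}})\le1$ from~(i), the numerator of $\rhogen$ is at least $\beta F_{\mathrm{cart}}-1-\beta b-2C_G$, and the denominator is bounded by the upper end of~(1). For $k_{\mathrm{res}}\ge\log_2(K_{\mathrm{hor}}\beta F_{\mathrm{cart}}e^{\beta b})+t$ we get $\beta F_{\mathrm{cart}}K_{\mathrm{hor}}2^{-k_{\mathrm{res}}}e^{\beta b}\le2^{-t}$. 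Inserting this into~(2) and dividing by the lower end of~(1) gives the right shoulder. The separation of the shoulders is the difference of the two boundary positions, in which the $\log_2K_{\mathrm{hor}}$ terms cancel.

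The main obstacle is the informed upper bound in~(2). The imported cap is stated for a blind agent facing a uniform key. Here an agent holding a correlated memory must be reduced, realization by realization, to a blind agent facing the residual fiber, and the remainder constant must not grow in the process. I would attempt this by conditioning on $M=m$ under convention~($\alpha$): the agent then faces a fixed task whose key is uniform on the fiber. I would then check that the Type~II freezing of $\Theta$ leaves no extraction channel through the $M$--$\Theta'$ correlation, so that the cap applies verbatim with $k_{\mathrm{res}}$ in place of $k$ and the uniformity of $C_G$ covers the substitution. This is exactly the point where Type~I would fail and Type~II is needed.
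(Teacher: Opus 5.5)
Your proposal is correct and follows the paper's proof essentially step for step. In (1) you set the known-key zero-work swap against the blind ledger cap; in (2) you feed the uniform posterior on the persisting fiber into the ledger bound and drop the nonnegative blind branch; in (3) you subtract the blind cap from a non-adaptive list of distinct candidates; and (4) is the same shoulder arithmetic. The obstacle you flag in (2) does not need a fresh derivation: the paper imports statement B3 of Appendix~\ref{app:fixedI}, which is already stated for a partially informed agent (largest atom $w_0$ of the conditional prior, proved $\theta$-wise), and applies it with view $v=m$ and $w_0=2^{-k_{\mathrm{res}}}$; the same Type~II fact ($B\perp M$ at $t_0$) is what the paper cites for the upper end of (1) that you call trivial, and in (1) you need only $\min(1,c_{\mathrm{full}})\le c_{\mathrm{full}}$ rather than your step $1/(\beta F_{\mathrm{cart}})\le 1$, which (i) does not give.
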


\begin{proposition}[The $d$ dial, compatible case]
\label{prop:ddial}
As a comparison of device families with the \emph{effective-%
coordinate} persistence $c$ held fixed,
$k_{\mathrm{res}}=(k-\log_2 d)-c$ decreases in $d$, and the bounds
of Theorem~\ref{thm:gatethreshold} move toward retention as $d$
grows. Numerically, at
$(k,c,\beta F_{\mathrm{cart}},K_{\mathrm{hor}},b)=(44,36,1000,3,0)$
under the script convention $C_G=5.0$ of
Appendix~\ref{app:verification} (a convention, not a proven bound
on $C_G$; the numbers are substitution arithmetic, an
illustration): the lower bound on $\rhogen$ at
$d=16$ is $0.176$, above the upper
bound $0.017$ at $d=1$ (Appendix~\ref{app:devices}). \emph{This
separation of bounds is a numerical fact at the stated parameters
and stated convention,
not a general theorem}---the crossing of the bounds is parameter
dependent. The comparison is well posed only in effective
coordinates: fixing the persistence in raw key coordinates is
ill-posed under changes of $d$, and a transferred bit that lands in
a null direction of the degeneracy is transferred and worthless
(as recorded in Ref.~\cite{Sudo_Spectrum}); the exact form of
``degeneracy helps retention'' is that \emph{for persistence
transverse to the degeneracy directions}, coarser required
precision shortens the effective redraw depth.
\end{proposition}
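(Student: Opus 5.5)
The plan has two layers: a structural part that is pure substitution into Eq.~\eqref{eq:kres} and Theorem~\ref{thm:gatethreshold}, and a numerical part that is arithmetic at the stated quadruple.

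\emph{Structural part.} With the effective-coordinate persistence $c$ held fixed, Eq.~\eqref{eq:kres} gives $k_{\mathrm{res}}=k-\log_2 d-c$. This is strictly decreasing in $d$, so the monotonicity claim is immediate. To make ``the bounds move toward retention'' precise, I would read off how each bound of Theorem~\ref{thm:gatethreshold} depends on $d$.
\begin{itemize}
\item The shift-side bounds (2) and (3) depend on $d$ only through $\min(1,K_{\mathrm{hor}}2^{-k_{\mathrm{res}}}e^{\beta b})$ and $\min(1,K_{\mathrm{hor}}2^{-k_{\mathrm{res}}})$. Both are nonincreasing in $k_{\mathrm{res}}$, hence nondecreasing in $d$.
\item The training-side bounds of (1) depend on $d$ only through $c_{\mathrm{full}}=K_{\mathrm{hor}}2^{-k_{\mathrm{eff}}}e^{\beta b}$. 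This term grows with $d$, i.e.\ against retention. The lower end of (1), and the subtracted term $\beta F_{\mathrm{cart}}\min(1,c_{\mathrm{full}})$ in (3), are therefore the only places where $d$ works the wrong way.
\end{itemize}
I would state the direction claim as holding whenever these $c_{\mathrm{full}}$ terms are negligible against the movement of the $k_{\mathrm{res}}$ terms. This is why the proposition asserts no general theorem: the crossing of the bounds is parameter dependent.

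I would also record two conditions before substituting. Hypotheses (i) and (ii) of Theorem~\ref{thm:gatethreshold} must hold for both device families, $d=1$ and $d=16$. The compatible-case import behind Eq.~\eqref{eq:kres} must apply, so that $c\le k_{\mathrm{eff}}$ for both values of $d$.

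\emph{Numerical part.} At $(k,c,\beta F_{\mathrm{cart}},K_{\mathrm{hor}},\beta b)=(44,36,1000,3,0)$ with $C_G=5$, the two families have
\[
d=1:\ (k_{\mathrm{eff}},k_{\mathrm{res}})=(44,8),\qquad
d=16:\ (k_{\mathrm{eff}},k_{\mathrm{res}})=(40,4).
\]
Full suppression $\beta F_{\mathrm{cart}}\,c_{\mathrm{full}}\le1$ holds trivially, because $c_{\mathrm{full}}\approx3\cdot2^{-40}$ at worst. Positivity also holds, since $1000(1-c_{\mathrm{full}})-10>0$.

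For $d=16$, $k_{\mathrm{res}}=4>\log_2 3$, so the left-shoulder formula of (4) does not apply. I would instead form the ratio of the lower bound (3) to the upper end of (1):
\[
\rhogen\ \ge\ \frac{1000\cdot\tfrac{3}{16}-1000\,c_{\mathrm{full}}-10}{1000+5}\ \approx\ \frac{177.5}{1005}\ \approx\ 0.176 .
\]
For $d=1$, I would form the ratio of the upper bound (2) to the lower end of (1):
\[
\rhogen\ \le\ \frac{1000\cdot\tfrac{3}{256}+5}{1000\,(1-3\cdot2^{-44})-10}\ \approx\ \frac{16.7}{990}\ \approx\ 0.017 .
\]
This gives the separation $0.176>0.017$.

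\emph{Main obstacle.} The difficulty is bookkeeping rather than analysis. I need to pick the correct pairing of bounds for each family: lower-over-upper for the retention lower bound, upper-over-lower for the retention upper bound. I also need to check that $\V_{\mathrm{train}}>0$ in both families, so that $\rhogen$ is defined. Finally, I would stress that every step is conditional on the script convention $C_G=5$ and is not a proven bound.

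For the closing remark about raw versus effective coordinates, I would cite the null-direction observation of Ref.~\cite{Sudo_Spectrum} and not prove it.
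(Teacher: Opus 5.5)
Your proposal is correct and follows the paper's own route. The monotonicity of $k_{\mathrm{res}}$ is read off Eq.~\eqref{eq:kres}, and the separation is substitution arithmetic in Theorem~\ref{thm:gatethreshold}, after checking full suppression, positivity, and $c\le k_{\mathrm{eff}}$ for both families: you pair (3) with the upper end of (1) at $d=16$, and (2) with the lower end of (1) at $d=1$.

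Two cosmetic points. First, $177.5/1005\approx0.1766$ rounds to $0.177$; the stated $0.176$ is reproduced if $\beta F_{\mathrm{cart}}\min(1,c_{\mathrm{full}})$ is replaced by its full-suppression bound $1$, giving $176.5/1005\approx0.1756$. Second, the shrinking lower end of (1) raises the $\rhogen$ upper bound and so moves toward retention, not against it; the subtracted term in (3) is the only place where $d$ acts against retention.
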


Raising the budget dial $b$ shifts the transition window to the right by
$\beta b/\ln2$ bits (Appendix~\ref{app:devices}). The residual
guesswork of the partially informed side therefore becomes
purchasable.

\subsection{Outlook: the recovery price}
\label{sec:gate-outlook}

The dial $c_0$ of Ref.~\cite{Sudo_Spectrum} measures effective
identification leakage. It plays no role within the non-leaky gate
class used above. Its expected role concerns \emph{leaky} variants,
and we state this role strictly as a conjecture.

\begin{conjecture}[$c_0$ as a recovery price]
\label{conj:c0}
Consider leaky variants of the device KR in which a failed probe
discloses more than the membership bit---a family \emph{outside}
the non-leaky gate class of Ref.~\cite{Sudo_MPU}---and the
post-shift recovery problem: probe-charged update sequences
returning $\V$ from the collapsed right shoulder of
Theorem~\ref{thm:gatethreshold} to the training level. Then, in
the regime $K_{\mathrm{hor}}\ll m_{\mathrm{ident}}
(k_{\mathrm{res}})$ where the evaluation run itself cannot progress
identification: (i)~under membership-only disclosure the recovery
requires $\Omega(2^{k_{\mathrm{res}}})$ probes (the
channel-level-uninformative exponential invariance of
Ref.~\cite{Sudo_Spectrum}, translated to $\V$ through the
companion envelope); (ii)~under a constant identification rate
$c_0$, $O(k_{\mathrm{res}}/c_0)$ probes suffice (upper bound; the
matching lower bound is proved in Ref.~\cite{Sudo_Spectrum} only
for $\theta$-transitive devices, and fails for general devices by
explicit counterexample); (iii)~hence $c_0$ controls the
\emph{recovery price} while leaving the passive retention ratio
$\rhogen$ unchanged. The regime restriction in (iii) is essential:
when $K_{\mathrm{hor}}$ exceeds the identification horizon, leakage
feeds back into passive retention itself.
\end{conjecture}

We deliberately state this as a conjecture for two reasons.
First, the accounting convention that merges probe charges with the
search ledger $\Sigma_{\mathrm{search}}$ is a design shared with
the continual-learning framework (a companion in preparation,
Sec.~\ref{sec:discussion}). This convention should be fixed once,
not twice. Second, a work-ledger converse for leaky gates poses a
genuinely new proof obligation. The imported unlock bound is proved
for the non-leaky class. By contrast, the two leakage theorems of
Ref.~\cite{Sudo_Spectrum} are statements in probe-count and
guesswork currency, not work-ledger statements. We also record that
the imported oracle theorem is one-directional. An exponential
recovery price implies exponentially small $c_0$, but $c_0\to0$
does \emph{not} imply exponential cost. A bad-set device with
$c_0\to0$ and cheap recovery is exhibited in
Ref.~\cite{Sudo_Spectrum}. $c_0$ is not asserted to be a
theorem-level control variable of retention.

\section{Related work}
\label{sec:related}

Neighboring lineages supply ingredients that this paper imports,
while differing from it in identifiable ways. We collect the
positive imports and differences here and in
Tables~\ref{tab:correspondence} and~\ref{tab:claims}. We repeat the
single dated and scoped novelty statement from
Sec.~\ref{sec:intro}: within the fence of prior work surveyed up to
2026-07-12, we did not find a prior framework simultaneously
equipped with the same operational deletion value, evaluation
budget, access structure, search ledger, and four-axis alignment
map. Each ingredient separately has close relatives, tabulated
below.

\emph{Goldt--Seifert.}---The stochastic thermodynamics of
learning~\cite{Goldt2017PRL,Goldt2017NJP} established the
memory-side entropy-production ledger and its bound on information
acquisition. Their subsystem account $\Delta S(\omega)+\Delta Q$ is
the direct ancestor of $\sigmaM$ (Definition~\ref{def:sigmaM}), and
their efficiency $\eta\le1$ is the syntactic-bits analogue of
Main Theorem~\ref{thm:main-II}(iii). The numerator differs. That
work uses acquired information, whereas ours uses a deletion-based
task value with explicit $T$, budget, and access arguments.
Consequently, ``acquired but never capitalized'' (device LB$+$diss)
is separated by a theorem rather than by the choice of numerator
(Sec.~\ref{sec:regime}, B3).

\emph{Horowitz--Esposito, Hartich--Barato--Seifert.}---The
universal ledger identity of Lemma~\ref{lem:ledger-identity} is the
static-source degeneration of their bipartite information-flow
balances~\cite{Horowitz2014,Hartich2014} plus ancilla bookkeeping.
It is imported, not claimed. On that axis, we connect it to the
value-side breakage decomposition (Lemma~\ref{lem:decomp}) and
identify condition (f).

\emph{Shettell--Auff\`eves.}---Ref.~\cite{Shettell2026} defines a
bits-layer inferential efficiency (information gain over cumulative
memory erasure cost) and proves $\eta\le1$ with an equality
condition. It also decomposes the inefficiency into two correlation
subjects. At the bits layer, these results provide the closest
existing analogue of our $\le1$-plus-equality-plus-subjects
package. Their two subjects are the bits-layer analogues of our (b)
waste and (c) $Y$-contamination. That framework lacks the value
layer (informed--blind counterfactual, $T$, $b$), the forgetting
subject $g_a$, and the separation of implementation reversibility
($\Sigmatot=0$) as an independent equality condition. At the bits
layer, information gained but never capitalized cannot be
penalized. Our framework includes all three.

\emph{Kolchinsky--Wolpert.}---Semantic
information~\cite{Kolchinsky2018Semantic} measures the value of
information by a work difference on tasks. This is the lineage in
which $\V$ itself stands, together with the decision-theoretic value
of information. The counterfactual type differs: their
scramble-then-continue versus our deletion followed by
re-optimization from scratch. Their formulation has no budget slot.
The multiplier/ratio pair also differs. Their ``bang-per-bit''
$\kappa>1$ reappears in our regime map as the
gate-plus-finite-budget cell and collapses to $\le1$ in the
$\flatstar$ cell (Table~\ref{tab:correspondence}). Their mismatch
cost~\cite{Kolchinsky2017Mismatch} is the static cost-side neighbor
of Sec.~\ref{sec:retention}. In that work, a change in the input
distribution of a \emph{fixed process} changes the process's
\emph{dissipation}. Here, a movement of the \emph{task distribution}
changes the informed--blind \emph{asset value}, with the budget as
an argument. A distribution moves in both; the kind of distribution
and the charged quantity differ.

\emph{Boyd--Crutchfield--Gu(--Binder).}---Thermodynamic machine
learning~\cite{Boyd2022TML} established the affinity of extracted
work and log-likelihood. It also established the equivalence of
maximum-work and maximum-likelihood training inside its setting.
The sequel on thermodynamic
overfitting~\cite{Boyd2025Overfitting} carries the same-source test
analysis. Its exact asymptotic work rate (their Theorem~1) is usable
as a training-side baseline in our Proposition~\ref{prop:corresp}
settings. Both are positive imports, and
Proposition~\ref{prop:corresp} isolates the structural features
under which their equivalence holds
(Remark~\ref{rem:corresp-limits} lists the six formal differences).
What this paper adds is orthogonal to their axis. Their overfitting
is the small-sample failure mode within one source.
Ref.~\cite{Boyd2022TML} states, of the long-term effectiveness of
thermodynamic learning, ``we leave analyzing the long-term
effectiveness of thermodynamic learning to the future''. By
contrast, Main Theorem~\ref{thm:main-III}(ii) exhibits the
complementary mode: perfect fit, collapse under shift alone. This
mode requires axes for shift, budget, access, and record/world
separation. They are not expressible in the unipartite formalism,
which carries no access partition and no record/world distinction
(Remark~\ref{rem:corresp-limits}).

\emph{Still and successors; modularity.}---The dissipative price of
nonpredictive correlation~\cite{Still2012}, the cost--benefit ledger
of memory~\cite{Still2020}, and partially observable Szilard
engines~\cite{Still2021Partial,Daimer2023Observer} are the nearest
access-side relatives. The modularity theorem of
Ref.~\cite{Boyd2018Modularity} concerns correlation value destroyed
by access structure. Together, these works already exhibit the
\emph{phenomenon} that correlation can fail to convert into work.
The precise form of our separation claim therefore concerns the
statement, not the phenomenon. Main
Theorem~\ref{thm:main-I}(ii)--(iii) quantifies a
deletion-and-re-optimization value, per-task re-optimization, a
task \emph{distribution}, and a budget slot. None of those
frameworks contains this combination, so the theorem is not
expressible there (Sec.~\ref{sec:value-main}).

\emph{Value of data for bounded agents.}---Algorithmic
catalysis~\cite{Perrier2025PartI,Perrier2026PartII} is the closest
system to the capitalization concept: reusable computational
structure that lowers future costs. Its value is a cost reduction,
not an informed--blind work difference on a task distribution, and
budget, access, and replacement vocabulary are absent.
Epiplexity~\cite{Finzi2026Epiplexity} is the machine-learning-side
representative of value-of-data for computation-bounded observers.
It is the neighbor of the scope remark of
Sec.~\ref{sec:value-updates}. That remark concerns what re-encoding
is worth, precisely what $\V$ does not price. The bits-per-joule
accounting of Ref.~\cite{Takahashi2026Limits} shares the
degeneracy-of-%
accounting concern (Sec.~\ref{sec:discussion}) but has no value
currency or exchange laws. Ref.~\cite{Hasegawa2026Classifiers}
relates Bayes error to entropy production and activity. This
relation is an inference-side exchange law adjacent to our C2/C4
axis.

\emph{Asymptotics and run-time ledgers.}---Universal work
extraction~\cite{Watanabe2025Universal} shows informed and blind
optimal work rates coincide in the asymptotic i.i.d.\ limit. This
result underwrites the placement of $\V$ on the finite-run,
non-i.i.d., budgeted side. The regret--dissipation identity of
Ref.~\cite{Lumbreras2026RL} is a run-time excess-EP ledger
(informed-minus-achieved on a single fixed source). It occupies a
different slot from the update-side $\sigmaM$ and has no
informed--blind counterfactual, $T$, $b$, or capital side. The
minimax redundancy pricing of Ref.~\cite{Touzo2020MDL} is the
static worst-case cost-side neighbor. The resource-theoretic
interconvertibility of Ref.~\cite{Brandao2013Resource} is the
standing asymptotic contrast. The finite-run K1/K2 route asymmetry
of Sec.~\ref{sec:regime} is always to be read against this contrast.
Refs.~\cite{Caraffa2026Dissipative,Fiderer2025WorkCapacity} are
handled in the terminology defense of
Sec.~\ref{sec:retention-defs}. The activity axis of
Ref.~\cite{Fiderer2025WorkCapacity} is the recorded fourth slot of
Sec.~\ref{sec:value-blackwell}.

\emph{Honesty box.}---The inequality of Main
Theorem~\ref{thm:main-II}(iii) is a composition of
Sagawa--Ueda-type~\cite{Sagawa2010Generalized} and
Goldt--Seifert-type~\cite{Goldt2017PRL} steps and is not claimed as
new. The universal ledger identity is
imported~\cite{Horowitz2014,Hartich2014}. The collapse theorems of
Main Theorem~\ref{thm:main-III} are arithmetically elementary given
the extraction identity and the imported gate bounds. The claimed
contributions are: the operational quantity $\V$ (deletion plus
re-optimization, with $T$, access, and budget as arguments) and its
separation theorem at the stated quantifier; the accounting design
$\sigmaM$ with condition (f), the equality conditions, the regime
map, and the blank-start cumulative bound; and the alignment map
(problem setting, four axes with device witnesses, and the
subject-difference identity connecting them to the ledger).

\begin{table*}[t]
\caption{\label{tab:correspondence}%
Correspondence of efficiency-type quantities. GS:
Ref.~\cite{Goldt2017PRL}; KW: Ref.~\cite{Kolchinsky2018Semantic};
SA: Ref.~\cite{Shettell2026}; the last two rows are the adjacent
run-time and asymptotic results,
Refs.~\cite{Lumbreras2026RL,Watanabe2025Universal}. Lineages that
define no efficiency-type ratio (BCG/BCGB, Perrier) are compared in
Sec.~\ref{sec:related} and Table~\ref{tab:claims}.}
\begin{ruledtabular}
\begin{tabular}{p{1.9cm}p{3.0cm}p{2.9cm}p{2.4cm}p{2.6cm}p{2.4cm}}
Quantity & Numerator & Denominator & Direction & Counterfactual;
budget slot & ``Acquired but unusable'' \\
\colrule
$\etacap$ (this work) & $\Delta\V$: increment of informed$-$blind
optimal work on $T$ & measured search ledger $\sigmaM$
(implementation dependent) & $\le1$ is a \emph{regime theorem};
the failures outside $\flatstar{+}\sigmaM{+}$(f) are signatures &
deletion $+$ re-optimization from scratch; budget: yes ($b$ an
argument of $\V$) & separated by a \emph{theorem} ($\etacap=0$
from the extraction identity) \\
GS $\eta$ & acquired information $I(\sigma_T{:}\sigma)$
(syntactic bits) & weight-side EP $\Delta S(\omega)+\Delta Q$
(direct ancestor of $\sigmaM$) & $\le1$ (theorem in their
setting) & none (one real process); budget: no & built into the
definitional choice of the numerator \\
KW $\kappa_{\mathrm{stored}}$ & stored-information viability
difference & \emph{minimal} acquisition work $\kT\ln2\cdot I$ &
$>1$ possible (multiplier, ``bang-per-bit'') & scramble, then the
same protocol continues; budget: no & same \\
KW $\eta_{\mathrm{stored}}$ & stored semantic information &
acquired information $I_p(X_0;Y_0)$ & $\le1$ & same; budget: no &
same \\
SA $\eta$ & inferential information gain & cumulative memory
erasure cost (nearest to $\sigmaM$) & $\le1$ by construction,
with equality condition & none (one real process); budget: no &
same \\
regret--dissipation \cite{Lumbreras2026RL} & cumulative regret &
cumulative run-time dissipation (an identity) & identity &
informed minus achieved on one fixed source; budget: no & run-time
slot; no capital side \\
asymptotic collapse~\cite{Watanabe2025Universal} &
\multicolumn{2}{l}{informed $=$ blind optimal work rates in the
i.i.d.\ limit} & collapse & asymptotic; budget: no & underwrites
the finite-run, budgeted domain of $\V$ \\
\end{tabular}
\end{ruledtabular}
\end{table*}

\begin{table*}[t]
\caption{\label{tab:claims}%
Claims, hypotheses, breaking witnesses, and prior attribution.
Devices in Appendices~\ref{app:devices} and~\ref{app:sbar}.}
\begin{ruledtabular}
\begin{tabular}{p{3.5cm}p{4.1cm}p{3.9cm}p{4.5cm}}
Claim & Hypotheses & Breaking / witness devices & Prior
attribution \\
\colrule
Separation (Main Thm.~\ref{thm:main-I}(ii)(iii)) & $M$-local
updates; draw exogeneity; $\forall n\,\exists$ quantifier & LB
(witness); cartridge variant & phenomenon-level relatives:
Refs.~\cite{Still2012,Kolchinsky2018Semantic,Boyd2018Modularity};
the statement's quantifier is not expressible there \\
$\flatstar$ extraction identity (Main
Thm.~\ref{thm:main-II}(i)) & (F1)--(F5$'$); $q=p$ & gates break it
(G); deep dyadic atoms break it at small $b$ & SU
lineage~\cite{Sagawa2010Generalized}; per-task equality,
$b$-independence, pathwise domain here \\
Ledger identity (Main Thm.~\ref{thm:main-II}(ii)) & any $M$-local
update $+$ implementation & --- (identity) & static-source
degeneration of Refs.~\cite{Horowitz2014,Hartich2014} \\
$\etacap\le1$ (Main Thm.~\ref{thm:main-II}(iii)) & $\flatstar$
support; (F5$'$)-stable $M$-local update; (f); $\sigmaM$ account &
E ($\Sigmatot$ account); G
(gate$+$finite $b$); CX ((f) broken) & composition of
Refs.~\cite{Sagawa2010Generalized,Goldt2017PRL}; the regime and
its boundary are the content \\
Equality conditions (Main Thm.~\ref{thm:main-II}(iv)) & as above &
E (achievement); W: (b); F: (a)$+$(c); Y: (c); E$+$ex: (e)
divergence & bits-layer
analogue in Ref.~\cite{Shettell2026}; value layer, forgetting
subject, reversibility separation here \\
Blank-start cumulative (Main Thm.~\ref{thm:main-II}(v)) & blank
start; flat support & CX two-stage ($\eta_{\mathrm{cum}}=1$) &
--- \\
Alignment schema (Main Thm.~\ref{thm:main-III}) & two-layer
domain: ($\alpha$)($\beta$)($\gamma_{\mathrm{acc}}$)($\delta$) $+$
(F5$'$)-stability for the conditional rate; joint neutrality
$(M,D)\perp Y$ for the raw layer & one-time-pad boundary witness;
SH/BG/AC$'$/RJ (paired interventions, one coordinate each; AC$'$
also the $\gamma$-necessity witness) & equivalence inside their
setting:
Ref.~\cite{Boyd2022TML}; same-source overfitting:
Ref.~\cite{Boyd2025Overfitting}; shift/budget/access/record axes
here \\
Subject-difference identity
(Cor.~\ref{cor:subject-diff}) & common flat support; two-state
(F5$'$); exogeneity over $\mathcal{T}$ & realization-correlated
shifts excluded & --- \\
Orthogonality (Cor.~\ref{cor:orthogonality}) & flat pairs;
$\sigmaM>0$ & rectangle device family & refutes a conflation
current in the literature, both ways \\
Threshold transition (Thm.~\ref{thm:gatethreshold}) & Type II KR;
full suppression $+$ positivity; uniform finite $C_G$; compatible
case & leaky variants (outside scope;
Conjecture~\ref{conj:c0}) & gate bounds imported from
Ref.~\cite{Sudo_MPU} (Appendix~\ref{app:fixedI}, incl.\ B3);
$k_{\mathrm{res}}$ imported from Ref.~\cite{Sudo_Spectrum} \\
\end{tabular}
\end{ruledtabular}
\end{table*}

\section{Discussion and limitations}
\label{sec:discussion}

\emph{Not a theory of statistical generalization.}---Every quantity
in this paper is a work value on a stated task distribution, under
stated budget and access. The retention quantities of
Sec.~\ref{sec:retention} compare two such values. No
sample-complexity, risk-bound, or i.i.d.-asymptotic content is
claimed. Neither $\Lgen$ nor $\rhogen$ reduces to or estimates a
generalization error. We make no claim that value retention
explains or predicts the statistical generalization of any learning
system. The correspondence with machine-learning phenomena asserted
anywhere in this paper is a correspondence of accounting structure
(Sec.~\ref{sec:intro}).

\emph{The U-restriction, and two self-assessments.}---All protocol
classes carry the imported Assumption-U restriction
(Sec.~\ref{sec:setting-asterisk}). The reduction question
remains open in the companion framework. Thus, $\V$ is a
restricted-class--relative quantity, and every interpretation-level
statement in this paper (``capital'', ``full capitalization'')
carries that asterisk. The theorems are unconditional on the
restricted class. Two further self-assessments delimit what $\V$
measures. \emph{Budget-latent learning}: updates exist with
$\Delta\V(b)=0$ but $\Delta\V(b')>0$ for $b'>b$. Such an update acquires
correlation whose cashing route is closed at the evaluation
budget. The learning predicate of
Definition~\ref{def:learning} is relative to $b$
(Sec.~\ref{sec:regime} is built from this fact). \emph{Preparedness
has value zero}: because the blind side knows the public task tuple
and re-optimizes per task, and because draws are exogenous, the knowledge
of $T$, of the statistics of the world, and of pre-positioned
strategies is constitutively free. $\V$ does not measure it. A
variant that priced preparedness would change the blind
counterfactual. Such variants, like risk-averse (non-affine)
functionals of $T$ (Remark~\ref{rem:t-relativity}), are explicit
non-goals of this paper.

\emph{$\flatstar$ scope.}---The exactness package of
Sec.~\ref{sec:ledger} is proved on the $\flatstar$ class. Its
boundary is recorded rather than smoothed over
(Remark~\ref{rem:flat-scope}): multi-register manipulable sets with
internal correlation are out of scope; partial-read variants of
(F4) are undeveloped; outside (F5$'$) the extraction identity fails
at small budgets (a budget-floor restoration for general dyadic
states is a recorded open slot); and on gate tasks both the
identity and the linear bound fail. The map of Table~\ref{tab:regime}
charts that enablement regime rather than hiding it.

\emph{The gate ledger is open.}---Within flat supports,
Corollary~\ref{cor:subject-diff} reduces shift-type overfitting to
re-posting of the three ledger subjects. For the gate family, no
per-task decomposition exists in this paper. The collapse of
Theorem~\ref{thm:gatethreshold} is not attributed to any subject.
In the budget-bound domain, value carries an enablement premium
that no $\kT\times$(information subject) currency can express.
Whether an enablement subject can be defined, and thereby yield a
ledger decomposition of the gate domain, is open.

\emph{Repeated-deployment returns are untreated.}---The cumulative
bound of Main Theorem~\ref{thm:main-II}(v) is an acquisition-side
ledger over update histories. It is not a return-on-investment
statement for repeated deployment of the same capital. A
``thermodynamics of deployment revenue'', defined as multi-run evaluation
with task and budget sequences that extends the transfer inventory of
Ref.~\cite{Sudo_Spectrum}, is a future extension, contingent on
the continual framework below. Here we only mark the slot.

\emph{Accounting degeneracy (boundary paragraph).}---A companion
line of work, kept out of this paper by design, concerns the
degeneracy of the accounting itself. On macroscopically
implementable device classes, the information accounting admits
degeneracies under which the observed learning function or
statistical performance does not identify the physical dissipation.
Under these conditions, no thermodynamics-specific term appears
in the predictive structure of learning theory, and dissipation is
non-identifiable from function alone. We assert here neither a new
universal energy floor, nor the existence of arbitrarily low-cost
implementations, nor any claim about the energy costs of real AI
systems. The full treatment, with its own fence, is deferred to a
separate short report.

\emph{Continual learning.}---Sequences of tasks, budgets, and
updates (overwriting, replay, transfer, and the reachability of
value under sustained operation) are the subject of a companion
in preparation. The present $\V$ is the single-evaluation
functional that that framework iterates.

\appendix

\section{Devices and numerical verification}
\label{app:devices}

\subsection{Device LB (ledger-blocked correlation)}
\label{app:dev-lb}

\emph{Registers.} $X_1,X_2$: one uniform bit each, independent (the
working media of the two tasks). $Y=(Y_1,\dots,Y_n)$: i.i.d.\ fair
bits independent of $(X_1,X_2)$---world registers outside every
manipulable set. $M_{\mathrm{core}}$: a perfect copy of $X_1$ (one
bit of genuine capital, so that $\Delta\V=0$ is exhibited in a
nondegenerate situation with $\V>0$). All energies degenerate
($E\equiv0$).

\emph{Tasks.} $\mathcal{T}=\{\tau_1,\tau_2\}$ in the type of
Definition~\ref{def:task}: $\tau_1=(W=(X_1,X_2,Y)$ [shared world],
$p(w)=$ uniform product, $E\equiv0$, $G=\emptyset$,
$\mathrm{Man}=\{X_1\}\cup\text{ancillas}$, $K_\tau=3$, static
all-read$)$; $\tau_2$ identical with
$\mathrm{Man}=\{X_2\}\cup\text{ancillas}$. $T_n=(1/2,1/2)$;
$b\ge0$ arbitrary (the informed protocols require no upfront
investment; the budget axis is deliberately not exercised here).
Giving the access structure a $Y$ read port changes nothing:
$X_1\perp Y$, so adding $Y$ to the read set does not lower
$H(X_1\mid\text{read})$---the read-family scan of the verification
confirms this invariance.

\emph{Record and update sequence.} $D:=$ the observation record of
$Y$ ($D_j$ a copy of $Y_j$; read-only during updates;
$I(D;W)=n\ln2>0$). Updates:
$M_k:=(M_{\mathrm{core}},\text{copies of }Y_1,\dots,Y_k)$, each
step one controlled-copy from $D_k$ into fresh memory (an
admissible growth update).

\emph{Exact gap values.} For each $\tau_i$: informed supremum
$=\kT[\ln\lvert X_i\rvert-H(X_i\mid M_k)]=\kT\,I(M_k;X_i)$, blind
supremum $=0$. The converse (no admissible protocol exceeds these)
is the three-step argument: (1)~the conditional chain bound
(Lemma~\ref{lem:condchainprime}) applied to the $M=m$ conditional
law---in this device all variables are independent given $m$
($X_1\mid m$ deterministic or uniform; $X_2,Y$ independent of
$m$), so only the $X_1$ component of the conditional total
correlation contributes; (2)~invariance of non-manipulable
marginals: designed mechanisms act on
$\mathrm{Man}\cup\text{ancillas}$ only (mechanism locality),
$G=\emptyset$, so the $Y$-joint and the other $X$ are untouched
---correlation-creating mechanisms can be written, but their
account-credit is exactly offset in the ledger, which the chain
bound disposes of wholesale; (3)~auxiliary netting: imported
auxiliaries net $\le0$ under the terminal-balance convention
($\beta$) (Lemma~\ref{lem:auxnetting}). Achievability: a
conditional permutation ($X_1$ conditioned on
$M_{\mathrm{core}}$) followed by Szilard extraction.

\emph{Consequences.} $I(M_k;D)=k\ln2$ strictly increasing;
$I(M_k;W)=(k+1)\ln2$; $\Gap(M_k;\tau_1,b)=\kT\ln2$ ($k$-%
independent), $\Gap(M_k;\tau_2,b)=0$; hence
$\V(M_k;T_n,b)=\tfrac12\kT\ln2$ constant---$\Delta\V=0$ (Main
Theorem~\ref{thm:main-I}(ii)(iii)).

\emph{The two blocking conditions.} (i)~$X_i\perp Y$: the acquired
correlation targets registers independent of the working media,
closing the laundering route (using the $M$--$Y$ correlation to
reset $X_i$ cheaply and re-extract). (ii)~$Y\notin\mathrm{Man}_\tau$
for every $\tau$ in the support: the cashing route is severed.
Read through (i), LB implements the classical statement that
memorizing environmental noise creates no value; read through (ii),
the correlation is genuine correlation with world variables of
every supported task, blocked purely by the absence of a
manipulation route.

\emph{$T$-relativity contrast.} Appending
$\tau_3$ (extraction from $\mathrm{Man}=\{Y\}$, the bundled
single-register reading of Remark~\ref{rem:t-relativity}) with
$T'=(1/3,1/3,1/3)$ gives
$\Gap(M_k;\tau_3,b)=\kT\,I(M_k;Y)=k\,\kT\ln2$, so
$\V(M_k;T',b)$ increases at every step: the same update sequence
capitalizes nothing under $T$ and every step under $T'$
(Remark~\ref{rem:t-relativity}).

\emph{Cartridge variant.} Replacing the copies of $Y$ by copies of
keys whose gates are absent from the support yields the same
conclusions ($I(M;B)$ grows; no cartridge task in $T$; $\Delta\V=0$):
what LB realizes by manipulation blockage, the variant realizes by
gate absence---the blockage has multiple physical realizations,
and this robustness is what Sec.~\ref{sec:regime} (B3) builds on.

\subsection{The ledger device suite (equality and its
non-achievements)}
\label{app:dev-ledger}

\emph{Device E (reversible copy; equality achievement).}
$W=\{X_1\}$ (one uniform bit); a single $\flatstar$ task
($E\equiv0$, $G=\emptyset$, $\mathrm{Man}=\{X_1\}\cup$anc,
$K=3$, static all-read); $T=\delta_{\tau}$; $D=$ perfect copy of
$X_1$; $M_0$ blank. Update: controlled-copy of $D$ into fresh
memory (growth form) or into a pre-erased cell (overwrite form);
both deterministic permutations on degenerate registers:
$\Sigmatot=0$, $\sigmaM=\ln2$, $\Delta\V=\kT\ln2$. Under the
$\Sigmatot$ account this is the B1 witness ($\Delta\V>0$ at
$\Sigmatot=0$); under $\sigmaM$ it achieves $\etacap=1$ with all
four equality conditions: no forgetting (blank start), no waste
($D$ is a perfect copy of $X_1$: $g_b=H(D\mid X_1)=0$), no
$Y$-contamination ($Y=\emptyset$), reversible implementation.

\emph{Device E$+$ex (excess).} Device E with the implementation
made finite-time: excess dissipation $s>0$, $\sigmaM=\ln2+s$,
$\etacap=\ln2/(\ln2+s)<1$---the (e) violation in isolation; as
$s\to0^{+}$ it also provides the divergent family
$\Delta\V/(\kT\Sigmatot)=\ln2/s$ for the $\Sigmatot$ account.

\emph{Device W (waste).} $D=(X_1,J)$ with $J$ an independent junk
bit; update copies all of $D$ (two bits, reversible):
$g_b=\ln2$ ignites alone; $\etacap=\ln2/(2\ln2)=1/2$.

\emph{Device F (forgetting plus conversion).} Shared world
$(X_1,X_2)$; two symmetric tasks $\tau_1,\tau_2$
($\mathrm{Man}=\{X_1\}$ resp.\ $\{X_2\}$), $T=(1/2,1/2)$;
$M_0=$ copy of $X_1$; $D=$ copy of $X_2$; update: overwrite (a
non-growth $M$-local update erasing the content of $M_0$ and
writing $D$). Books: on $\tau_1$, forgetting
$g_a(\tau_1)=\ln2$ \emph{and} $Y$-contamination
$g_c(\tau_1)=\ln2$ (for $\tau_1$, $X_2$ is a readable environment
variable), averaging to $g_a=g_c=\ln2/2$; on $\tau_2$ the same
acquisition is capital ($\Delta\Gap_{\tau_2}=+\kT\ln2$). Net:
$\Delta\V=0$, $\sigmaM=\ln2$, $\etacap=0$, $\V$ unchanged at
$\tfrac12\kT\ln2$. One task's capital is another task's
$Y$-contamination: the subjects are task relative, and no
cross-subject cancellation occurs in the $T$-average (all subjects
are nonnegative). Under the shift reading, device F is the book
reversal of Sec.~\ref{sec:accounting}: train book blind to the
forgetting, old book charged in full.

\emph{Device Y (preparedness).} $D=$ copy of $Y_1$, a readable
non-manipulable world bit independent of $X$; update copies $D$:
$g_c=\ln2$ ignites alone; $\etacap=0$---a copy of a readable
variable is what the blind side holds for free (the efficiency
version of the preparedness self-assessment of
Sec.~\ref{sec:discussion}).

\emph{Device G (gate; enablement).} Key $B$ of $n$ uniform bits;
tamper-proof pass/fail gate releasing $F_{\mathrm{cart}}$; horizon
$K_{\mathrm{hor}}$; budget $b$; $D=$ perfect copy of $B$;
$M_0$ blank; update a reversible copy of the $n$ bits,
$\sigmaM=n\ln2$. Informed value $=F_{\mathrm{cart}}$ by zero-work
conditional swap at any $b\ge0$; blind value $\le
F_{\mathrm{cart}}\min(1,K_{\mathrm{hor}}2^{-n}e^{\beta b})+b
+C_G\,\kT$ (Appendix~\ref{app:fixedI}). Hence
$\etacap\to\infty$ as $\beta F_{\mathrm{cart}}\to\infty$ at fixed
$(n,K_{\mathrm{hor}},b)$ with
$K_{\mathrm{hor}}2^{-n}e^{\beta b}<1$ (partial suppression,
Appendix~\ref{app:fixedI}, B1). Illustration at
$(n,\beta F_{\mathrm{cart}},\beta b,K_{\mathrm{hor}})=(10,100,0,3)$:
$\beta\Delta\V\ge 99.7-C_G$, i.e.\ $\etacap\ge14.2$ under the
script convention $C_G=1.0$ (Appendix~\ref{app:verification};
substitution arithmetic, not a theorem).

\emph{Device LB$+$diss (uncapitalized dissipation).} Device LB with
the copy re-implemented in overwrite form: each stage first erases
the target cell (quasistatic isothermal erasure: heat $\kT\ln2$,
$\Sigmatot=0$) and then writes $D_k$ reversibly:
$\sigmaM=\ln2$ per stage (plus $\Sigmatot>0$ if finite-time);
$\Delta I(M;D)=\ln2$ per stage; $\Delta\V=0$ (LB blockage);
$\etacap=0$. Against device E, the raw-acquisition ratio
$\dpJD/\sigmaM$ equals $1$ on both---the surrogate cannot separate
them; $\etacap$ separates them as $1$ versus $0$
(Sec.~\ref{sec:regime}, B3).

\emph{Device CX (recycler; (f) violation).} A single $\flatstar$
task: $W=\{X\}$ with $X=(X_1,X_2)$ two uniform bits as one
manipulable register, $Y=\emptyset$. $D=(D_1,D_2,D_J)$: copies of
$X_1,X_2$, and an independent fair coin. $M_0=$ copy of $D_J$ (a
junk-loaded state, one controlled-copy from blank). Non-growth
update: (i)~matched erasure of the $M$ cell against control $D_J$
(un-copy: the cell becomes deterministic), (ii)~write $D_1$ into
the freed cell, (iii)~absorb one blank ancilla and write $D_2$.
All stages deterministic permutations, $Q_{\mathrm{bath}}=0$:
$\sigmaM=[2\ln2-\ln2-0]+0=\ln2$, $\Sigmatot=0$, $\dpJD=2\ln2$,
recycling credit $I(M_0;D\mid M_1)=\ln2$;
$\Delta\V=2\kT\ln2$ and $\etacap=2$. Lemmas~\ref{lem:updateDP}
and~\ref{lem:decomp} hold on CX unimpaired; only the (f)-form of
the ledger identity breaks. One-bit variant with excess $s$:
$\sigmaM=s$, $\Delta\V=\kT\ln2$, $\etacap=\ln2/s\to\infty$.
Two-stage blank-start sequence (absorb junk: $\etacap=0$,
$\sigmaM=\ln2$; then recycle: $\etacap=2$, $\sigmaM=\ln2$):
cumulative $\eta_{\mathrm{cum}}=2\ln2/2\ln2=1$ exactly---the
books of the history are honest (Main
Theorem~\ref{thm:main-II}(v)).

\emph{Supply-side scope of $g_c$.} The subject
$g_c=I(M';Y_\tau\mid M)$ charges the whole $Y$-redundant component
of the acquisition, not only literal copies: on tasks with
$I(X;Y)>0$ and $D$ a copy of $X$, a perfectly hygienic learner
(no forgetting, no junk, $M'=$ copy of $D$) still incurs
$g_c=I(X;Y)>0$, and since the update kernel cannot read $Y$, no
admissible learner achieves $\etacap=1$ on such
$(X,Y,D)$---equality achievability is a joint property of the
record format and the task correlation structure, not of learner
hygiene alone. The devices above sit at the $X\perp Y$ corner
where this does not activate.

\subsection{The retention device suite}
\label{app:dev-retention}

The alignment witnesses SH, BG, AC$'$, RJ are defined with their
proofs in Appendix~\ref{app:sbar}. The remaining retention devices:

\emph{Device KR (key redraw; Type II).} As defined in
Sec.~\ref{sec:gate-threshold}: wired key $\Theta$ (frozen,
energy-decoupled), pass iff $B=s(\Theta)$, uniform $d$-fold
degenerate pass map, prize $F_{\mathrm{cart}}$, horizon
$K_{\mathrm{hor}}$, budget $b$; $M=$ copy of $s(\Theta)$; shift
redraws $k_{\mathrm{res}}$ effective bits (persisting bits match;
remainder and degeneracy coordinates independent---the
support-extension coupling obligation of
Definition~\ref{def:shift}, discharged inside the device
definition).

\emph{Device DJ (non-shared worlds; relativity of waste).} Two
tasks with disjoint worlds $W_{\tau_A}=\{X_A\}$,
$W_{\tau_B}=\{X_B\}$ (independent uniform bits); $D=$ copy of
$X_A$ (outside the world of $\tau_B$); update: copy $D$ into blank
$M$. Books: $g_b(\tau_A)=0$ but $g_b(\tau_B)=\ln2$---the same
acquisition is capital on $\tau_A$'s book and entirely waste on
$\tau_B$'s, and $\Lgen^{\mathrm{upd}}=\kT\ln2$ is carried in full
by the $g_b$ difference: waste, too, is relative on non-shared
worlds (Sec.~\ref{sec:accounting}).

\emph{Rectangle devices (Corollary~\ref{cor:orthogonality}).}
Shared world $(X_A,X_B)$, two symmetric tasks; update $=$ copy of
$D_A$ (record of $X_A$) into blank memory; implementation
reversible ($\sigmaM=\ln2$, $\etacap=1$) or with excess
$s=\ln2\,(1-\eta_0)/\eta_0$ ($\etacap=\eta_0$); evaluation
$T_{\mathrm{train}}=\delta_{\tau_A}$ with
$T_{\mathrm{shift}}=\delta_{\tau_A}$ ($\rhogen=1$),
$\delta_{\tau_B}$ ($\rhogen=0$), or $(\delta,1-\delta)$
($\rhogen=\delta$). In the $(1,0)$ corner the four equality
conditions of Main Theorem~\ref{thm:main-II}(iv) hold $T$-a.s.\ on
$T_{\mathrm{train}}$.

\emph{Restoration checks.} For each alignment witness, restoring
the named condition restores argmax agreement:
SH: $T_{\mathrm{shift}}\to T_{\mathrm{train}}$; RJ: replace all
components of $D$ by $X$-content; AC$'$: extend the manipulable
set to $Z=(X,Y)$; BG: $b\to\infty$ (hypothesis-free sandwich) or
remove the gate. Together with the positive domain (five
(F5$'$)-stable candidates realizing the equalities of
Proposition~\ref{prop:corresp} with argmax agreement), these close
the correspondence table numerically.

\subsection{Numerical verification}
\label{app:verification}

Three pure-Python scripts (standard library only), re-executed
directly in the main verification pass, cover the devices above;
all exit with status 0.

\emph{Coverage discipline (stated exactly).} None of the scripts
performs a numerical search over multi-stage adaptive protocols:
such suprema are not finitely enumerable, and the converse load is
carried entirely by the analysis (the three-step argument of
Appendix~\ref{app:dev-lb}; the conditional chain bound). What the
numerics verify is: (a)~the \emph{arithmetic of the converse
values}---the analytic upper bounds re-evaluated independently
from the joint distributions; (b)~\emph{achievability
transformations}---the copy/permutation state maps simulated
explicitly, with the extraction-stage work evaluated by the
identity $\kT[\ln\lvert X\rvert-H(X\mid\text{read})]$ (a
semi-analytic step); and (c)~\emph{saturation within the read
family}---for the LB family, all $2^{k+1}$ read subsets $S$
evaluated, confirming that the informed value saturates at
$\kT\ln2$ on subsets containing $M_{\mathrm{core}}$, is $0$
without it, and that the blind value is $0$ independent of $k$.
Gap evaluations on $\flatstar$ devices use the right-hand side of
Eq.~(\ref{eq:flat-identity}); this represents $\Gap$ only on
(F5$'$)-conforming states, and the scripts \emph{machine-check}
(F5$'$) on every reached state (point-mass/uniform-on-subset test);
noisy (BSC) states appear only as (F5$'$)-violation witnesses.
Gate-task rows are value-substitution arithmetic in the imported
analytic bounds, with the bookkeeping constant $C_G$ fixed inside
each script by convention ($C_G$ evaluated at $1.0$ in the
ledger-suite
gate check and conservatively at $5.0$ throughout the
retention suite) and margins checked as inequalities. These script
values are \emph{conventions}, not proven upper bounds on $C_G$;
every gate-task numerical display in this paper is substitution
arithmetic under them, and no theorem rests on the numbers
(Sec.~\ref{sec:regime}, Theorem~\ref{thm:gatethreshold}).

\emph{Script 1} (device LB; $N_Y=6$, 256 atoms, deviations
$\le1.1\times10^{-14}$): monotone growth of $I(M_k;D)$ to
$n\ln2$; $I(M_k;W)=(k+1)\ln2$; pinned gaps
$(\kT\ln2,0)$ for all $k$; constancy of $\V$ under $T$ versus
strict growth under $T'$; and the single-task anchor
(Main Theorem~\ref{thm:main-I}(i)(e)) by read-set scan.

\emph{Script 2} (ledger suite; 14 checks): the universal and
four-term ledger identities on all ten devices, with the (f)-form
on the nine (f)-conforming devices and $\Sigmatot\ge0$ throughout;
equality achievement on device E (both implementations) with all
subjects zero; the non-achievement pattern
(W $\to g_b=\ln2$ only; F $\to g_a=g_c=\ln2/2$;
Y $\to g_c=\ln2$ only); LB$+$diss versus E under the
raw-acquisition surrogate (both $1.0$) and under $\etacap$
($0$ vs $1$); the B1 contrast; the device-G arithmetic
($\etacap\ge14.2$ under the script convention $C_G=1.0$; the check
condition is the qualitative $\etacap>1$ only); the five-term decomposition
Eq.~(\ref{eq:full-ledger}) on all devices, degenerating to four
terms under (f); Lemma~\ref{lem:updateDP} on all
device--task pairs; the per-task equality of
Lemma~\ref{lem:decomp} on all pairs \emph{including} the
(f)-violating CX; device CX as permanent negative control
($\etacap=2$; failure of the (f)-form; validity of the universal
form) and the two-stage cumulative chain
$\eta_{\mathrm{cum}}=1$; and the (F5$'$)-necessity witness: the
deep dyadic atom $p=(1/2,1/4,1/8,1/8)$ yields blind uniformization
trajectory works $(\ln2,0,-\ln2,-\ln2)\,\kT$---the negative
branches violating the $b=0$ pathwise cap while the expectation
matches---against a flat-conditional control with all branches
nonnegative; and the partial-suppression boundary of the device-G
unboundedness: at $(n,K_{\mathrm{hor}},b)=(1,2,0)$ the suppression
parameter reaches one, a non-adaptive blind protocol submits both
keys within the horizon and collects the prize's leading term, and
the slope $(1-c)$ vanishes---confining the fixed-content
divergence to $c<1$ (the B1 premise).

\emph{Script 3} (retention suite; 84 checks in eight sections):
(F5$'$) machine checks on all reached states throughout; the
two-layer correspondence
equalities and argmax agreement on five candidates, together with
the one-time-pad boundary witness of
Proposition~\ref{prop:corresp-boundary} as a negative
control---(F5$'$)-stability of the world-correlated initial memory
and both candidates, survival of the conditional exchange rate on
both, disjoint raw-stock and value argmax sets, and the blank-start
contrast restoring agreement; the SH gap
matrix, strict reversal, $\rhogen=\epsilon/(1-\epsilon)$, the (G1)
equality, and restoration; AC$'$ and RJ reversals with
manipulable-set and record restorations, the legacy
environment-record variant, and the BSC (F5$'$)-violation witness;
the BG fit order, parameter conditions, $b=0$ reversal, hypothesis-%
free recovery sandwich, reference conditional equality, and gate
removal; the rectangle corners, the $(1,0)$-corner subject
audit, and the partial-shift continuum; the KR certificates for the
two separated hypotheses of Theorem~\ref{thm:gatethreshold}---full
suppression and positivity---plus the witness quadruple
$(100,50,10,3)$ on which (i) holds while the training lower bound
is negative (hypothesis (i) alone does not certify the domain; the
positivity certificate requires (ii)), the closed-form left-shoulder
bound of part (4), threshold shoulders, window monotonicity, the
$d$/$b$ dials (including the numerical separation of
Proposition~\ref{prop:ddial} under the stated convention), the
hypothesis-domain guard (at fixed
$(k_{\mathrm{eff}},K_{\mathrm{hor}},b)$, hypothesis (i) forces
$\beta F_{\mathrm{cart}}\le1/c_{\mathrm{full}}$), and the
shoulder-separation arithmetic
($\log_2(\beta F_{\mathrm{cart}})+\beta b/\ln2+t$, the
$\log_2K_{\mathrm{hor}}$ terms cancelling); the
subject postings, the
subject-difference identity, junk as sub-event, the device-F book
reversal, the device-DJ $\tau$-dependence, and the shared-world
constancy of Lemma~\ref{lem:sharedworld}; and the per-bit
linearity of Proposition~\ref{prop:perbit}.

\section{Equality proof and proof infrastructure}
\label{app:equality}

\subsection{The blind branch does not depend on $M$: full
argument}
\label{app:proof-blind}

A blind protocol has no read port on $M$ (deletion), $M$ is
energetically decoupled (Definition~\ref{def:memory}(i)), and by
clauses (ii)--(iv) it enters no dynamics: every stage map of a
blind protocol factorizes as $\mathrm{id}_M\otimes\Lambda_W$, and
the non-$M$ variables evolve autonomously. Hence
$\mathbb{E}[W_{\mathrm{ext}}]$ of every
$P\in\mathrm{Prot}_b(A_\tau-M,p)$ is a functional of the non-$M$
marginals alone, and the blind branch does not depend on the
coupling structure $\pi_M$. The one remaining conceivable
channel---importing an auxiliary whose initial state is correlated
with $M$---is closed not by pricing (the no-unpriced-resources
clause only prices imports, it does not forbid them) but by type:
the coupling at $t_0$ is task data, not an object of protocol
choice, and mid-run auxiliaries are adjoined as fresh registers in
product state ($\mathrm{id}\otimes\rho$); no preparation map acting
on $M$ exists in the class. This is the justification for writing
$\Wblind(T,b)$ without an $M$ argument.

\subsection{The conditional chain bound, stated}
\label{app:proof-chain}

The converse below rests on a conditional form of the chain bound
of the companion framework, with the total-correlation term
retained. We state it once, so that the weight-bearing path
depends on a statement rather than on the interior of an imported
proof.

\begin{lemma}[Conditional chain bound, retained form]
\label{lem:condchainprime}
Let $\tau$ be a task, $M$ a memory conforming to
Definition~\ref{def:memory} (hence frozen during runs), and
$P\in\mathrm{Prot}_b(A_\tau+M,p)$ under convention ($\alpha$)
(Sec.~\ref{sec:setting-imports}). Then for every memory value $m$,
\begin{equation}
\begin{aligned}
\beta\,\mathbb{E}[W_{\mathrm{ext}}\mid M=m]
&\le-\beta\sum_i\Delta F^{(m)}_{\mathrm{loc},i}\\
&\quad+TC(t_0\mid m)-TC(t_K\mid m),
\end{aligned}
\label{eq:condchain}
\end{equation}
where $F_{\mathrm{loc},i}$ are the local free energies, $TC$ is the
total correlation of the register set, and both are evaluated on
the $M=m$ conditional law. This is the chain bound of the companion
framework~\cite{Sudo_MPU} applied to the conditional law, with the
telescoping term $\sum_k(\mathrm{dec}-\mathrm{inc})
=TC(t_0)-TC(t_K)$ of its proof retained rather than discarded.
\end{lemma}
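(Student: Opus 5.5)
The plan is to derive the bound from the companion framework's chain bound, applied realization by realization. Fix a memory value $m$ and a protocol $P\in\mathrm{Prot}_b(A_\tau+M,p)$. By Definition~\ref{def:memory}(ii)--(iv), $M$ is frozen during the run. Conditioning the whole run on the event $M=m$ therefore yields a well-defined process on the non-$M$ registers and the ancillas. Under convention ($\alpha$), that process is itself an admissible protocol for the conditional initial law $p_\tau(w\mid m)$. The read port on $M$ becomes a constant parameter of each mechanism, and clause (g) holds conditionally on $m$ by convention. Mechanism locality, the access rule, and the pathwise cap are statements about trajectories. They are inherited by the conditional law, because every positive-probability trajectory of the conditioned run is a positive-probability trajectory of the original run.

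Next I would rerun the companion proof of the chain bound on this conditional process, stage by stage. For each stage $k$ it gives a per-stage inequality of the form
\begin{equation*}
\beta\,\mathbb{E}[W_k\mid m]\le-\beta\sum_i\Delta_k F^{(m)}_{\mathrm{loc},i}+\mathrm{dec}_k-\mathrm{inc}_k .
\end{equation*}
Here $\mathrm{dec}_k$ and $\mathrm{inc}_k$ are the decrease and increase of total correlation at stage $k$. The single-step ingredients are the second law for a local mechanism in contact with the bath and the bookkeeping of multi-information under a local map. Both are properties of the joint law at the stage boundaries, so they apply verbatim to the $m$-conditional law. Summing over the stages, the local free-energy terms telescope to $\sum_i\Delta F^{(m)}_{\mathrm{loc},i}$. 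The correlation terms sum exactly to $TC(t_0\mid m)-TC(t_K\mid m)$ once we do not discard the nonnegative terminal correlation.

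Auxiliaries imported at their one-shot $D_{\max}$ price and the terminal-balance convention ($\beta$) affect only the left-hand side's accounting. The companion bound already includes these terms, so their netting carries over per $m$; I would cite the auxiliary-netting step rather than redo it.

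The main obstacle is justifying that the per-stage inequality really holds for the conditional law, and is not merely an average of it. The danger is that the $m$-conditioned process might fail to be Markov with respect to the same mechanisms. Failure is possible only if some stage map could read or act on $M$ nontrivially. Frozenness rules out the latter, and a read of a constant is just a choice of kernel. So I would argue explicitly that each stage kernel factorizes as $\mathrm{id}_M\otimes\Lambda_W^{(m)}$, with $\Lambda_W^{(m)}$ obeying local detailed balance for each fixed $m$. I would check that the bath remains uncorrelated with the registers given $m$ (bath independence is a standing clause of Definition~\ref{def:implementation}(iv) for the imported class). Finally I would verify that the retained telescoping term is exactly $TC(t_0\mid m)-TC(t_K\mid m)$ on the register set including the ancillas.
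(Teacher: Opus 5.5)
Your proposal is correct and takes the same approach as the paper. The paper gives the lemma no separate proof: it applies the companion chain bound to the $M=m$ conditional law, licensed by frozenness (Definition~\ref{def:memory}(ii)$+$(iv)) and convention ($\alpha$), and keeps the telescoped term $TC(t_0\mid m)-TC(t_K\mid m)$ instead of discarding it. Two minor points: the bath-independence you need is a standing assumption of the imported run-protocol class, not Definition~\ref{def:implementation}(iv), which governs update implementations; and auxiliary netting is not part of this lemma but is handled downstream by Lemma~\ref{lem:auxnetting}, which explicitly allows branch-dependent preparations $\rho_m$.
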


\begin{lemma}[Auxiliary netting]
\label{lem:auxnetting}
On a flat landscape ($E\equiv0$), the entropy contribution of any
imported auxiliary obeys
$\Delta S_{\mathrm{aux}}\le\ln\lvert A\rvert-S(\rho_{\mathrm{aux}})
\le\ln\lvert A\rvert-H_{\min}(\rho_{\mathrm{aux}})
=D_{\max}(\rho_{\mathrm{aux}}\Vert\pi_{\mathrm{flat}})$: the
auxiliary's $\Delta S$ credit cannot exceed its one-shot
$D_{\max}$ charge, so its net contribution under the terminal-%
balance convention ($\beta$) is $\le0$. The bound holds for
stochastic and per-realization (branch-dependent) preparations
$\rho_m$; and since fresh auxiliaries are adjoined in product
state, $\rho_m$ is constant within each branch, so the auxiliary
contributes zero to $TC(t_0\mid m)$.
\end{lemma}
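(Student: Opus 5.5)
The plan is to check the lemma's two claims in turn, since each is a short information-theoretic inequality about the auxiliary alone. First, on a flat landscape ($E\equiv0$) the local free energy of the auxiliary register is $-\kT$ times its entropy. The most it can contribute to the account is therefore its entropy increase $\Delta S_{\mathrm{aux}}=S(\rho_{\mathrm{final}})-S(\rho_{\mathrm{aux}})$, in nats. Since the final entropy of a register with alphabet $A$ is at most $\ln\lvert A\rvert$, this gives the first inequality. The second inequality is the standard ordering $H_{\min}\le S$ of R\'enyi entropies. The closing equality is the identity $D_{\max}(\rho\Vert\pi_{\mathrm{flat}})=\ln\max_a[\rho(a)/\lvert A\rvert^{-1}]=\ln\lvert A\rvert-H_{\min}(\rho)$.

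Next I will connect this to the bookkeeping. The auxiliary enters the running account with the one-shot charge $\kT\,D_{\max}$ (clause (d)). Under convention ($\beta$), any residual exergy left at the horizon is forfeited, so the only credit the auxiliary can generate is through its $\Delta S$ term in the chain bound of Lemma~\ref{lem:condchainprime}. Credit minus charge is then at most $\kT[\Delta S_{\mathrm{aux}}-D_{\max}]\le0$, which is the netting claim.

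For branch-dependent preparations $\rho_m$, I apply the same bound inside each $M=m$ conditional law. This is the law on which Lemma~\ref{lem:condchainprime} is evaluated, and each branch is charged its own $D_{\max}(\rho_m)$ by the pathwise cap. For stochastic preparations, the prepared state within a branch is a fixed mixture $\rho_m$. The inequalities concern $\rho_m$ only, so they hold verbatim for it.

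The remaining step, which I expect to be the only delicate one, is the total-correlation claim. The auxiliary is adjoined as a fresh register in product state, $\mathrm{id}\otimes\rho_m$, and no preparation map acts on $M$ (Remark~\ref{rem:blind}). So, conditionally on $M=m$, the auxiliary is independent of all other registers. Its marginal entropy then adds to both $\sum_i S_i$ and the joint entropy, so it cancels in $TC(t_0\mid m)$. The subtle point is that it is the \emph{conditioning} on $m$ that makes $\rho_m$ constant. Unconditionally, a branch-dependent preparation would be correlated with $M$ and would inflate the unconditional total correlation. This is exactly why the chain bound is applied to the conditional law under convention ($\alpha$).
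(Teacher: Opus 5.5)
Your proposal is correct and follows the paper's own argument, which is given in-line in the lemma statement, essentially step for step: the $\ln\lvert A\rvert$ ceiling on the final entropy, then $H_{\min}\le S$, then the flat-reference identity for $D_{\max}$, then netting against the unrecovered one-shot charge under convention ($\beta$), and finally conditional product-state adjoinment giving zero contribution to $TC(t_0\mid m)$. One small refinement: if a stochastic preparation is priced per realized component $\rho_j$ (drawn with probability $q_j$) rather than as the mixture, you need the concavity step $\ln\lvert A\rvert-S\bigl(\sum_j q_j\rho_j\bigr)\le\sum_j q_j[\ln\lvert A\rvert-S(\rho_j)]\le\sum_j q_j D_{\max}(\rho_j)$, because $D_{\max}$ of the mixture can exceed the average component charge, so your bound applied to the mixture alone would not close that case.
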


\subsection{Proof of the extraction identity
(Lemma~\ref{lem:flat-identity})}
\label{app:proof-flat}

\emph{Converse (branch upper bounds; valid for every state, no
use of (F5$'$)).} Write $X:=X_\tau$, $Y:=Y_\tau$.

\emph{Step 1.} $M$ is frozen (derived from
Definition~\ref{def:memory}(ii)$+$(iv)); under convention
($\alpha$), apply Lemma~\ref{lem:condchainprime} to the $M=m$
conditional law.

\emph{Step 2 (degenerate local free energies).} By (F1),
$F_{\mathrm{loc},i}=-\kT\,S_i$, so
$-\beta\sum_i\Delta F^{(m)}_{\mathrm{loc},i}
=\sum_i\Delta S^{(m)}_i$. Split the variables into three groups:
for $X$ (manipulable), $\Delta S^{(m)}_X\le\ln\lvert X\rvert
-S(X\mid m)$ (uniformization is the ceiling); for $Y$
(non-manipulable), designed mechanisms act only on
$\mathrm{Man}\cup\text{ancillas}$ (mechanism locality) and (F2)
removes fixed gates, so the $Y$-\emph{joint} distribution is
invariant through the run and $\Delta S^{(m)}_i=0$ for all
$i\in Y$; for auxiliaries, Lemma~\ref{lem:auxnetting} gives a net
$\le0$.

\emph{Step 3 (cancellation of the correlation terms).} By the
chain rule for total correlation,
$TC(t_0\mid m)=TC_Y(t_0\mid m)+I(X;Y\mid m)$ (one register $X$
against the group $Y$). By monotonicity of $TC$ under taking
subsystems, $TC(t_K\mid m)\ge TC_Y(t_K\mid m)$, and by Step~2 the
$Y$-joint is invariant, so $TC_Y(t_K\mid m)=TC_Y(t_0\mid m)$.
Hence $TC(t_0\mid m)-TC(t_K\mid m)\le I(X;Y\mid m)$.

\emph{Step 4 (assembly).}
$\beta\,\mathbb{E}[W_{\mathrm{ext}}\mid m]\le\ln\lvert X\rvert
-S(X\mid m)+I(X;Y\mid m)=\ln\lvert X\rvert-S(X\mid Y,m)$; taking
$\mathbb{E}_m$, the informed branch is
$\le\kT[\ln\lvert X\rvert-H(X\mid Y,M)]$. The same argument
without the $M$-conditioning (the blind branch is a functional of
the non-$M$ marginals, Appendix~\ref{app:proof-blind}) gives
blind $\le\kT[\ln\lvert X\rvert-H(X\mid Y)]$.

\emph{Achievability under (F5$'$): all budgets, pathwise
nonnegative, exact.} Informed: a conditional permutation with
parent set $(m,y)$ aligns the support of the conditional
distribution of $X$ (one stage, deterministic, zero work on the
degenerate landscape), and an isothermal expansion from the uniform
support $S_{m,y}$ to the full alphabet extracts, on \emph{every}
trajectory, $w_{\mathrm{ext}}(\omega)=\kT\ln(\lvert X\rvert/\lvert
S_{m,y}\rvert)\ge0$ (one stage; $K_\tau\ge3$ suffices). By
(F5$'$) each branch distribution is a point mass or uniform, so the
branch average attains $\kT[\ln\lvert X\rvert-H(X\mid M,Y)]$
exactly. Blind: the same two stages with parent $y$ only attain
$\kT[\ln\lvert X\rvert-H(X\mid Y)]$. The blind value attains the
blind converse, so the \emph{gap} closes as an equality, each
branch pinned separately.

\emph{Budget independence.} The converse nowhere involves $b$. But
that alone does not give $b$-independence of the gap: $\Gap$ is a
difference, and a budget that shrinks only the blind supremum
\emph{increases} the gap. What carries the claim is the pathwise
nonnegativity of the achieving protocols: with the running account
$E_k(\omega)$ read as cumulative net drawdown (extraction moves it
negative), every positive-probability trajectory of both protocols
satisfies $E_k(\omega)\le0\le b$ at every stage, so both branches
attain their ceilings at every budget including $b=0$. This is
precisely where (F5$'$) enters, and where deep dyadic atoms break
the equality (Definition~\ref{def:flatstar}). $\blacksquare$

The two halves have unequal scope, and the asymmetry is
load-bearing elsewhere: the informed converse holds for every
state and underlies Main Theorem~\ref{thm:main-I}(iv) (the
post-update state of a D-free update need not satisfy (F5$'$)) and
the flat-task bound below, whose blind side is pinned by the
$y$-clause; achievability of the gap equality is confined to
(F5$'$).

\subsection{Conditional inheritance of joint independence}
\label{app:proof-inherit}

\begin{lemma}[Joint independence is inherited under
conditioning]
\label{lem:jointindep}
If $A\perp(W,M,D)$ jointly, then for every partition of
$(W,M,D)$ into subtuples $(B_1,B_2)$: $A\perp B_1\mid B_2$.
Consequently $p(a\mid x,y,m,d)=p(a)$, and the kernel marginalized
over the ancilla, $\tilde K(m'\mid m,d):=\sum_a p(a)K(m'\mid
m,d,a)$, does not depend on $(x,y)$---yielding simultaneously the
Markov chains $X\!-\!D\!-\!M'$ given $(M,Y)$ and
$Y\!-\!(M,D)\!-\!M'$.
\end{lemma}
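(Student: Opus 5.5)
The plan is to prove the first claim directly from the definition of conditional independence on the finite probability space, and then read off the two consequences. Joint independence $A\perp(W,M,D)$ means $p(a,b_1,b_2)=p(a)\,p(b_1,b_2)$ for every partition $(B_1,B_2)$ of $(W,M,D)$, since $(B_1,B_2)$ is just a regrouping of the same tuple. Marginalizing over $b_1$ gives $p(a,b_2)=p(a)\,p(b_2)$. Dividing, on $\{p(b_2)>0\}$,
\[
p(a,b_1\mid b_2)=p(a)\,p(b_1\mid b_2)=p(a\mid b_2)\,p(b_1\mid b_2).
\]
This is exactly $A\perp B_1\mid B_2$. Taking $B_1=\emptyset$ and $B_2=(X,Y,M,D)$, with $W=(X,Y)$ on the $\flatstar$ split, yields $p(a\mid x,y,m,d)=p(a)$. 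Zero-probability conditioning events are dropped throughout, as is standard.

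Next, I compute the law of $M'$ given $(x,y,m,d)$. By Definition~\ref{def:update}, the kernel $K(m'\mid m,d,a)$ does not read $W$. Therefore
\[
p(m'\mid x,y,m,d)=\sum_a p(a\mid x,y,m,d)\,K(m'\mid m,d,a)=\sum_a p(a)\,K(m'\mid m,d,a)=\tilde K(m'\mid m,d),
\]
and this is visibly independent of $(x,y)$. The only point needing care is that the first equality uses conditional independence of $M'$ from $(X,Y)$ given $(M,D,A)$. This holds because $M'$ is generated by the kernel from $(m,d,a)$ with fresh internal randomness, so it is a Markov step. The one-line reduction is the main obstacle, and I would state it explicitly as the Markov property of the update channel.

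Both Markov chains then follow from this factorization. Since $p(m'\mid x,y,m,d)$ depends only on $(m,d)$, it equals $p(m'\mid y,m,d)$. Conditional on $(M,Y)$, this gives $X-D-M'$. It also equals $p(m'\mid m,d)$, which gives $Y-(M,D)-M'$; here $X$ is marginalized out first, which is harmless because the expression already lacks $x$. The chain $X-D-M'\mid(M,Y)$ is what the proof of Main Theorem~\ref{thm:main-I}(iv) needs for D-free updates. In that case $\tilde K$ depends on $m$ alone, so $I(M';X\mid M,Y)=0$.
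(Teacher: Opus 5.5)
Your proof is correct and follows the paper's route. The paper also derives $A\perp B_1\mid B_2$ directly from $p(a,b_1,b_2)=p(a)\,p(b_1,b_2)$, and its remark that bath randomness is folded into the kernel's dependence on $(m,d,a)$ is exactly the update-channel Markov property you isolate to justify $p(m'\mid x,y,m,d)=\sum_a p(a)\,K(m'\mid m,d,a)$; your derivation of the two Markov chains from the $(x,y)$-independence of $\tilde K$ spells out what the paper leaves implicit.
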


\begin{proof}
$p(a,b_1,b_2)=p(a)\,p(b_1,b_2)$ implies
$p(a,b_1\mid b_2)=p(a)\,p(b_1\mid b_2)$ directly. Stochastic
kernels are covered because bath randomness is already folded into
the kernel's dependence on $(m,d,a)$
(Definition~\ref{def:update}).
\end{proof}

\subsection{Proof of update data processing
(Lemma~\ref{lem:updateDP})}
\label{app:proof-dp}

By Lemma~\ref{lem:flat-identity} applied to both states
((F5$'$)-stability),
$\Delta\Gap_\tau=\kT[\,I(M';X\mid Y)-I(M;X\mid Y)\,]$.

\emph{Supply cap.} Chain rule:
$I(M';X\mid Y)\le I((M',M);X\mid Y)=I(M;X\mid Y)+I(M';X\mid M,Y)$,
so $\Delta\Gap_\tau/\kT\le I(M';X\mid M,Y)$; the Markov chain
$X\!-\!D\!-\!M'$ given $(M,Y)$ (Lemma~\ref{lem:jointindep}) and
data processing give $I(M';X\mid M,Y)\le I(D;X\mid M,Y)$.

\emph{Acquisition cap.} The same Markov property gives
$I(M';X\mid M,Y)\le I(M';D\mid M,Y)$. Since also
$Y\!-\!(M,D)\!-\!M'$ is Markov, $I(M';Y\mid M,D)=0$, and the two
expansions of $I(M';(D,Y)\mid M)$ give
\begin{equation*}
\begin{aligned}
I(M';D\mid M,Y)
&=I(M';D\mid M)-I(M';Y\mid M)\\
&\le\dpJD .\qquad\blacksquare
\end{aligned}
\end{equation*}

\subsection{Proof of the ledger identity
(Lemma~\ref{lem:ledger-identity})}
\label{app:proof-ledger}

Let the terminal system be $(M',D)$, with $M'$ the totality of
memory-side registers---$M$ together with all absorbed ancillas,
excluding the read-only $D$, the world, and the bath---so that the
disjoint pair $(M',D)$ exhausts the terminal system (the totality
convention of Definition~\ref{def:implementation}(iii)).
From the definition of the implementation entropy production,
$\Sigmatot=[S(M',D)-S(M,D,A)]+\Delta S_{\mathrm{bath}}$. Initially
$A$ is jointly independent of $(M,D)$, so
$S(M,D,A)=S(M)+S(D)-I(M;D)+S(A)$; terminally
$S(M',D)=S(M')+S(D')-I(M';D)$ with $S(D')=S(D)$ ($D$ read-only).
Substituting, $\Sigmatot=\sigmaM-\Delta I(M;D)$, which is
Eq.~(\ref{eq:ledger-universal}); the four-term form is the chain
rule $I((M,M');D)=I(M;D)+I(M';D\mid M)=I(M';D)+I(M;D\mid M')$.
$\blacksquare$

\subsection{Proof of the per-task decomposition
(Lemma~\ref{lem:decomp})}
\label{app:proof-decomp}

Write $X:=X_\tau$, $Y:=Y_\tau$. Three chain-rule identities
superpose:
(1)~the two expansions of $I((M',M);X\mid Y)$:
$I(M';X\mid Y)-I(M;X\mid Y)=I(M';X\mid M,Y)-I(M;X\mid Y,M')
=I(M';X\mid M,Y)-g_a$;
(2)~the two expansions of $I(M';(X,D)\mid M,Y)$ with the Markov
property $X\!-\!D\!-\!M'$ given $(M,Y)$ (hence
$I(M';X\mid D,M,Y)=0$):
$I(M';X\mid M,Y)=I(M';D\mid M,Y)-I(M';D\mid X,M,Y)
=I(M';D\mid M,Y)-g_b$;
(3)~the two expansions of $I(M';(D,Y)\mid M)$ with the Markov
property $Y\!-\!(M,D)\!-\!M'$ (hence $I(M';Y\mid M,D)=0$):
$I(M';D\mid M,Y)=I(M';D\mid M)-I(M';Y\mid M)=\dpJD-g_c$.
Substituting into
$\Delta\Gap_\tau=\kT[\,I(M';X\mid Y)-I(M;X\mid Y)\,]$
(Lemma~\ref{lem:flat-identity} at both states) assembles
Eq.~(\ref{eq:decomp}). $\blacksquare$

(Lemma~\ref{lem:updateDP} is the truncation of this equality:
dropping all subjects gives the acquisition cap; the two Markov
properties are what turn the inequality chain into an equality
with nonnegative residues.)

\subsection{Proof of Main Theorem~\ref{thm:main-II}, parts
(iii)--(v)}
\label{app:proof-mainII}

\emph{(iii).} By the $T$-averaged acquisition cap
(Lemma~\ref{lem:updateDP}), $\Delta\V\le\kT\,\dpJD$; by the
(f)-form of Lemma~\ref{lem:ledger-identity},
$\dpJD=\sigmaM-\Sigmatot\le\sigmaM$. Compose. $\blacksquare$

\emph{(iv).} $T$-average Lemma~\ref{lem:decomp}: $\dpJD$ is
$\tau$-independent by draw exogeneity
(Definition~\ref{def:exogeneity} and
Lemma~\ref{lem:exo-invariance}: $\tau\perp(M,D,\text{ancillas})$
makes the joint law of $(M,M',D)$ common to all $\tau$), so
$\mathbb{E}_\tau[\dpJD]=\dpJD$ and, substituting the (f)-form,
\begin{equation*}
\begin{aligned}
\frac{\Delta\V}{\kT}
&=\dpJD-\mathbb{E}_\tau[g_a+g_b+g_c]\\
&=\sigmaM-\Sigmatot-\mathbb{E}_\tau[g_a+g_b+g_c] .
\end{aligned}
\end{equation*}
Thus $\etacap=1\iff\mathbb{E}_\tau[g_a+g_b+g_c]+\Sigmatot=0$; all
terms are nonnegative, so the sum vanishes iff each term does,
i.e., iff (a)(b)(c) hold at every $\tau$ with $T(\tau)>0$ and
(e) holds. Without (f), substituting the four-term form instead
gives $\etacap=1\iff\mathbb{E}_\tau[g_a+g_b+g_c]+\Sigmatot
=I(M;D\mid M')$. The provenance of the four nonnegativities is
asymmetric, and so is the failure mode: the three subjects are
conditional mutual informations, nonnegative unconditionally,
while $\Sigmatot\ge0$ alone rests on the physical clauses of
Definition~\ref{def:implementation}(iv) (fresh equilibrated bath,
local detailed balance, no feedback); in implementation classes
with pre-correlated baths only the forward implication breaks.
$\blacksquare$

\emph{(v).} (1)~The universal identity telescopes:
$\Sigma_{\mathrm{search}}=\sum_k\sigmaM^{(k)}
=[I(M_n;D)-I(M_0;D)]+\sum_k\Sigmatot^{(k)}\ge I(M_n;D)$, using
$I(M_0;D)=0$ (blank start) and $\Sigmatot^{(k)}\ge0$.
(2)~$M_n$ is a measurable function of $(M_0,D,\text{all
ancillas})$ with $M_0$ deterministic and the ancillas jointly
independent of $(W,M,D)$ (the per-update freshness composes: each
ancilla is jointly independent of the entire history at its
introduction, so the pooled randomness is independent of $W$ given
$D$); hence $M_n\perp W_\tau\mid D$ for every $\tau$, and
$I(M_n;X_\tau\mid Y_\tau)\le I(M_n;W_\tau)\le I(M_n;(D,W_\tau))
=I(M_n;D)+I(M_n;W_\tau\mid D)=I(M_n;D)$.
(3)~$\V(M_0)=0$ and
$\V(M_n)\le\kT\,\mathbb{E}_\tau[I(M_n;X_\tau\mid Y_\tau)]$: since
$M_0$ is blank, the $\flatstar$ property of the support reduces at
the task level to the $y$-clause, so every supported task is a
flat task and the state-uniform bound of Lemma~\ref{lem:flattask}
applies to $M_n$ \emph{whether or not} (F5$'$) survived the
sequence (only the converse half is used). Composing (1)--(3)
gives Eq.~(\ref{eq:cumulative}). $\blacksquare$

\subsection{Proof of the flat-task bound and the task optimum}
\label{app:proof-flattask}

\emph{Lemma~\ref{lem:flattask}.} The informed converse of
Appendix~\ref{app:proof-flat} uses no (F5$'$) and gives, for every
$M$, informed $\le\kT[\ln\lvert X\rvert-H(X\mid Y,M)]$. Blind
achievability is carried by the $y$-clause: the blind conditional
distributions $p_\tau(x\mid y)$ are task properties independent of
$\pi_M$, all point masses or uniform, so the two-stage protocol of
Appendix~\ref{app:proof-flat} attains
$\kT[\ln\lvert X\rvert-H(X\mid Y)]$ at every budget, pathwise
nonnegative. Subtract. Essentiality of the $y$-clause: without it
the blind side cannot attain its ceiling at $b=0$, while an
informed state with point-mass branches can, so
$\Gap>\kT\,H(X\mid Y)$ becomes possible. $\blacksquare$

\emph{Proposition~\ref{prop:taskopt}.} ($\le$)
Lemma~\ref{lem:flattask} applies to every $M$ in the supremum, and
$I(M;X\mid Y)\le H(X\mid Y)$. ($\ge$) Take $M^{*}$ the complete
copy of all $X_\tau$ (in a shared world, of the world register
list). Its conditional distributions are point masses, so (F5$'$)
holds automatically and Lemma~\ref{lem:flat-identity} applies as
an equality, giving
$\V(M^{*})=\kT\,\mathbb{E}_\tau[H(X_\tau\mid Y_\tau)]$.
Admissibility of $M^{*}$: under (i), the off-task components are
specified independent; under (ii), the common $p_\tau(w)$ makes
the $M^{*}$-marginal $\tau$-independent; in both cases draw
exogeneity (Definition~\ref{def:memory}(b)) holds, with $D$ taken
as a trivial register. (In a shared world with $\tau$-dependent
$p_\tau(w)$, the copy construction violates exogeneity and the
closed form is not claimed.) $\blacksquare$

\subsection{Proof of the two-layer correspondence
(Proposition~\ref{prop:corresp})}
\label{app:proof-corresp}

\emph{(A).} By (F5$'$)-stability, the extraction identity
(Lemma~\ref{lem:flat-identity}, via
Corollary~\ref{cor:v-identity}) applies to each candidate state:
$\V(M'_f;\delta_\tau,b)=\kT\,I(M'_f;X_\tau\mid Y_\tau)$,
independently of $b$. By ($\beta$), $D$ and $X_\tau$ have
identical content, so
$I(M'_f;X_\tau\mid Y_\tau)=I(M'_f;D\mid Y_\tau)$. No further step
is taken: in particular, no conditioning is removed. $\blacksquare$

\emph{(B).} $M'_f=f(M,D,A)$ with $A$ jointly independent of
$(W,M,D)$ (Definition~\ref{def:update}). From $(M,D)\perp Y_\tau$
and the joint independence of $A$, the tuple $(M,D,A)$ is
independent of $Y_\tau$; $(M'_f,D)$ is a function of that tuple,
hence $(M'_f,D)\perp Y_\tau$, and
$I(M'_f;D\mid Y_\tau)=I(M'_f;D)$. Substitute into (A). Under a
blank start, $(M,D)\perp Y_\tau$ reduces to $D\perp Y_\tau$ and
$\JD(M)=0$, so $\Delta\V_f=\kT\,\JD(M'_f)=\kT\,\dpJD(f)$.
(The failed route recorded for contrast: from $D\perp Y_\tau$
alone---e.g.\ from $X_\tau\perp Y_\tau$ and ($\beta$)---the
identity $I(M'_f;D\mid Y_\tau)=I(M'_f;D)$ does \emph{not} follow
when $M$ carries $Y$-correlation, since conditioning can create
mutual information; Proposition~\ref{prop:corresp-boundary} makes
this failure explicit, with numerical regression in the retention
suite.) $\blacksquare$

\subsection{Proof of the subject-difference identity
(Corollary~\ref{cor:subject-diff}) and
Lemma~\ref{lem:sharedworld}}
\label{app:proof-subjectdiff}

\emph{Corollary~\ref{cor:subject-diff}.} Under hypothesis (i),
Lemma~\ref{lem:decomp} holds at every $\tau\in\mathcal{T}$:
$\Delta\Gap_\tau=\kT[\dpJD-g_a(\tau)-g_b(\tau)-g_c(\tau)]$.
Average against $T_{\mathrm{train}}$ and $T_{\mathrm{shift}}$ and
subtract: by hypothesis (iii) the joint law of $(M,M',D)$ is
common to all $\tau\in\mathcal{T}$, so $\dpJD$ is $\tau$-%
independent and cancels in the difference, leaving
Eq.~(\ref{eq:subject-diff}). If (iii) fails---a drawing mechanism
that reads the training closure---$\dpJD$ becomes $\tau$-dependent
and the identity acquires the residual term
$\kT(\dpJD^{\mathrm{train}}-\dpJD^{\mathrm{shift}})$; the
exclusion of realization-correlated shifts
(Definition~\ref{def:shift}) is what removes it. $\blacksquare$

\emph{Lemma~\ref{lem:sharedworld}.} On a shared world,
$(X_\tau,Y_\tau)$ is a partition of the same register set $W$ for
each $\tau$, so conditioning on $(X_\tau,Y_\tau)$ jointly is
conditioning on $W$:
$g_b(\tau)=I(M';D\mid X_\tau,Y_\tau,M)=I(M';D\mid W,M)$,
$\tau$-independent. $\blacksquare$

\section{S/B/A/R witness details}
\label{app:sbar}

\subsection{The uninformative-port lemma}
\label{app:proof-noport}

\begin{lemma}[Uninformative port]
\label{lem:noport}
If $M$ is jointly independent of \emph{all} variables of a task
$\tau$ (world registers, gate keys, cartridge states), then
$\Gap(M;\tau,b)=0$ for every $b$.
\end{lemma}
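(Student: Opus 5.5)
The plan is to show that, under full joint independence of $M$ from everything in $\tau$, the informed branch collapses onto the blind branch for every realization $m$. Nonnegativity of the gap then closes the argument. Main Theorem~\ref{thm:main-I}(i)(a) already gives $\Gap(M;\tau,b)\ge0$, because the blind class embeds into the informed class. So it suffices to prove the reverse inequality: for every $m$ with $p(m)>0$,
\[
\sup_{P\in\mathrm{Prot}_b(A_\tau+M,p)}\mathbb{E}[W_{\mathrm{ext}}\mid M=m,\tau]\le\sup_{P\in\mathrm{Prot}_b(A_\tau-M,p)}\mathbb{E}[W_{\mathrm{ext}}\mid\tau].
\]
Averaging over $m\sim p(M)$ then gives $\Gap\le0$.

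The key step is a substitution argument. Fix $m$ and an informed protocol $P$. Because $M$ is frozen during runs (Definition~\ref{def:memory}(ii)+(iv)), every mechanism of $P$ that reads $M$ sees the constant value $m$ on the conditional law. I would therefore define a blind protocol $P_m$ by hard-wiring $m$ into each mechanism: replace each parent-dependent function $f(m,\dots)$ by $f_m(\cdot):=f(m,\cdot)$ and delete the $M$ parent. This is the same relabeling move as in the proof of (i)(d), now with a constant map instead of a bijection.

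I then need to check that $P_m$ is admissible in $\mathrm{Prot}_b(A_\tau-M,p)$. Mechanism locality and parent-set containment survive, since we only remove a parent. The access rule $R$ sees the same record histories, because records never depend on anything but the stage outputs. The pathwise drawdown cap and the $D_{\max}$ pricing are per-trajectory and carry over. The U-restriction under convention ($\alpha$) was imposed conditionally on $M=m$, and that is exactly the unconditional condition for $P_m$ on the law $p(\cdot\mid m)$.

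The central identity is that, by joint independence, $p_\tau(w,\text{keys},\text{cartridges}\mid m)=p_\tau(w,\text{keys},\text{cartridges})$. The trajectory law of $P$ given $M=m$ is therefore identical to the trajectory law of $P_m$ under the unconditioned task distribution. Hence $\mathbb{E}[W_{\mathrm{ext}}\mid M=m,\tau]$ under $P$ equals $\mathbb{E}[W_{\mathrm{ext}}\mid\tau]$ under $P_m$, which is at most the blind supremum. This holds for every $b$, because the budget cap transfers trajectory by trajectory.

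I expect the main obstacle to be the class-membership check, and in particular two points in it. The first is the U-restriction: its conditional-independence conditions must transfer once $M$ is removed from the quantified registers. The second is making sure that fixed gates and auxiliaries introduce no residual dependence on $m$. For the gates, clause (iii) of Definition~\ref{def:memory} says no gate reads $M$, and the independence hypothesis explicitly covers gate keys and cartridge states. For auxiliaries, Remark~\ref{rem:blind} says they are adjoined fresh in product state, and any branch-dependent preparation $\rho_m$ becomes a fixed preparation once $m$ is hard-wired. For the U conditions I would argue as in (i)(d): they are vanishing conditional mutual informations of joint laws, and on the slice $M=m$ the informed conditional law equals the blind law of $P_m$, so each condition holds for one exactly when it holds for the other.
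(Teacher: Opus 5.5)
Your proposal is correct and follows the paper's own argument. Under joint independence, the $M=m$ conditional task law equals the unconditional one, and constant substitution of $m$ turns each $m$-branch informed protocol into a blind protocol with an identical objective, so the mixture over $m$ yields no convexity gain. The paper states this as a single one-to-one correspondence, whereas you split it into substitution for $\le$ and the embedding of Main Theorem~\ref{thm:main-I}(i)(a) for $\ge$. Your split is the slightly more careful reading, because the substitution map $P\mapsto P_m$ is onto the blind class but not injective.
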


\begin{proof}
Under $M\perp(\text{all task variables})$, conditioning on $M=m$
leaves the task-side joint law unchanged; each $m$-branch protocol
of the informed side corresponds one-to-one, by constant
substitution, to an element of the blind class with the identical
objective functional, so
$\mathbb{E}_m[\sup]=\sup$ at the same value---no convexity gain
from the mixture arises. The lemma does not use $\flatstar$ and
applies to gate tasks.
\end{proof}

\subsection{Witness S: device SH and the proof of Main
Theorem~\ref{thm:main-III}(ii)}
\label{app:witness-s}

\emph{Device SH.} Shared world $W=(X_A,X_B)$, one uniform bit
each, independent, $p$ common to both tasks.
$\mathcal{T}=\{\tau_A,\tau_B\}$:
$\tau_A=(W,\text{uniform},E\equiv0,G=\emptyset,
\mathrm{Man}=\{X_A\}\cup\text{anc},K=3,\text{static all-read})$,
$\flatstar$ with $Y_{\tau_A}=\{X_B\}$; $\tau_B$ symmetric.
$D=(D_A,D_B)$ perfect copies; $M_0$ blank; capacity one bit;
$F=\{f_A,f_B\}$ with $f_i$ the controlled-copy of $D_i$. All
reached states satisfy (F5$'$) (machine-checked;
Appendix~\ref{app:verification}).

\emph{Proof.} By Corollary~\ref{cor:v-identity} ((F5$'$) holds)
and $T$-affinity, all values are arithmetic:
$\Delta\V(f_A;T_{\mathrm{train}},b)=(1-\epsilon)\kT\ln2$,
$\Delta\V(f_B;T_{\mathrm{train}},b)=\epsilon\,\kT\ln2$, and the
rankings reverse under
$T_{\mathrm{shift}}=(\epsilon,1-\epsilon)$; the argmax sets are
singletons and disjoint. $\rhogen(M_A)
=\epsilon\,\kT\ln2/[(1-\epsilon)\kT\ln2]=\epsilon/(1-\epsilon)$.
For the (G1) equality:
$\lVert T_{\mathrm{train}}-T_{\mathrm{shift}}\rVert_{\mathrm{TV}}
=1-2\epsilon$ and $\sup_\tau\Gap(M_A;\tau,b)=\kT\ln2$, whose
product equals $\Lgen(M_A)=(1-2\epsilon)\kT\ln2$. Both candidates
acquire $\dpJD=\ln2$ (tie), and the world correlation is likewise
tied: the representative training-side measures do not separate
them. $\blacksquare$

\subsection{Witness B: device BG and the proof of Main
Theorem~\ref{thm:main-III}(iii)}
\label{app:witness-b}

\emph{Device BG (Type I).} Two independent parts. Flat part: $X$,
$n$ uniform bits as a single manipulable register ($K=3$, static
all-read). Gate part: an imported gate of the cartridge type
(Appendix~\ref{app:fixedI}), Type~I realization, with key $B$ of
$m<n$ uniform bits, prize $F_{\mathrm{cart}}$, horizon
$K_{\mathrm{hor}}$. $\mathcal{T}=\{\tau_{\mathrm{flat}},
\tau_{\mathrm{gate}}\}$ ($\tau_{\mathrm{gate}}$ is not
$\flatstar$), $T=(\tfrac12,\tfrac12)$; $D=(D_X,D_B)$ perfect
copies; capacity $n$ bits; $F=\{f_X,f_B\}$ the two full copies.
Parameter conditions (with $C_G$ the uniform bookkeeping constant
of Sec.~\ref{sec:regime}; whether a numerical quadruple satisfies
them depends on the unproven value of $C_G$---Main
Theorem~\ref{thm:main-III}(iii)): $n\ln2>2m\ln2+2\,C_G$ and
$\beta F_{\mathrm{cart}}(1-K_{\mathrm{hor}}2^{-m})-2\,C_G
>n\ln2$.

\emph{Proof.} (1)~is copy arithmetic. (2)~$\V(M_X;T,0)$: flat
part by Lemma~\ref{lem:flat-identity} (point-mass state, (F5$'$)
holds), $\Gap=n\,\kT\ln2$; gate part by Lemma~\ref{lem:noport}
($M_X\perp(B,\text{cartridge})$), $\Gap=0$; total
$\tfrac12 n\,\kT\ln2$. $\V(M_B;T,0)$: flat part zero
(Lemma~\ref{lem:noport}); gate part: informed
$\ge F_{\mathrm{cart}}-C_G\,\kT$ at $b=0$ (statement B1,
Appendix~\ref{app:fixedI}; the Type~I conditional-extraction
channel can only strengthen this lower bound, so it is
conservative), blind $\le
F_{\mathrm{cart}}K_{\mathrm{hor}}2^{-m}+0+C_G\,\kT$ (the imported
ledger corollary~\cite{Sudo_MPU}); subtract. The stated parameter
condition makes $\V(M_B;T,0)>\V(M_X;T,0)$ strict.
(3)~Hypothesis-free sandwich for the gate-part gap at
$\beta b\ge m\ln2+C_G$: lower end $m\,\kT\ln2$ by the exact
per-realization cancellation (the saturation direction of the
cartridge tightness result, free of auxiliary hypotheses); upper
end by the unconditional chain-bound ceiling informed
$\le F_{\mathrm{cart}}+\kT\,TC(t_0)+C_G\,\kT
=F_{\mathrm{cart}}+m\,\kT\ln2+C_G\,\kT$ against the blind
achievability blind $\ge F_{\mathrm{cart}}-m\,\kT\ln2$ (purchase
the key state by erasure at cost $m\,\kT\ln2\le b$, then swap);
hence $\mathrm{Gap}_{\mathrm{gate}}\in[m\,\kT\ln2,\
2m\,\kT\ln2+2\,C_G\,\kT]$, and $n\ln2>2m\ln2+2\,C_G$ keeps
$\V(M_B)<\V(M_X)$. (The BA-conditional exact equality
$\mathrm{Gap}_{\mathrm{gate}}=m\,\kT\ln2$ is reference only; the
hypothesis is open in general and constructively satisfied on
cartridges.) (4)~is (2) versus (3); only the endpoints are
established. $\blacksquare$

\subsection{Witness A: device AC$'$, proof, and the two
variants}
\label{app:witness-a}

\emph{Device AC$'$ (shadow design).} A single task:
$X=(X_1[\text{2 bits}],X_2[\text{1 bit}])$ one manipulable
register (joint permutations available, $K=3$); $Y=$ a readable,
non-manipulable duplicate of the \emph{content} of $X_1$---so
$X$ and $Y$ are correlated, $I(X;Y)=2\ln2>0$. $D=$ perfect copy of
$X$ (clause ($\beta$) held). $T=\delta_\tau$, $b$ arbitrary;
$M_0$ blank; capacity two bits;
$F=\{f_1,f_2\}$ copying the $X_1$- resp.\ $X_2$-part of $D$. All
states (F5$'$) (machine-checked; no noise is needed).

\emph{Proof.} Corollary~\ref{cor:v-identity} arithmetic:
$\Delta\V(f_1)=\kT\,I(X_1;X\mid Y)=\kT\,I(X_1;X\mid X_1)=0$ and
$\Delta\V(f_2)=\kT\,I(X_2;X\mid Y)=\kT\,I(X_2;X_2)=\kT\ln2$,
against $\dpJD(f_1)=2\ln2>\ln2=\dpJD(f_2)$. For the restoration:
on $\tau'$ with $\mathrm{Man}=\{Z\}$, $Z=(X,Y)$ a single register
and $Y_{\tau'}=\emptyset$, the same arithmetic gives
$\V'(f_1)=2\kT\ln2>\kT\ln2=\V'(f_2)$. $\blacksquare$

\emph{Variants.} (i)~\emph{Environment-record variant}:
$D=(\text{copy of }X,\text{copy of }Y)$ with $Y$ independent
readable $n$ bits, $F=\{f_X,f_Y\}$: fit prefers $f_Y$ ($n\ln2$ vs
$\ln2$), value prefers $f_X$ ($\kT\ln2$ vs $0$)---a compound of
record excess ($D\supsetneq$ medium record) and the access axis,
and the direct argmax form of the $g_c$ subject.
(ii)~\emph{Noisy record}: $D_X=$ a binary-symmetric-channel
copy of $X$ with flip probability $\epsilon>0$. Copying it
\emph{violates} (F5$'$) (the conditional atoms
$(1-\epsilon,\epsilon)$ are neither point masses nor uniform), so
Corollary~\ref{cor:v-identity} does not apply; what remains
provable is the all-state upper bound
$\Delta\V(f_X)\le\kT(\ln2-h(\epsilon))$ for every $b$
(Lemma~\ref{lem:flattask}), while attainment for
$b\ge b^{*}(\epsilon)=\kT\ln\bigl(1/(2\epsilon)\bigr)$ is a
conjecture (the same open slot as the general-dyadic budget floor,
Remark~\ref{rem:flat-scope}(v)). \emph{The reversal claim for the
noisy variant is therefore excluded from Main
Theorem~\ref{thm:main-III}}, and only the upper bound and the
conjecture are recorded here.

\subsection{Witness R: device RJ and proof}
\label{app:witness-r}

\emph{Device RJ.} A single $\flatstar$ task: $X$ one uniform bit
(manipulable), $Y=\emptyset$, $K=3$; $J$: $n\ge2$ uniform bits
independent of $X$ (junk---$n\ge2$ makes the fit preference
strict). $D=(\text{copy of }X,\,J)$---clause
($\beta$) broken: the record exceeds the faithful observation of
the medium. $T=\delta_\tau$, $b$ arbitrary; capacity $n$ bits;
$F=\{f_X,f_J\}$. All states (F5$'$).

\emph{Proof.} $\dpJD(f_J)=n\ln2>\ln2=\dpJD(f_X)$;
$\Delta\V(f_X)=\kT\ln2$ (Corollary~\ref{cor:v-identity}) and
$\Delta\V(f_J)=0$ (Lemma~\ref{lem:noport}: the $f_J$-state is
jointly independent of $X$). Replacing every component of $D$ by
$X$-content restores agreement: the two argmax sets coincide (both
candidates then tie at $\ln2$ on fit and on value---weak alignment
is restored; no strict preference exists after the
restoration). $\blacksquare$

\subsection{Proof of the gate threshold transition
(Theorem~\ref{thm:gatethreshold})}
\label{app:proof-threshold}

\emph{(1)~Training side.} Lower end: the informed branch reads
$M=s(\Theta)$ and drives the known value to the target by a
zero-work conditional permutation (it draws no external work, so
it runs at $b=0$; statement B1 of Appendix~\ref{app:fixedI},
whose Type~II reading is unchanged), giving informed
$\ge F_{\mathrm{cart}}-C_G\,\kT$; the blind branch is capped by
the
imported ledger corollary at
$F_{\mathrm{cart}}c_{\mathrm{full}}+b+C_G\,\kT$; subtract. Upper
end: supply boundedness---prize plus budget plus bookkeeping. Here
the Type~II declaration is load-bearing: conditioned on $M=m$ the
medium $B$ remains uniform ($B\perp M$ at $t_0$), so no
conditional nonequilibrium resource exists to extract beyond the
prize; in a Type~I realization this conditional channel exists
and would break the upper end.

\emph{(2)~Shift side, upper bound.} The per-realization informed
branch after the shift is the $M=m$ conditional law. The
conditional prior of the redrawn key is exact: persisting bits
match, the remainder is uniform---the largest atom is
$w_0=2^{-k_{\mathrm{res}}}$ (uniform posterior: $\Theta$ uniform
and the persisting set chosen independently of $\Theta$). Apply
statement B3 of Appendix~\ref{app:fixedI} with view $v=m$: part
(a) gives
$P(\mathrm{unlock}\mid m)\le
K_{\mathrm{hor}}2^{-k_{\mathrm{res}}}e^{\beta b}$, and part (b)
($\theta$-wise ledger conversion) gives
$\mathbb{E}[W\mid m]\le F_{\mathrm{cart}}\min(1,\cdot)+b
+C_G\,\kT$;
take $\mathbb{E}_m$ and subtract blind $\ge0$ (the null protocol
is admissible).

\emph{(3)~Shift side, lower bound (achievability).} With the
persisting $c$ bits the informed side narrows the candidate space
to $2^{k_{\mathrm{res}}}$ and presents, over the
$K_{\mathrm{hor}}$ opportunities, one distinct candidate per
opportunity (the same opportunity semantics as the imported bound;
when $K_{\mathrm{hor}}>2^{k_{\mathrm{res}}}$ the list exhausts the
space and the min saturates at $1$). The tactic is non-adaptive---a fixed candidate list, no
flag reads---so failed branches cost exactly zero work and the
bookkeeping (a $C_G$ entry) is a single total, not proportional to
$K_{\mathrm{hor}}$. The success probability is exactly
$\min(1,K_{\mathrm{hor}}2^{-k_{\mathrm{res}}})$ (distinct
candidates, uniform posterior, disjoint pass events); the blind
side is charged at the ledger-corollary cap.

\emph{(4)}~is the arithmetic of the ratio of (1)--(3): on the left
shoulder, $\min(1,K_{\mathrm{hor}}2^{-k_{\mathrm{res}}})=1$ and
full suppression bounds
$\beta F_{\mathrm{cart}}\,c_{\mathrm{full}}\le1$, so parts
(3)/(1) give the displayed closed form; on the right shoulder,
parts (2)/(1) with
$K_{\mathrm{hor}}2^{-k_{\mathrm{res}}}e^{\beta b}
\le 2^{-t}/(\beta F_{\mathrm{cart}})$ give the displayed upper
bound; positivity (ii) keeps every denominator
positive. $\blacksquare$

\emph{The two premises, restated} (they carry the proof and bound
its scope): (i)~\emph{uniform posterior}---exact here since
$\Theta$ is uniform and the persisting set is $\Theta$-%
independent; (ii)~\emph{uniform $d$-fold degeneracy with Type~II
untouchable labels}---licensing the normal-form reduction to
$k_{\mathrm{eff}}$; non-uniform degeneracy is not treated, as in
Ref.~\cite{Sudo_Spectrum}.

\section{Fixed-I statements B1/B2/B3 (self-contained)}
\label{app:fixedI}

This appendix restates, self-containedly, the imported gate
statements of the companion framework~\cite{Sudo_MPU} used in
Secs.~\ref{sec:regime}, \ref{sec:retention},
and~\ref{sec:gate}: the two amplification statements B1/B2 and the
partial-information bound B3. The restatement is complete as a set
of
statements---nothing below relies on any compiled artifact of the
companion---and the premises are kept strictly apart: B1 uses
a prize-independent partial-suppression condition and no
threshold; B2 uses the exact full-suppression threshold; B3 is the
conditional (partial-information) form on the Type~II family.
Proofs
are in the companion.

\emph{Setting (cartridge family, Type I realization).} A key
register $B$ of $n$ uniform bits; a tamper-proof, pass/fail-only
gate that accepts at most $K_{\mathrm{hor}}$ submissions per run,
reveals only pass/fail, and on pass releases a prize---a cartridge
of free energy $F_{\mathrm{cart}}$; a per-run pathwise budget $b$
(Sec.~\ref{sec:setting-budget}); the datum is $D$, a perfect copy
of $B$, so the correlation content is $I(D;B)=n\ln2$, held fixed
throughout while $F_{\mathrm{cart}}$ is free. Write
\begin{equation}
c:=K_{\mathrm{hor}}\,2^{-n}e^{\beta b}
\label{eq:fixedI-c}
\end{equation}
(the suppression parameter of the companion statements, local to
this appendix; unrelated to the transfer dial $c$ of
Sec.~\ref{sec:gate}).
The blind side (every read port on $D$ deleted; re-optimization
from scratch) obeys the imported ledger bound
$\mathbb{E}[W_{\mathrm{ext}}]\le
F_{\mathrm{cart}}\min(1,c)+b+C_G\,\kT$. Every remainder of this
appendix is written with the same constant $C_G$ as the body: a
per-side bookkeeping remainder (the flag-and-residual accounting
of the ledger bound on the blind side, the implementation cost of
the conditional permutation on the informed side, each at most
$C_G\,\kT$), finite and uniform in $F_{\mathrm{cart}}$ and in the
device parameters as part of the imported statements (see the side
note after B1), while a proven numerical upper bound is
not---which is why the body's finite numerical displays are
script-convention illustrations.

\begin{theorem}[B1: fixed-content unboundedness under partial
suppression]
\label{thm:fixedI-B1}
Fix a finite $b$, a horizon $K_{\mathrm{hor}}$, and a key length
$n$ for which
\begin{equation}
c=K_{\mathrm{hor}}\,2^{-n}e^{\beta b}<1
\label{eq:fixedI-partial}
\end{equation}
---\emph{partial suppression}, a condition independent of
$F_{\mathrm{cart}}$. On the cartridge family with $I=n\ln2$ held
fixed and $F_{\mathrm{cart}}$ free: the blind side is capped at
$c\,F_{\mathrm{cart}}+b+C_G\,\kT$, while the informed side reads $D$
in each realization and drives the known value to the target by a
zero-work conditional permutation, so that
$\mathbb{E}[W_{\mathrm{ext}}]\ge F_{\mathrm{cart}}-C_G\,\kT$
\emph{already at $b=0$}---$C_G<\infty$ the per-side bookkeeping
remainder of the setting above, uniform in $F_{\mathrm{cart}}$.
Hence
\begin{equation}
\begin{aligned}
\Gap&\ge (1-c)\,F_{\mathrm{cart}}-b-2\,C_G\,\kT,\\
\Gap&\xrightarrow[F_{\mathrm{cart}}\to\infty]{}\infty\\[-2pt]
&\qquad\text{at fixed }I=n\ln2 ,
\end{aligned}
\label{eq:fixedI-growth}
\end{equation}
and consequently the value of data admits no upper bound by any
function of its mutual information alone: for every
$f:[0,\infty)\to\mathbb{R}$ there is a task in this family, with
$I=n\ln2$ unchanged, on which
$\Gap/(\kT\ln2)>f\bigl(I(D;B)\bigr)$ (the correlation content of
the setting above, held fixed at $n\ln2$).
\end{theorem}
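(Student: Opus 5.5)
The plan is to assemble Theorem~\ref{thm:fixedI-B1} from two one-sided bounds on the branches of Definition~\ref{def:gap} at $T=\delta_\tau$. The informed branch gets a lower bound by construction. The blind branch gets an upper bound from the imported ledger cap. Subtracting the two gives Eq.~\eqref{eq:fixedI-growth}. I will treat the blind cap $F_{\mathrm{cart}}\min(1,c)+b+C_G\,\kT$ as given, since it is the imported ledger corollary of Ref.~\cite{Sudo_MPU}; the proof only has to use it correctly. Under partial suppression \eqref{eq:fixedI-partial} we have $\min(1,c)=c$, so the blind branch is at most $c\,F_{\mathrm{cart}}+b+C_G\,\kT$. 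Because $c$ depends only on $(K_{\mathrm{hor}},n,b)$, this slope in $F_{\mathrm{cart}}$ is strictly below one and does not move as $F_{\mathrm{cart}}$ varies.

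The informed lower bound is an explicit protocol, constructed per realization $m$ of the datum (here $D$, playing the role of $M$).
\begin{enumerate}
\item Read $D=d$ through the added port and apply the conditional permutation that maps the medium's current value to the pass state indexed by $d$. In the Type~I setting the medium is the key register $B=d$ itself, so this is a relabeling that submits the correct key.
\item Submit once; this succeeds with probability one and releases the cartridge.
\end{enumerate}
The permutation is deterministic on a degenerate landscape, so its only cost is the implementation remainder, at most $C_G\,\kT$ by the setting's convention. No external work is drawn on any trajectory, so $E_k(\omega)\le b$ holds pathwise even at $b=0$. Auxiliaries net to a nonpositive contribution by Lemma~\ref{lem:auxnetting}, and the terminal-balance convention ($\beta$) then gives $\mathbb{E}[W_{\mathrm{ext}}\mid m]\ge F_{\mathrm{cart}}-C_G\,\kT$ for every $m$. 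Averaging over $m$ keeps the same bound.

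Subtracting the blind cap from the informed bound gives $\Gap\ge(1-c)F_{\mathrm{cart}}-b-2C_G\,\kT$. The key length $n$, hence $I(D;B)=n\ln2$, stays fixed while $F_{\mathrm{cart}}\to\infty$. With $1-c>0$ and $C_G$ uniform in $F_{\mathrm{cart}}$, the right-hand side diverges. For the last claim, take any $f$ and choose $F_{\mathrm{cart}}>\bigl[\kT\ln2\,f(n\ln2)+b+2C_G\kT\bigr]/(1-c)$. The resulting task lies in the family with $I=n\ln2$ unchanged and satisfies $\Gap/(\kT\ln2)>f(I)$.

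The main obstacle is the uniformity of $C_G$ in $F_{\mathrm{cart}}$, together with checking that the informed protocol's bookkeeping cost does not grow with the prize. This is exactly the premise that lets the divergence go through. I would rely on the setting's explicit stipulation that the per-side remainder is finite and uniform in $F_{\mathrm{cart}}$. I would also point out that the informed tactic is a single non-adaptive submission, so its cost cannot depend on the prize. The blind cap itself is taken from the companion paper and not reproved.
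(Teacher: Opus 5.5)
Your proposal follows the paper's own argument: the imported ledger cap with $F_{\mathrm{cart}}\min(1,c)=c\,F_{\mathrm{cart}}$ under partial suppression, the budget-independent zero-work informed swap giving $F_{\mathrm{cart}}-C_G\,\kT$ already at $b=0$, a subtraction, and the uniformity of $C_G$ in $F_{\mathrm{cart}}$ for both the divergence and the no-function-of-$I$ corollary. One small correction: drop the appeal to Lemma~\ref{lem:auxnetting}. That lemma is a converse-side statement on flat landscapes, and a nonpositive auxiliary contribution cannot support a lower bound in any case; your achieving protocol simply imports no auxiliaries, so there is nothing to net.
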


Two side notes carried with the statement. \emph{The proof uses no
threshold}: it consists of the ledger cap with
$F_{\mathrm{cart}}\min(1,c)=c\,F_{\mathrm{cart}}$ ($c$ independent
of $F_{\mathrm{cart}}$), the budget-independence of the informed
swap, and a subtraction. \emph{The constant $C_G$ is uniform in
$F_{\mathrm{cart}}$}: both bookkeeping contributions---the
flag-and-residual accounting of the ledger bound and the informed
swap's implementation cost---are fixed by the device structure at
given $n$ and do not grow with the prize. B1 does \emph{not}
silence the blind side: under partial suppression alone, free
search wins with probability $K_{\mathrm{hor}}2^{-n}$ at zero work
and collects a constant fraction of the prize; suppressing the
blind side to its budget floor is a stronger, prize-dependent
demand---the second statement.

\begin{theorem}[B2: exact full-suppression threshold]
\label{thm:fixedI-B2}
Fix $b$ and $K_{\mathrm{hor}}$, and take the cartridge family of
Theorem~\ref{thm:fixedI-B1}. Whenever the key length clears the
\emph{full-suppression threshold}
\begin{equation}
n\ln2\ \ge\ \beta b+\ln(\beta F_{\mathrm{cart}})
+\ln K_{\mathrm{hor}}
\iff
\beta\,c\,F_{\mathrm{cart}}\le1 ,
\label{eq:fixedI-threshold}
\end{equation}
the blind side is capped at its budget floor,
\begin{equation*}
\begin{aligned}
\mathbb{E}[W_{\mathrm{ext}}]
&\le F_{\mathrm{cart}}K_{\mathrm{hor}}2^{-n}e^{\beta b}
  +b+C_G\,\kT\\
&\le b+(1+C_G)\,\kT
\end{aligned}
\end{equation*}
(the threshold makes the first term at most $\kT$; $C_G$ as in
B1), and
$\Gap\ge F_{\mathrm{cart}}-b-(1+2\,C_G)\,\kT$.
\end{theorem}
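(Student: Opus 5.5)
The plan is to reuse the architecture of B1 and replace the partial-suppression slope with the threshold inequality. The proof has three steps: a blind-side cap, an informed-side floor, and a subtraction.

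\emph{Blind side.} Start from the imported ledger bound stated in the setting of this appendix,
$\mathbb{E}[W_{\mathrm{ext}}]\le F_{\mathrm{cart}}\min(1,c)+b+C_G\,\kT$.
This bound holds for every blind protocol in $\mathrm{Prot}_b(A-D,p)$, with the read port on $D$ deleted and re-optimization from scratch. First check that the threshold \eqref{eq:fixedI-threshold} is equivalent to $\beta c F_{\mathrm{cart}}\le1$. Exponentiate $n\ln2\ge\beta b+\ln(\beta F_{\mathrm{cart}})+\ln K_{\mathrm{hor}}$ to get $2^{n}\ge e^{\beta b}\beta F_{\mathrm{cart}}K_{\mathrm{hor}}$, and divide by $2^n$ using the definition \eqref{eq:fixedI-c} of $c$. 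Then bound $\min(1,c)\le c$. This gives the first displayed inequality, $F_{\mathrm{cart}}c+b+C_G\kT$. The threshold yields $F_{\mathrm{cart}}c\le\kT$, which gives the second line, $b+(1+C_G)\kT$. This step is pure rearrangement once the ledger cap is granted.

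\emph{Informed side.} Reuse the B1 construction verbatim. In each realization the protocol reads $D$, which is a perfect copy of $B$. It drives the medium to the known pass state by a zero-work conditional permutation and submits once, so the gate releases the cartridge. No external work is drawn, so every trajectory satisfies the pathwise cap $E_k(\omega)\le0\le b$, and the protocol is admissible at every $b\ge0$. Its expected extraction is at least $F_{\mathrm{cart}}-C_G\kT$, where $C_G$ is the same per-side remainder as in B1 and is uniform in $F_{\mathrm{cart}}$. By the per-realization form of the informed branch, the informed supremum is at least this value.

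\emph{Subtraction and the main obstacle.} Subtracting the blind cap from the informed floor gives
$\Gap\ge F_{\mathrm{cart}}-C_G\kT-b-(1+C_G)\kT=F_{\mathrm{cart}}-b-(1+2C_G)\kT$.
The only substantive content is the imported ledger cap on the blind side: an unlock probability of at most $K_{\mathrm{hor}}2^{-n}e^{\beta b}$, together with its conversion to expected work. Like B1, this is taken from the companion and not reproved here. The step most at risk is the claim that the same uniform constant $C_G$ serves both sides, and that it remains uniform as $F_{\mathrm{cart}}$ grows along the threshold. I would address this by citing the side note after B1: the bookkeeping contributions are fixed by the device structure at given $n$. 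I would also note that on this family the threshold ties $n$ to $F_{\mathrm{cart}}$, so uniformity must hold in $n$ as well. That is exactly the uniformity assumed in the setting.
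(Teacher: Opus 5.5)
Your argument is correct and follows the paper's route: the paper imports the blind-side ledger cap from the companion and, as you do, uses the threshold (equivalently $\beta c F_{\mathrm{cart}}\le1$) to bound the prize term by $\kT$, then subtracts the result from the B1 informed floor $F_{\mathrm{cart}}-C_G\kT$ to obtain $\Gap\ge F_{\mathrm{cart}}-b-(1+2C_G)\kT$. Your worry about the uniformity of $C_G$ does no harm but is not needed here: B2 is an inequality at each fixed instance, and the setting already assumes $C_G$ finite and uniform in $F_{\mathrm{cart}}$ and in the device parameters.
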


The threshold (\ref{eq:fixedI-threshold}) is exact in the
slope-bound sense: one nat of key per nat of budget, per log-nat
of prize, per log-count of opportunities---unit exchange rate in
each argument, with no Pinsker slack.

\begin{theorem}[B3: partial-information unlock and ledger bound,
Type II]
\label{thm:fixedI-B3}
Take the gate family in its Type~II realization: the key is a
frozen, energetically decoupled parameter $\Theta$ (labels
untouchable by the agent), the gate passes iff the submitted medium
state matches $s(\Theta)$, with at most $K_{\mathrm{hor}}$
submissions per run, pass/fail disclosure only, and no leakage
beyond the membership bit. Let an agent hold partial information on
the pass state: conditioned on its view $v$ at the start of the
run, the largest atom of its conditional prior over pass states is
$w_0(v)=2^{-H_{\min}(s(\Theta)\mid v)}$. Then:
(a)~(\emph{unlock bound, conditional}) $P(\mathrm{unlock}\mid v)\le
K_{\mathrm{hor}}\,w_0(v)\,e^{\beta b}$;
(b)~(\emph{ledger conversion, $\theta$-wise})
$\mathbb{E}[W_{\mathrm{ext}}\mid v]\le
F_{\mathrm{cart}}\min\bigl(1,K_{\mathrm{hor}}\,w_0(v)\,
e^{\beta b}\bigr)+b+C_G\,\kT$, with the bookkeeping remainder
$C_G$ finite and uniform in
$(w_0,F_{\mathrm{cart}},K_{\mathrm{hor}},\beta b)$ (the
flag-and-residual accounting is fixed by the device structure and
does not grow with the prize). The fully blind case
$w_0=2^{-n}$ recovers the ledger bound of the setting above. In
the companion, the device axiom of the unlock bound takes $w_0$ as
the largest atom of the agent's conditional prior---covering the
partially informed side---and the proof of the ledger corollary
proceeds $\theta$-wise, hence conditionally on the view; the
Type~II gate clause supplies $w_0$ directly in the min-entropy
form above.
\end{theorem}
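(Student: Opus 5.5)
The statement to be proved is Theorem~\ref{thm:fixedI-B3}, the partial-information unlock and ledger bound for the Type~II gate family. I will reduce it to the fully blind ledger bound stated in the setting of Appendix~\ref{app:fixedI} by conditioning on the view $v$. The route is fixed by Type~II: the key $\Theta$ is frozen, energetically decoupled and untouchable. So conditioning on $v$ changes nothing about the dynamics available to the agent. It only replaces the prior over pass states $s(\Theta)$ by the conditional prior $p(\cdot\mid v)$, whose largest atom is $w_0(v)$. I will therefore prove (a) and (b) $\theta$-wise and average over the conditional law of $\Theta$ given $v$, as the statement's proof note indicates.

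\emph{Step 1 (unlock bound, part (a)).} Fix $v$ and a protocol in $\mathrm{Prot}_b$ with the per-realization informed branch. Under convention ($\alpha$) the protocol may depend on $v$ but not on $\Theta$, because $\Theta$ is read only through the gate's pass/fail bit. On a run with no pass before submission $j$, the submitted medium state is a random variable $B_j$. Conditioned on $v$ and on the failure history, $B_j$ is independent of $\Theta$, apart from the eliminations already made by earlier failures. The single-submission pass probability is then at most the largest atom of $p(s(\Theta)\mid v)$, provided the medium state is prepared exactly. Preparing an atom of the medium deeper than its equilibrium weight costs work. The pathwise drawdown cap $E_k\le b$ restricts this by a change-of-measure argument: any trajectory-probability tilt costs at least $\kT\ln$ of the tilt, so the tilt factor is at most $e^{\beta b}$. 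Taking a union bound over the at most $K_{\mathrm{hor}}$ submissions gives
\[
P(\mathrm{unlock}\mid v)\le K_{\mathrm{hor}}\,w_0(v)\,e^{\beta b}.
\]
The elimination of candidates after a failure only renormalizes mass over the remaining candidates. I will handle it in the same telescoping way as in the blind proof, where the bound is stated with $w_0=2^{-n}$. Replacing the uniform prior by $p(\cdot\mid v)$ only changes that largest-atom constant.

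\emph{Step 2 (ledger conversion, part (b)).} I will write the extracted work $\theta$-wise as prize times the unlock indicator, plus a non-prize remainder. The remainder is bounded by $b$ plus the flag-and-residual bookkeeping $C_G\,\kT$, using the conditional chain bound (Lemma~\ref{lem:condchainprime}) and auxiliary netting (Lemma~\ref{lem:auxnetting}). Type~II matters here: given $v$, the medium $B$ is still uniform and independent of $\Theta$, so no conditional nonequilibrium resource beyond the prize exists. Taking expectations given $v$ and combining with Step~1, together with the trivial bound $P(\mathrm{unlock}\mid v)\le1$, yields the stated $\min(1,\cdot)$ form. The constant $C_G$ is the same structural constant as in the blind case, so it is uniform in $(w_0,F_{\mathrm{cart}},K_{\mathrm{hor}},\beta b)$. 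Setting $w_0=2^{-n}$ recovers the blind ledger bound and serves as a consistency check.

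\emph{Main obstacle.} The hard step is the tilt factor $e^{\beta b}$ under adaptive, branch-dependent protocols with refinancing. Banked work from partial extractions could in principle fund deeper preparations on later submissions. I would first try a single global change of measure on the whole trajectory, bounding the Radon--Nikodym factor by $e^{\beta\sup E_k}\le e^{\beta b}$. This should be the same argument the companion uses; because that argument proceeds $\theta$-wise, conditioning on $v$ is inherited verbatim.
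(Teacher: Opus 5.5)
The paper gives no in-text proof of this statement; it imports it. Part (a) is the companion's device axiom, already stated with $w_0$ the largest atom of the agent's conditional prior. Part (b) holds given $v$ because the companion's ledger corollary is proved $\theta$-wise. Type~II only supplies $w_0(v)=2^{-H_{\min}(s(\Theta)\mid v)}$.

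You instead derive (a) from scratch, and your Step~1 misattributes the factor $e^{\beta b}$. Take a protocol that does not read the medium, on a flat landscape, with invested work at most $b$ on every path. A Crooks-type change of measure then bounds how far the manipulated medium can be concentrated relative to its \emph{own} initial law: $P(B_K=t)\le\max_x p_0(x)\,e^{\beta b}$. That yields the product $w_0e^{\beta b}$ only when the agent's prior is the initial law of the medium itself, which is the Type~I reading. In Type~II the key is frozen and the medium starts uniform, so the same argument gives $2^{-n}e^{\beta b}$, not $w_0(v)e^{\beta b}$.

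Conversely, the $w_0$ part needs no tilt at all. The submitted value and the key are conditionally independent, so the match probability is an average of atoms of the key's law. It is therefore at most the largest atom however the medium is prepared. Imperfect preparation can only lower it, so ``provided the medium state is prepared exactly'' points the wrong way.

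The repair is purely combinatorial. With pass/fail-only disclosure, on the all-fail branch the $j$-th submission is a function $b_j(v,r)$ of the view and the private randomness $r$ (fresh ancillas, the uniform initial medium, the bath). In the Type~II reading, $r$ is independent of $\Theta$ given $v$. Unlock is the event $s(\Theta)\in\{b_1,\dots,b_{K_{\mathrm{hor}}}\}$, hence
$P(\mathrm{unlock}\mid v)\le\sum_j\sum_x P(b_j(v,r)=x\mid v)\,P(s(\Theta)=x\mid v)\le K_{\mathrm{hor}}w_0(v)\le K_{\mathrm{hor}}w_0(v)e^{\beta b}$.
This replaces your union bound over history-conditioned posteriors, whose largest atoms grow after eliminations and can exceed $w_0(v)$. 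It also makes your refinancing obstacle moot, since the budget factor is slack here.

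This bound is necessarily an average over $\Theta$ given $v$, not a $\theta$-wise statement as your opening sentence proposes. For fixed $\theta$, a candidate list containing $s(\theta)$ unlocks with probability one. Only the conversion step in (b) is $\theta$-wise. There your structure matches the paper's: prize times unlock indicator plus remainder, averaged given $v$, combined with (a) and the trivial bound $1$. However, the remainder bound $b+C_G\,\kT$, with $C_G$ uniform in $(w_0,F_{\mathrm{cart}},K_{\mathrm{hor}},\beta b)$, does not follow from Lemmas~\ref{lem:condchainprime} and~\ref{lem:auxnetting}. The netting lemma is stated for a flat landscape, which a gate task with a cartridge is not. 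That remainder must be imported from the companion, exactly as the paper does. Once repaired, your route buys a self-contained proof of (a) in the Type~II reading, even without the budget factor, which the paper only imports.
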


\emph{Usage in this paper.} B1 carries the enablement route of the
regime map (the row labeled (B2) in Sec.~\ref{sec:regime}; the
regime-map row labels are unrelated to the B1/B2 statement names of
this appendix) and the $b=0$ reversal
of the budget witness (Appendix~\ref{app:witness-b}); its
partial-suppression premise (\ref{eq:fixedI-partial}) is carried
explicitly wherever the unboundedness is invoked---outside it
($c\ge1$) the blind side, too, can collect the prize. B2, translated
to the Type~II key-redraw family through the effective key length
$k_{\mathrm{eff}}$, is the full-suppression premise of
Theorem~\ref{thm:gatethreshold}. B3 is the load-bearing import of
the shift-side upper bound of Theorem~\ref{thm:gatethreshold}
(part (2) of its proof, Appendix~\ref{app:proof-threshold}), with
$v$ the retained memory $m$ and $w_0=2^{-k_{\mathrm{res}}}$ on the
uniform persisting fiber. All statements are single-fixed-%
device statements; no network or cascade form is used anywhere in
this paper. Through the single-task anchor of Main
Theorem~\ref{thm:main-I}(i)(e), the gaps above are the
$\kT\ln2$-denominated values of the companion framework on this
family.

\section{Notation table}
\label{app:notation}

\begin{table*}[t]
\caption{\label{tab:notation}%
Notation contract. Primary terms are fixed here; the body uses these
terms and no synonyms. The word ``fitting'' from an early working
document is rendered throughout as \emph{record-correlation increase}
(the predicate $\Delta I(M;D)>0$); $\Phifit$ always denotes the
training-side fit functional.}
\begin{ruledtabular}
\begin{tabular}{p{4.3cm}p{3.4cm}p{8.2cm}}
Symbol & Primary term & Definition and scope notes \\
\colrule
$\Phifit(f)$ & training-side fit &
Any functional ranking candidate updates from the training side
(training loss, likelihood, training work, value on the training
distribution). \\
$\JD(M)=I(M;D)$ & record-correlation stock &
Memorization; not identical to training-side fit. \\
$\dpJD = I(M';D\mid M)$ & gross acquisition &
Distinguished from the signed increment $\dJD = I(M';D)-I(M;D)$;
$\dJD$ telescopes along update sequences, $\dpJD$ in general does not. \\
\(\begin{gathered}
\V(M;T,b)={}\\[-2pt]
\Winformed(M;T,b)\\[-2pt]
{}-\Wblind(T,b)
\end{gathered}\) & capital value &
Blind deletes the $M$-read port and re-optimizes from scratch under the
same tasks, access structure, and evaluation budget $b$. \\
learning & learning (capitalization) &
An admissible $M$-local update with $\Delta \V>0$. ``Updates that do not
touch data are in general not learning'' is stated as a theorem only in
the fully proved $\flatstar$ scope. \\
$\sigmaM=\Delta I(M;D)+\Sigmatot$ & search ledger &
Memory-side subsystem account; not a pure-dissipation quantity. \\
$\etacap=\Delta \V/(\kT\,\sigmaM)$ & capitalization efficiency &
Defined for $\sigmaM>0$. In Corollary~\ref{cor:orthogonality} its
numerator is evaluated on $T_{\mathrm{train}}$; $\etacap$ carries
no shift argument. \\
\(\begin{gathered}
\Lgen=V_{\mathrm{train}}-V_{\mathrm{shift}},\\[-2pt]
\rhogen=V_{\mathrm{shift}}/V_{\mathrm{train}}
\end{gathered}\) & value retention &
Primary term is \emph{value retention}; ``thermodynamic generalization''
appears only as a related-work distinction. $\Lgen$ carries no sign
constraint; $\rhogen$ is defined only for $V_{\mathrm{train}}>0$ and is
not confined to $[0,1]$. \\
$(M,D)\perp Y_\tau$ & joint side-information neutrality &
The hypothesis of the raw record-stock layer
[Proposition~\ref{prop:corresp}(B)]; strictly stronger in effect
than $X_\tau\perp Y_\tau$
(Proposition~\ref{prop:corresp-boundary}). \\
$C_G$ & gate bookkeeping constant &
The bookkeeping remainder of the imported gate bounds, assumed
finite and uniform in the device parameters; no proven numerical
upper bound is available, so finite numerical gate displays are
script-convention illustrations
(Appendix~\ref{app:verification}). \\
\end{tabular}
\end{ruledtabular}
\end{table*}

\bibliographystyle{apsrev4-2}
\bibliography{refs}

\end{document}